\newtheorem{myDef}{Definition}
\newtheorem{myLe}{Lemma}
\newtheorem{myEx}{Example}
\begin{document}
\title{Evolutionary Clustering of Streaming Trajectories}

\author{Tianyi Li$^{\mathsection}$, Lu Chen$^{\dagger}$, Christian S. Jensen$^{\mathsection}$, Torben Bach Pedersen$^{\mathsection}$, Jilin Hu$^{\mathsection}$}
\affiliation{%
  \institution{
   {\large$^{\mathsection}$}Department of Computer Science, Aalborg University, Denmark\\
  {\large$^{\dagger}$}College of Computer Science, Zhejiang University, Hangzhou, China\\
{\large$^{\mathsection}$}\{tianyi, csj, tbp, hujilin\}@cs.aau.dk ~~~~~ \hspace*{0.5in} ~~~~~~{\large$^{\dagger}$}luchen@zju.edu.cn
  }
}

\begin{abstract}
The widespread deployment of smartphones and location-enabled, networked in-vehicle devices renders it increasingly feasible to collect streaming trajectory data of moving objects. The continuous clustering of such data can enable a variety of real-time services, such as identifying representative paths or common moving trends among objects in real-time. However, little attention has so far been given to the quality of clusters---for example, it is beneficial to smooth short-term fluctuations in clusters to achieve robustness to exceptional data.

We propose the notion of evolutionary clustering of streaming trajectories, abbreviated ECO, that enhances streaming-trajectory clustering quality by means of temporal smoothing that prevents abrupt changes in clusters across successive timestamps. Employing the notions of snapshot and historical trajectory costs, we formalize ECO and then formulate ECO as an optimization problem and prove that ECO can be performed approximately in linear time, thus eliminating the iterative processes employed in previous studies. Further, we propose a minimal-group structure and a seed point shifting strategy to facilitate temporal smoothing. Finally, we present all algorithms underlying ECO along with a set of optimization techniques. Extensive experiments with two real-life datasets offer insight into ECO and show that it outperforms state-of-the-art solutions in terms of both clustering quality and efficiency.

\end{abstract}
\maketitle



\section{Introduction}
\begin{figure}[t]
\begin{center}
\subfigcapskip=-20pt
\includegraphics[width=0.43\textwidth]{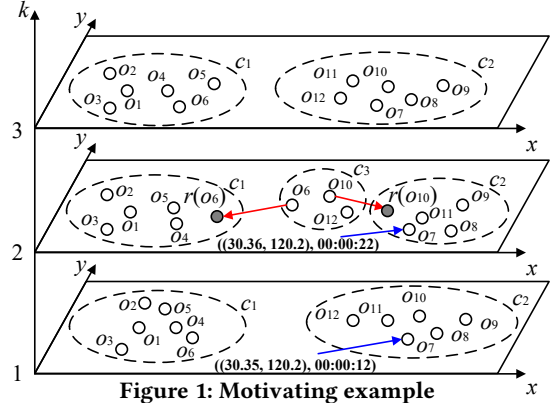}\vspace{-8mm}
\caption{{Motivating example}}
\vspace{-5mm}
\label{fig:motivation}
\end{center}
\end{figure}
It is increasingly possible to equip moving objects with positioning devices that are capable of transmitting object positions to a central location in real time. Examples include people with smartphones and vehicles with built-in navigation devices or tracking devices. This scenario opens new opportunities for the real-time discovery of  hidden mobility patterns. These
patterns allow characterizing individual mobility for a certain time interval and enable a broad range of
important services and applications such as route planning~\cite{zeng2019last,wang2020demand}, intelligent transportation management~\cite{wang2021public}, and road infrastructure optimization~\cite{wu2015glue}.

{As a typical moving pattern discovery approach, clustering aims to group a set of trajectories into comparatively homogeneous clusters to extract  representative paths or movement patterns shared by moving objects. Considering a streaming setting, many works are proposed to cluster the trajectories in real-time~\cite{jensen2007continuous,li2010incremental,yu2013online,costa2014dealing,deng2015scalable,da2016online,chen2019real, tang2012discovery,li2012effective}. However, existing real-time clustering methods focus on the most current data, achieving low computational cost at the expense of clustering quality~\cite{xu2014adaptive}. In streaming settings, clusterings should be robust to short-term fluctuations in the underlying trajectory data, which may be achieved by means of smoothing~\cite{chi2007evolutionary}. 
An example illustrates this.
\begin{myEx}
Figure~\ref{fig:motivation} shows the trajectories of 12 moving objects at three timestamps, $k=1,2,3$.  
Traditional clustering algorithms return the two clusters $c_1=\{o_1, o_2, o_3, o_4, o_5, o_6\}$ and $c_2=\{o_7, o_8, o_9, o_{10}, o_{11}, $\\$o_{12}\}$ at the first timestamp, the three clusters $c_1=\{o_1, o_2, o_3, o_4, o_5\}$, $c_2=\{o_7, o_8, o_9, o_{11}\}$, and $c_3=\{o_6, o_{10}, o_{12}\}$ at the second timestamp, and the same two clusters at the third timestamp as at the first timestamp. 
\label{ex:motivation}
\end{myEx}

The underlying reason  for this result is the unusual behavior of objects $o_6$ and $o_{10}$ at the second timestamp. Clearly, returning the same two stable
clusters for all three timestamps is a more robust and better-quality result. A naive approach to eliminating the effect of the two objects' unusual behavior is to perform cleaning before clustering. However,
studies of on two real-life datasets show that among the trajectories that cause the mutations of clusterings, 
88.9\% and 75.9\% of the trajectories follow the speed constraint, while 97.8\% and 96.1\% of them are categorized as inliers~\cite{ester1996density}.
Moreover, in real-time applications, it is impractical to correct previous clusterings retroactively. Hence, it is difficult for existing cleaning techniques to facilitate smoothly shifted clustering  sequences~\cite{li2020data, patil2018geosclean, idrissov2012trajectory}.

However, this problem can be addressed by applying evolutionary clustering~\cite{kim2009particle, fenn2009dynamic, chen2020collaborative,chakrabarti2006evolutionary,chi2007evolutionary,gupta2011evolutionary,xu2014adaptive,yin2021multi,ma2017evolutionary,liu2020detecting}, where a good current clustering result is one that fits the current data well, while not deviating too much from the recent history of clusterings. 
Specifically, \textit{temporal smoothness} is integrated into the measure of clustering quality~\cite{chi2007evolutionary}. This way, evolutionary clustering is able to outperform traditional clustering as it can reflect long term trends while being robust to short-term variability. Put differently, applying evolutionary clustering to trajectories can mitigate  adverse effects of intermittent noise on clustering and present users with smooth and consistent movement patterns. In Example~\ref{ex:motivation}, clustering with temporal consistency is obtained if $o_6$ is smoothed to $r(o_6)$ and $o_{10}$ is smoothed to $r(o_{10})$ at the second timestamp. Motivated by this, we study evolutionary clustering of trajectories.}

Existing evolutionary clustering studies target dynamic networks and are not suitable for trajectory applications, mainly for three reasons. First, the solutions are  designed specifically for dynamic networks, which differ substantially from two-dimensional trajectory data. Second, the movement in trajectories is generally much faster than the evolution of dynamic networks, which renders the temporal smoothness used in existing studies too "strict" for trajectories. Third, existing studies often optimize the clustering quality iteratively at each timestamp~\cite{kim2009particle,chakrabarti2006evolutionary,yin2021multi,folino2013evolutionary,liu2020detecting,liu2019evolutionary}, which is computationally costly and is infeasible for large-scale trajectories.

We propose an efficient and effective method for \underline{e}volutionary \underline{c}lustering of streaming traject\underline{o}ries (\textbf{ECO}).
First, we adopt the idea of neighbor-based smoothing~\cite{kim2009particle} and develop a structure called \textit{minimal group} that is summarized by a \textit{seed point} in order to facilitate smoothing.
Second, following existing studies~\cite{chakrabarti2006evolutionary,yin2021multi,xu2014adaptive,folino2013evolutionary,liu2020detecting,liu2019evolutionary}, we formulate ECO as an optimization problem that employs the new notions of snapshot cost and historical cost.
The snapshot cost evaluates the true concept shift of clustering defined according to the distances between smoothed and original locations. The historical cost evaluates the temporal distance between locations at adjacent timestamps by the \textit{degree of closeness}.
Next, we prove that the proposed optimization function can be decomposed and that each component can be solved approximately in constant time. The effectiveness of smoothing is further improved by a \textit{seed point shifting} strategy. Finally, we introduce a grid index structure and present algorithms for each component of evolutionary clustering along with a 
set of optimization techniques, to improve clustering performance. The paper's main contributions are summarized as follows,
\begin{itemize}\setlength{\itemsep}{-\itemsep}
    \item We formalize ECO  problem. To the best of our knowledge, this is the first proposal for streaming trajectory clustering that takes into account temporal smoothness.
    \item We formulate ECO as an optimization problem, based on the new notions of snapshot cost and historical cost. We prove that the optimization problem can be solved approximately in linear time.
    \item We propose a \textit{minimal group} structure to facilitate temporal smoothing and a \textit{seed point shifting} strategy to  improve clustering  quality of evolutionary clustering. Moreover, we present all algorithms needed to enable evolutionary clustering, along with a 
    set of optimization techniques.
    \item Extensive experiments on two real-life datasets show that ECO advances the  state-of-the-arts in terms of both clustering quality and efficiency.
\end{itemize}
The rest of paper is organized as follows. We present preliminaries in Section~\ref{sec:preliminary}. We formulate the problem in Section~\ref{sec:ec} and derive its solution in Section~\ref{sec: Quadratic_time_solution}. Section~\ref{sec:algorithm} presents the algorithms and optimization techniques. Section~\ref{sec:experiments} covers the experimental study. Section~\ref{sec:related_work} reviews related work, and Section~\ref{sec:conclusion} concludes and offers directions for future work.

\section{Preliminaries}\label{sec:preliminary}

\begin{table}
\centering\small
\caption{Frequently used notation}\vspace{-4mm} 
\begin{tabular}{|l|l|}
 \bottomrule 
\textbf{Notation}&\textbf{Description}\\ \hline
	$o$ & A trajectory\\\hline
	$\textit{dt}_k$ & The $k^{th}$ time step \\\hline
$o.l_k$, $o.t_k$ & The location and timestamp of $o$ at 	$\textit{dt}_k$\\\hline
$o.l$, $o.t$ & A simplification of $o.l_k$, $o.t_k$ at  $\textit{dt}_k$\\\hline
$o.\tilde{l}$, $o.\tilde{t}$ & A simplification of $o.l_{k-1}$, $o.t_{k-1}$ at $\textit{dt}_k$\\\hline
$\mathcal{O}_k$ & A set of trajectories at 	$\textit{dt}_k$\\\hline
$r(o)$ & An adjustment of $o.l$\\\hline
$\mathcal{R}_k$ & The set of adjustments of $\mathcal{O}_k$\\\hline
$s$ & A seed point of $o$ at the current time step $\textit{dt}_k$  
\\\hline
$\tilde{s}$ & A seed point of $o$ at the previous time step $\textit{dt}_{k-1}$\\\hline
$\mathcal{S}_k$ & The set of seed points at $\textit{dt}_k$  \\\hline
$\mathcal{M}_{k}(s)$ & A minimal group summarized by a seed point $s$ at $\textit{dt}_k$  \\\hline
$\mathcal{SC}_k(r(o))$ & The snapshot cost of a trajectory $o$ w.r.t. $r(o)$ at $\textit{dt}_k$  \\\hline
$\mathcal{TC}_k(r(o))$ & The historical cost of a trajectory $o$ w.r.t. $r(o)$ at $\textit{dt}_k$  \\\hline
$c$ & A cluster $c$\\\hline
$\mathcal{C}_k$ & The set of clusters obtained  at $\textit{dt}_k$ \\
\bottomrule
\end{tabular}\label{tb:paramater}\vspace{-4mm}
\end{table}

\subsection{Data Model}\label{sec:data_model}
\begin{myDef}
A \textbf{GPS record} is a pair $(l,t)$, where $t$ is a \textbf{timestamp} and $l=(x,y)$ is the \textbf{location}, with $x$ being a longitude and $y$ being a latitude.
\end{myDef}

\begin{myDef}
A \textbf{streaming trajectory} $o$ is an unbounded ordered sequence of GPS records, $\langle (o.l_1, o.t_1), (o.l_2, o.t_2) \cdots \rangle$. 
\end{myDef}
The GPS records of a trajectory may be transmitted to a central location in an unsynchronized manner. To avoid this affecting the subsequent processing, we adopt an existing approach~\cite{chen2019real} and discretize time into short intervals that are indexed by integers. We then map the timestamp of each GPS record to the index of the interval that the timestamp belongs to.
In particular, we assume that the start time is 00:00:00 UTC, and we partition time into intervals of duration $\Delta t=10s$. Then time series $\langle$00:00:01, 00:00:12, 00:00:20, 00:00:31, 00:00:44$\rangle$ and $\langle$00:00:00, 00:00:13, 00:00:21, 00:00:31, 00:00:40$\rangle$
are both mapped $\langle 0, 1, 2, 3, 4\rangle$. We call such a sequence a discretized time sequence and call each discretized timestamp a \textbf{\textit{time step}} $\textit{dt}$.
We use trajectory and streaming trajectory interchangeably.

\begin{myDef}
A trajectory is \textbf{active} at time step $dt = [t_1,t_2]$ if it contains a GPS record $(l,t)$ such that $t \in [t_1,t_2]$.
\end{myDef}

\begin{myDef}
A \textbf{snapshot} $\mathcal{O}_k$
is the set of trajectories that are active at time step  $\textit{dt}_k$.
\end{myDef}

Figure~\ref{fig:motivation} shows three snapshots $\mathcal{O}_1$, $\mathcal{O}_2$, and $\mathcal{O}_3$, each of which contains twelve trajectories. Given the start time 00:00:00 and $\Delta t=10$, $(o_7.l,o_7.t)$ arrives at $\textit{dt}_1$ because 00:00:12 is mapped to 1. For simplicity, we use $o$ in figures to denote $o.l$.
The interval duration $\Delta t$ is the default sample interval of the dataset. Since deviations between the default sample interval and the actual intervals are small~\cite{li2020compression}, we can assume that each trajectory $o$ has at most one
GPS record at each time step $\textit{dt}_k$. If this is not the case for a trajectory $o$,  we simply keep $o$'s earliest GPS  at the time step. This simplifies the subsequent clustering. Thus, the GPS record of $o$ at $\textit{dt}_k$ is denoted as $(o.l_{k}, o.t_{k})$. If a trajectory $o$ is active at both $\textit{dt}_{k-1}$ and $\textit{dt}_k$ and the current time step is $\textit{dt}_k$, $o.l_k$ and $o.t_k$ are simplified as $o.l$ and $o.t$, and $o.l_{k-1}$ and $o.t_{k-1}$ are simplified as $o.\tilde{l}$ and $o.\tilde{t}$. At time step $\textit{dt}_2$  ($k=2$) in Figure~\ref{fig:motivation}, $o_7.\tilde{l}=o_7.{l}_1=(30.35, 120.2)$, $o_7.\tilde{t}=o_7.{t}_1=\,\,$00:00:12, $o_7.l=o_7.{l}_2=(30.36, 120.2)$, and $o_7.t=o_7.{t}_2=\,\,$00:00:22.

\begin{myDef}\label{def:local_density}
A $\bm{\theta}$\textbf{-neighbor set} of a streaming trajectory $o\,(\in \mathcal{O}_k)$ at the time step $\textit{dt}_k$ is $\mathcal{N}_{\theta}(o)=\{o'| o'\in \mathcal{O}_k\wedge d(o.l, o'.l)\leq \theta \}$,where $d(\cdot)$ is Euclidean distance and $\theta$ is a distance threshold. $\lvert \mathcal{N}_{\theta}(o) \rvert$ is called the \textbf{local density} of $o$ w.r.t. ${\theta}$ at $\textit{dt}_k$.
\end{myDef}
Figure~\ref{fig:t1} plots $o_i\,(1
\leq 1 \leq 6)$ at $\textit{dt}_1$ from Figure~\ref{fig:motivation}, where 
$\mathcal{N}_{\delta}(o_1)=\{o_1, o_2, o_3\}$.

\begin{figure} \centering
\subfigcapskip=-5pt
 \subfigure[Core points 
 $o_i$\,($1\leq i \leq 6\wedge i\neq 3$)
 ($\textit{minPts}=3)$]{      \includegraphics[scale=.4]{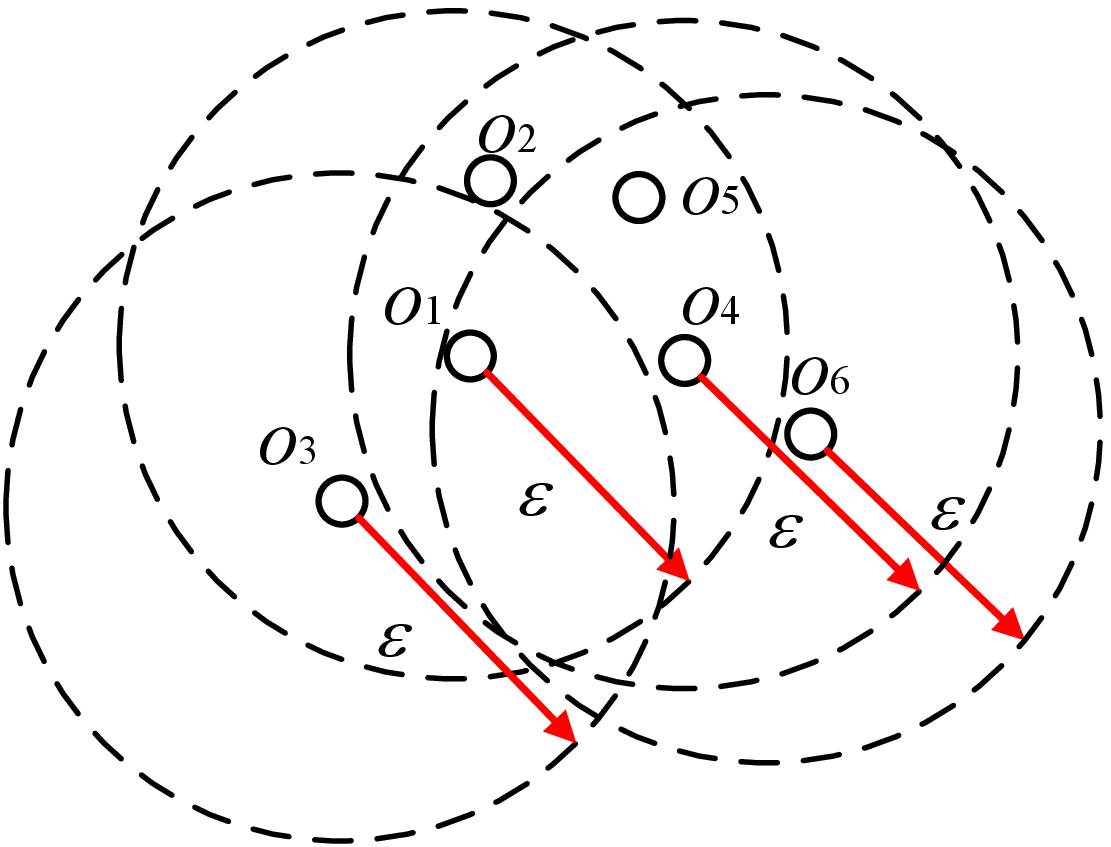}} 
 \subfigure[Seed points $o_1$ and $o_4$\,$(\rho=3)$ ]{
\includegraphics[scale=.5]{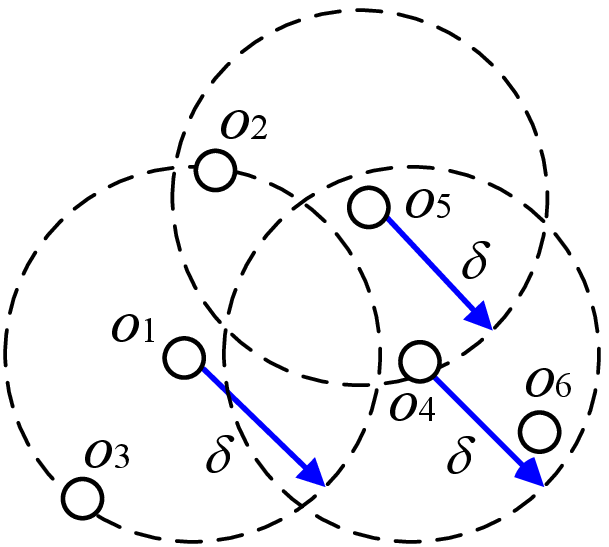}}\vspace{-5mm}
\caption{${o_i\,(1\leq i \leq 6)}$ at ${\textit{dt}_1}$ in Figure~\ref{fig:motivation}}
\label{fig:t1}
\vspace{-3mm}
\end{figure}

\subsection{DBSCAN}\label{sec:DBSCAN}
We adopt a well-known density-based clustering approach, DBSCAN~\cite{ester1996density}, for clustering. 
DBSCAN relies on two parameters to characterize density or sparsity, i.e., positive values $\varepsilon$ and \textit{minPts}.

\begin{myDef}\label{def:core_point}
A trajectory $o\in \mathcal{O}_k$ is a \textbf{core point} w.r.t. $\varepsilon$ and \textit{minPts}, if $\mathcal{N}_{\varepsilon}(o) \geq \textit{minPts}$.
\end{myDef}

\begin{myDef}\label{def:density_reachable}
A trajectory $o\in \mathcal{O}_k$ is \textbf{density reachable} from another trajectory $o'\in \mathcal{O}_k$, if a sequence of trajectories $o_{1}, o_{2}, \cdots o_{n}~(n\geq 2)$ exists such that (i) $o_1=o'$ and $o_n=o$; (ii) $o_{w}~(1 \leq w <n)$ are core points; and (iii) $d(o_{w},o_{w+1} )\leq \varepsilon~(1 \leq w < n)$. 
\end{myDef}

\begin{myDef}\label{def:connect}
A trajectory $o\in \mathcal{O}_k$ is \textbf{connected} to another trajectory $o'$ if a trajectory $o''$ exists such that both $o$ and $o'$ are density reachable from $o''$.
\end{myDef}

\begin{myDef}\label{def:cluster}
A non-empty subset of trajectories of $\mathcal{O}_k$ is called a \textbf{cluster} $c$, if $c$ satisfies the following conditions:
\begin{itemize}
\item Connectivity: $\forall o, o' \in c$,  $o$ is connected to $o'$;
\item Maximality: $\forall o, o' \in \mathcal{O}_k$, if $o\in c$ and $o'$ is density reachable from $o$, then $o'\in c$. 
\end{itemize}
\end{myDef}
Definition \ref{def:cluster} indicates that a cluster is formed by a set of core points and their density reachable points. {Given $\varepsilon$ and \textit{minPts}, $o\in \mathcal{O}_k$ is an \textbf{\textit{outlier}}, if it is not in any cluster; $o\in \mathcal{O}_k$ is a \textbf{\textit{border point}}, if $\mathcal{N}_{\varepsilon}(o) < \textit{minPts}$ and $d(o, o')\leq \epsilon$, where $o'$ is a core point.}
 
\begin{myDef}
A \textbf{clustering result} $\mathcal{C}_k=\{c_1, c_2, \cdots, c_n\}$ is a set of clusters obtained from the snapshot $\mathcal{O}_k$.
\end{myDef}

\begin{myEx}\label{ex:cluster}
In Figure~\ref{fig:motivation}, $\mathcal{C}_1$ has two clusters $c_1=\{o_1, o_2, o_3, o_4, o_5, $\\$o_6\}$ and $c_2=\{o_7, o_8, o_9, o_{10}, o_{11}, o_{12}\}$. {Further, $o_i$\,($1\leq i \leq 6\wedge i\neq 3$) in Figure~\ref{fig:t1}a are core points.}
\end{myEx}

\subsection{Evolutionary Clustering}\label{sec:evolutionary_clustering}
Evolutionary clustering is the problem producing a sequence of clusterings from streaming data; that is, clustering for each snapshot. It takes into account the smoothness characteristics of streaming data to obtain high-quality clusterings~\cite{chakrabarti2006evolutionary}. Specifically, two quality aspects are considered:
\begin{itemize}
    \item High historical quality: clustering  $\mathcal{C}_k$ should be similar to the previous clustering $\mathcal{C}_{k-1}$;
    \item High snapshot quality: $\mathcal{C}_k$ should reflect the true concept shift of clustering, i.e., remain faithful to the data at each time step.
\end{itemize}
Evolutionary clustering uses a cost function $\mathcal{F}_k$ that enables trade-offs between historical quality and snapshot quality at each time step $\textit{dt}_k$~\cite{chakrabarti2006evolutionary},
\begin{equation}\label{f:cost}
\mathcal{F}_k=\mathcal{SC}_k(\mathcal{C}_o, \mathcal{C}_k)+\alpha \cdot \mathcal{TC}_k(\mathcal{C}_{k-1}, \mathcal{C}_k)
\end{equation}
$\mathcal{F}_k$ is the sum of two terms: a snapshot cost ($\mathcal{SC}_k$) and a historical cost ($\mathcal{TC}_k$). The snapshot cost $\mathcal{SC}_k$ captures the similarity between clustering $\mathcal{C}_k$ and clustering $\mathcal{C}_o$ that is obtained without smoothing. The smaller $\mathcal{SC}_k$ is, the better the snapshot quality is. The historical cost $\mathcal{TC}_k$ measures how similar clustering $\mathcal{C}_k$ and the previous clustering $\mathcal{C}_{k-1}$ are. The smaller $\mathcal{TC}_k$ is, the better the historical quality is. Parameter  $\alpha\,(>0)$ enables controlling the trades-off between snapshot quality and historical quality.

\section{Problem Statement}\label{sec:ec}
We start by presenting two observations, based on which, we define the problem of evolutionary clustering of streaming trajectories.

\subsection{Observations}\label{sec:observation}
\paragraph{\textbf{Gradual evolutions of travel companions}} As pointed out in a previous study~\cite{tang2012discovery}, movement trajectories represent continuous and gradual location changes, rather than abrupt changes, implying that co-movements among trajectories also change only gradually over time.
Co-movement may be caused by (i) physical constraints of both road networks and vehicles, and  vehicles may have close relationships, e.g., they may belong to the same fleet or may target the same general destination~\cite{tang2012discovery}.

\paragraph{\textbf{Uncertainty of "border" points}} 
{Even with the observation that movements captured by trajectories are not dramatic during a short time, border points are relatively more likely to leave their current cluster at the next time step than core points. This is validated by statistics from two real-life datasets. Specifically, among the trajectories shifting to another cluster or becoming an outlier during the next time steps, 75.0\% and 61.5\% are border points in the two real-life datasets.}

\subsection{Problem Definition}\label{sec:problem_definition}

\paragraph{\textbf{Cost embedding}}
Existing evolutionary clustering studies generally perform temporal smoothing on the clustering result~\cite{folino2013evolutionary,chakrabarti2006evolutionary,chi2007evolutionary,yin2021multi}. Specifically, they adjust $\mathcal{C}_k$ iteratively so as to minimize Formula~\ref{f:cost}, which incurs very high cost. We adopt cost embedding~\cite{kim2009particle}, which pushes down the cost formula from the clustering result level to the data level, thus enabling flexible and efficient temporal smoothing. 
However, the existing cost embedding technique~\cite{kim2009particle} targets dynamic networks only.
To apply cost embedding to trajectories, we propose a minimal group structure and snapshot and historical cost functions.

\paragraph{\textbf{Snapshot cost} $\bm{\mathcal{SC}_k}$}
We first define the notion of an "adjustment" of a trajectory.
\begin{myDef}\label{def:repair}
An \textbf{adjustment} $r_k(o)$ is a
 location of a trajectory $o$ obtained through smoothing at $\textit{dt}_k$. Here, $r_k(o) \neq r_k(o')$ if $o\neq o'$.
The set of adjustments in $\mathcal{O}_k$ is denoted as $\mathcal{R}_k$.
\end{myDef}
We simplify $r_k(o)$ to $r(o)$ if the context is clear.
In Figure~\ref{fig:motivation}, $r(o_6)$ is an adjustment of $o_6$ at $\textit{dt}_2$. According to Formula~\ref{f:cost}, the snapshot cost measures how similar the current clustering result $\mathcal{C}_k$ is to the original clustering result $\mathcal{C}_o$. Since we adopt cost embedding that smooths trajectories at the data level, the snapshot cost of a trajectory $o$ w.r.t. its adjustment $r(o)$ at $\textit{dt}_k$ (denoted as $\mathcal{SC}_k(r(o))$) is formulated as the deviation between $o$ and $r(o)$ at $\textit{dt}_k$:  
\begin{equation}\label{f:SC0}
    \mathcal{SC}_k(r(o))=d(r(o), o.l)^2 \quad  s.t. \quad d(r(o), o.\tilde{l})\leq \mu \cdot (o.t-o.\tilde{t}),
\end{equation}
where $\mu$ is a speed constraint of the road network. Formula~\ref{f:SC0} requires that any adjustment $r(o)$ must follow the speed constraint. 
Obviously, the larger the distance between $o.l$ and its adjustment $r(o)$, the higher the snapshot cost.

\paragraph{\textbf{Historical cost} $\bm{\mathcal{TC}_k}$} 
As discussed in Section~\ref{sec:evolutionary_clustering}, one of the goals of evolutionary clustering is smoothing the change of clustering results during adjacent time steps.
Since we push down the smoothing from the cluster level to trajectory level, the problem becomes one of ensuring that each trajectory represents a smooth movement. According to the first observation in Section~\ref{sec:observation}, gradual location changes lead to stable co-movement relationships among trajectories during short periods of time. Thus, similar to neighbor-based smoothing in dynamic communities~\cite{kim2009particle}, it is reasonable to smooth the location of each trajectory in the current time step using its neighbours at the previous time step. However, the previous study~\cite{kim2009particle} smooths the distance between each pair of neighboring nodes. Simply applying this to trajectories may degrade the performance of smoothing if a "border" point is involved. Recall the second observation of Section~\ref{sec:observation} and assume that $o_1.l$ is smoothed according to $o_3.l$ at $\textit{dt}_2$ in Figures~\ref{fig:motivation} and~\ref{fig:t1} . As $o_3$ is a border point at $\textit{dt}_1$ with a higher probability to leave the cluster $c_1$ at $\textit{dt}_2$, using $o_3$ to smooth $o_1$ may result in $o_1$ also leaving $c_1$ or being located at the border of $c_1$ at $\textit{dt}_2$. The first case may incur an abrupt change to the clustering while the second case may degrade the intra-density of $c_1\,(\in \mathcal{C}_2)$ and increase the inter-density of clusters in $\mathcal{C}_2$.
To tackle this problem, 
we model neighboring trajectories as minimal groups summarized by seed points.

\begin{myDef}\label{def:seed_point}
A \textbf{seed point} $s\,(s\in \mathcal{S}_k)$ summarizes a \textbf{minimal group} $\mathcal{M}_k(s) = \{o \in  \mathcal{O}_k |\, d(o, s)\leq \delta\wedge \forall s'\in \mathcal{S}_k \, (s'\neq s\Rightarrow d(o,s)\leq d(o, s'))\}$ at $\textit{dt}_k$, where $\delta \,(0 < \delta \leq \varepsilon)$ is a given parameter, and $\mathcal{S}_k\,(\mathcal{S}_k \subset \mathcal{O}_k)$ is a seed point set at $\textit{dt}_k$. The cardinality of $\mathcal{M}_k(s)$, $\lvert \mathcal{M}_k(s) \rvert$, exceeds a parameter $\rho$.
Any trajectory o in $\mathcal{M}_k(s)$ that is different from $s$ is called a \textbf{non-seed point}. Note that, $\mathcal{M}_k(s)\cap \mathcal{M}_k(s')=\emptyset$ if $s\neq s'$.
\end{myDef}

Given the current time step $\textit{dt}_k$, we use $s$ to denote the seed point of $o$ at $\textit{dt}_k$ (i.e., $o\in\mathcal{M}_k(s)$), while use $\tilde{s}$ to denote that at $\textit{dt}_{k-1}$ (i.e., $o\in\mathcal{M}_{k-1}(\tilde{s})$).

\begin{figure}[t]
\begin{center}
\subfigcapskip=-20pt
\includegraphics[width=0.4\textwidth]{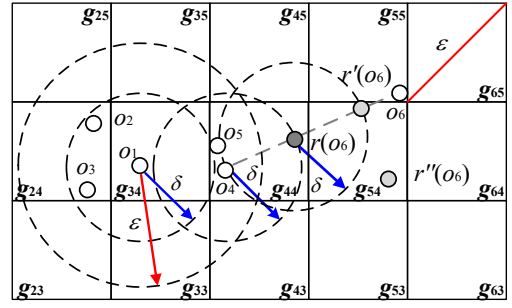}\vspace{-3mm}
\caption{$o_i\,(1\leq i \leq 6)$ at $\textit{dt}_2$ in Figure~\ref{fig:motivation}\,($\rho=3$)}
\vspace{-7mm}
\label{fig:t2}
\end{center}
\end{figure}

\begin{myEx}\label{ex:minimal group}
In Figure~\ref{fig:t1}b, there are two minimal groups, i.e., $\mathcal{M}_1({o_1})$\\$=\{o_1, o_2,  o_3\}$ and $\mathcal{M}_1({o_4})=\{o_4, o_5,  o_6\}$. In Figure~\ref{fig:t2}, there is only one minimal group before smoothing, i.e., $\mathcal{M}_2({o_1})=\{o_1, o_2,  o_3\}$. 
Further, given the current $k=2$, both $s$ and $\tilde{s}$ of $o_2$ is $o_1$ and $\tilde{s}.\tilde{l}=o_1.l_1$.

\end{myEx}
We propose to use the location of a seed point $s$ to smooth the location of a non-seed point $o\,(o\in \mathcal{M}_{k-1}(\tilde{s}))$ at $\textit{dt}_k$.
In order to guarantee the effectiveness of smoothing, 
Definition~\ref{def:seed_point} gives two constraints when generating minimal groups: (i) $d(o,s)<\delta\,(\delta\leq\varepsilon)$ and (ii) $\lvert \mathcal{M}_k(s) \rvert \geq \rho$. 
Setting $\delta$ to a small value, the first constraint ensures that $o\in \mathcal{M}_k(s)$ are close neighbors at $\textit{dt}_k$. Specifically, we require $\delta \leq \varepsilon$ because this makes it very likely that trajectories in the same minimal group are in the same cluster.
The second constraint avoids small neighbor sets $\mathcal{N}_{\delta}(s)$. Specifically, using an "uncertain border" point as a "pivot" to smooth the movement of other trajectories may lead to an abrupt change between clusterings or a low-quality clustering (according to the quality metrics of traditional clustering).
We present the algorithm for generating minimal groups in Section~\ref{sec:forming_micro_groups}.

Based on the above analysis, we formalize the historical cost of $o$ w.r.t. its adjustment $r(o)$ at $\textit{dt}_k$, denoted as $\mathcal{TC}_k(r(o))$, as follows.
\begin{equation}\label{f:TC0}
\begin{aligned}
&
\mathcal{TC}_k(r(o))=\left(\lceil\frac{d(r(o), \tilde{s}.l)}{\delta}\rceil-1 \right)^2\\
 &s.t.\, 
 d(r(o), o.\tilde{l})\leq \mu \cdot (o.t-o.\tilde{t}),
\end{aligned}
\end{equation}
where $o\in \mathcal{M}_{k-1}(\tilde{s})\backslash \{\tilde{s}\}$.
Given the threshold $\delta$, the larger the distance between $r(o)$ and $\tilde{s}.l$, the higher the historical cost. 
Here, we use the degree of closeness (i.e.,  $\lceil \frac{d(r(o), \tilde{s}.l)}{\delta}\rceil-1 $) instead of $d(r(o),\tilde{s}.l)$ to evaluate the historical cost, due to two reasons.
First, constraining the exact relative distance between any two trajectories during a time interval may be too restrictive, as it varies over time in most cases. Second, using the degree of closeness to constrain the historical cost is sufficient to obtain a smooth evolution of clusterings.

\paragraph{\textbf{Total cost $\bm{\mathcal{F}_k}$}}
Formulas~\ref{f:SC0} and~\ref{f:TC0} give the snapshot cost and historical cost for each trajectory $o$ w.r.t. its adjustment $r(o)$, respectively. However, the first measures the distance while the latter evaluates the degree of proximity. Thus, we normalize them to a specific range $[0,1)$:
\begin{equation}\label{f:sc}
    \mathcal{SC}_k(r(o))=\left(\frac{d(r(o), o.l)}{4\mu\cdot \Delta t+\delta}\right)^2  s.t. \,  d(r(o), o.\tilde{l})\leq \mu \cdot (o.t-o.\tilde{t})
\end{equation}
\begin{equation}\label{f:tc}
\begin{aligned}
      &\mathcal{TC}_k(r(o))= \left(\frac{\lceil\frac{d(r(o), \tilde{s}.l)}{\delta}\rceil-1}{\frac{4\mu\cdot \Delta t +\delta}{\delta}}\right)^2\\
      &s.t.\, 
 d(r(o), o.\tilde{l})\leq \mu \cdot (o.t-o.\tilde{t}),
\end{aligned}
\end{equation}
where $o\in \mathcal{M}_{k-1}(\tilde{s})\backslash \{\tilde{s}\}$ and $\Delta t$ is the duration of a time step.
Clearly, $\mathcal{SC}_k(r(o))\geq 0$ and $\mathcal{TC}_k(r(o))\geq 0$. Thus, we only need to prove  $\mathcal{SC}_k(r(o))<1$ and $\mathcal{TC}_k(r(o))<1$.

\begin{myLe}\label{le:speed1}
If $d(o.l, o.\tilde{l})\leq (o.t-o.\tilde{t}) \cdot \mu$ then
${d(r(o), o.l)} \leq 4\mu\cdot 
\Delta t$.
\end{myLe}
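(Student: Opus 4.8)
The plan is to derive the bound from a single use of the triangle inequality, together with an elementary estimate of how much wall-clock time can elapse between the two GPS records $o.\tilde l$ and $o.l$ of $o$ (the lemma implicitly operates in the setting where $o$ is active at both $\textit{dt}_{k-1}$ and $\textit{dt}_k$, so that $o.\tilde l$ and $o.\tilde t$ are defined).

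First I would recall the feasibility constraint that every adjustment must satisfy, i.e.\ the side condition of Formula~\ref{f:SC0} (equivalently Formula~\ref{f:sc}): $d(r(o), o.\tilde l)\le \mu\cdot(o.t-o.\tilde t)$. Combining this with the hypothesis $d(o.l, o.\tilde l)\le \mu\cdot(o.t-o.\tilde t)$ and the triangle inequality applied at the vertex $o.\tilde l$ gives
\[
d(r(o), o.l)\;\le\; d(r(o), o.\tilde l) + d(o.\tilde l, o.l)\;\le\; 2\mu\cdot(o.t-o.\tilde t).
\]

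Next I would bound the elapsed time $o.t-o.\tilde t$. By definition $o.\tilde t$ is the timestamp of $o$ at step $\textit{dt}_{k-1}$ and $o.t$ is its timestamp at step $\textit{dt}_k$; since the time steps are the consecutive length-$\Delta t$ intervals produced by the discretization of Section~\ref{sec:data_model}, the record at $\textit{dt}_{k-1}$ lies in the interval immediately preceding the one containing the record at $\textit{dt}_k$, so $o.t-o.\tilde t\le 2\Delta t$. Substituting into the previous inequality yields $d(r(o), o.l)\le 4\mu\cdot\Delta t$, which is the claim.

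The only step that needs genuine care is the time estimate: one must use that $\tilde{\cdot}$ refers to the \emph{immediately} preceding time step rather than an arbitrary earlier one, so that the two intervals are adjacent and their endpoints are separated by at most $2\Delta t$; otherwise the constant $4$ could not be justified. Everything else is routine — just the triangle inequality and the speed constraint built into the definition of an adjustment.
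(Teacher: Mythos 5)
Your proposal is correct and follows essentially the same route as the paper's own proof: the triangle inequality at $o.\tilde{l}$, the speed constraint on the adjustment $d(r(o), o.\tilde{l})\leq \mu\cdot(o.t-o.\tilde{t})$, the hypothesis on $d(o.l,o.\tilde{l})$, and the discretization bound $o.t-o.\tilde{t}\leq 2\Delta t$. The only cosmetic difference is that you combine the two $\mu\cdot(o.t-o.\tilde{t})$ terms before applying the time bound, whereas the paper applies the $2\Delta t$ bound to each term separately; your explicit justification of $o.t-o.\tilde{t}\leq 2\Delta t$ via adjacency of the two time-step intervals is a welcome addition the paper only gestures at.
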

\begin{proof}
According to our strategy of mapping original timestamps (Section~\ref{sec:data_model}), $o.t-o.\tilde{t}\leq 2\Delta t$. Considering the speed constraint of the road network, $d(r(o),o.\tilde{l})\leq \mu \cdot(o.t-o.\tilde{t})$. Further, 
$d(r(o), o.l)\leq d(o.l, o.\tilde{l})+d(r(o), o.\tilde{l})$ due to the triangle inequality. Thus, we get $d(r(o), o.l) \leq  d(o.l, o.\tilde{l})+2\mu \cdot \Delta t$. Since $d(o.l, o.\tilde{l})\leq (o.t-o.\tilde{t}) \cdot \mu$, ${d(r(o), o.l)} \leq 4\mu\cdot 
\Delta t$.
\end{proof}
It follows from Lemma~\ref{le:speed1} that $\mathcal{SC}_k(r(o)) < 1$ if $d(o.l, o.\tilde{l})\leq (o.t-o.\tilde{t}) \cdot \mu$. 
However, $d(o.l, o.\tilde{l})\leq (o.t-o.\tilde{t}) \cdot \mu$ does not necessarily hold. 
To address this problem, we pre-process $o.l$ according to $o.\tilde{l}$ so that it follows the speed constraint before conducting evolutionary clustering. The details are given in Section~\ref{sec:speed_adjust}.
\begin{myLe}\label{le:normalize_tc}
If $d(\tilde{s}.l, \tilde{s}.\tilde{l})\leq  (\tilde{s}.t-\tilde{s}.\tilde{t})\cdot \mu$ then
$d(r(o), \tilde{s}.l)\leq 4\mu\cdot \Delta t+\delta$.
\end{myLe}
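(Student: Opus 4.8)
The plan is to reduce everything to the triangle inequality, just as in the proof of Lemma~\ref{le:speed1}, but with one extra hop that routes through the minimal-group membership of $o$ at $\textit{dt}_{k-1}$. Concretely, since $o\in\mathcal{M}_{k-1}(\tilde{s})\backslash\{\tilde{s}\}$, I would first insert the two intermediate points $o.\tilde{l}$ and $\tilde{s}.\tilde{l}$ and write
\begin{equation*}
d(r(o),\tilde{s}.l)\le d(r(o),o.\tilde{l})+d(o.\tilde{l},\tilde{s}.\tilde{l})+d(\tilde{s}.\tilde{l},\tilde{s}.l).
\end{equation*}
It then suffices to bound the three summands by $2\mu\cdot\Delta t$, $\delta$, and $2\mu\cdot\Delta t$ respectively.

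For the first term, the constraint accompanying Formula~\ref{f:tc} gives $d(r(o),o.\tilde{l})\le\mu\cdot(o.t-o.\tilde{t})$, and by the timestamp-mapping strategy of Section~\ref{sec:data_model} we have $o.t-o.\tilde{t}\le 2\Delta t$, so this term is at most $2\mu\cdot\Delta t$. For the third term, the hypothesis of the lemma gives $d(\tilde{s}.\tilde{l},\tilde{s}.l)=d(\tilde{s}.l,\tilde{s}.\tilde{l})\le(\tilde{s}.t-\tilde{s}.\tilde{t})\cdot\mu$, and again $\tilde{s}.t-\tilde{s}.\tilde{t}\le 2\Delta t$ by the same mapping strategy, yielding at most $2\mu\cdot\Delta t$. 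For the middle term, I would invoke Definition~\ref{def:seed_point} at time step $\textit{dt}_{k-1}$: membership $o\in\mathcal{M}_{k-1}(\tilde{s})$ means precisely that the Euclidean distance between the $\textit{dt}_{k-1}$ locations of $o$ and $\tilde{s}$ is at most $\delta$, i.e.\ $d(o.\tilde{l},\tilde{s}.\tilde{l})\le\delta$. Adding the three bounds gives $d(r(o),\tilde{s}.l)\le 2\mu\cdot\Delta t+\delta+2\mu\cdot\Delta t=4\mu\cdot\Delta t+\delta$, as claimed.

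There is no real analytic difficulty here; the only thing to be careful about is the bookkeeping of the $\tilde{\cdot}$ notation, namely that $o.\tilde{l},\tilde{s}.\tilde{l}$ denote the $\textit{dt}_{k-1}$ positions while $\tilde{s}.l$ denotes the $\textit{dt}_k$ position, so that the minimal-group bound is applied at the earlier step (where it holds by definition) and the speed constraint is applied over the one-step gap (where it holds by hypothesis and by the $\le 2\Delta t$ bound on that gap). This lemma then plays the same role for $\mathcal{TC}_k$ that Lemma~\ref{le:speed1} plays for $\mathcal{SC}_k$: it shows the numerator $\lceil d(r(o),\tilde{s}.l)/\delta\rceil-1$ in Formula~\ref{f:tc} is at most $(4\mu\cdot\Delta t+\delta)/\delta-1<(4\mu\cdot\Delta t+\delta)/\delta$, so that $\mathcal{TC}_k(r(o))<1$ whenever the stated speed precondition on $\tilde{s}$ holds, which is guaranteed by the pre-processing of Section~\ref{sec:speed_adjust}.
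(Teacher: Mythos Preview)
Your proof is correct and follows essentially the same route as the paper: the same three-hop triangle inequality through $o.\tilde{l}$ and $\tilde{s}.\tilde{l}$, with the three summands bounded by $2\mu\Delta t$, $\delta$, and $2\mu\Delta t$ via the speed constraint on $r(o)$, Definition~\ref{def:seed_point} at $\textit{dt}_{k-1}$, and the lemma's hypothesis, respectively. The only cosmetic difference is that the paper cites Lemma~\ref{le:speed1} for the first bound rather than re-deriving $o.t-o.\tilde{t}\le 2\Delta t$ directly.
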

\begin{proof}
We have
$d(r(o), o.\tilde{l})\leq 2\mu\cdot \Delta t$ according to Lemma~\ref{le:speed1}. Further,
$d(o.\tilde{l}, \tilde{s}.\tilde{l})\leq \delta$ according to Definition~\ref{def:seed_point}. Since $d(r(o), \tilde{s}.l)\leq d(r(o), o.\tilde{l})+ d(o.\tilde{l}, \tilde{s}.l)\leq d(r(o), o.\tilde{l})+ d(o.\tilde{l}, \tilde{s}.\tilde{l})+ d(\tilde{s}.\tilde{l}, \tilde{s}.l)$, we get  $d(r(o), \tilde{s}.l)\leq 4\mu\cdot \Delta t+\delta$.
\end{proof}

According to Lemma~\ref{le:normalize_tc}, we can derive $\lceil\frac{4\mu\cdot \Delta t+\delta}{\delta}\rceil-1 < \frac{4\mu\cdot \Delta t+\delta}{\delta}$ and thus $\mathcal{TC}_k(r(o))<1$.
Letting $4\mu\cdot \Delta t +\delta=\pi$, 
the total cost $\mathcal{F}_k$ is:
\begin{equation}\label{f:F}
\begin{aligned}
    \mathcal{F}_k= &\sum_{o,\tilde{s}\in \Theta_k\wedge o\neq \tilde{s}}\frac{1}{\pi^2}\left( d(r(o),o.l)^2+\alpha\cdot\left(\delta\cdot \left(\lceil\frac{d(r(o), \tilde{s}.l)}{\delta}\rceil-1\right)\right)^2
    \right)\\
    &s.t.\, \forall o\in\Theta_k \, (d(r(o), o.\tilde{l})\leq \mu \cdot (o.t-o.\tilde{t})),
\end{aligned}
\end{equation}
where ${\Theta}_k=\mathcal{O}_k\cap (\bigcup_{\tilde{s}\in \mathcal{S}_{k-1}
}\mathcal{M}_{k-1}(\tilde{s}))$. 
Formula~\ref{f:F} indicates that we do not smooth the location of $o$ at $\textit{dt}_k$ if $o$ is not summarized in any minimal group at $\textit{dt}_{k-1}$. This is in accordance with the basic idea that we conduct smoothing by exploring the  neighboring trajectories.
We can now formulate our problem.
\begin{myDef}\label{def:eco}
Given a snapshot $\mathcal{O}_k$, a set of previous minimal groups $\bigcup_{\tilde{s}\in \mathcal{S}_{k-1}
}\mathcal{M}_{k-1}(\tilde{s})$, a time duration $\Delta t$, a speed constraint $\mu$, and parameters $\alpha$, $\delta$, $\varepsilon$, $\textit{minPts}$ and $\rho$, \textbf{\underline{e}volutionary \underline{c}lustering of streaming traject\underline{o}ries} (\textbf{ECO}) is to 
\begin{itemize}
\item find a set of adjustments $\mathcal{R}_{{k}_{opt}}$, such that $\mathcal{R}_{k_{opt}}=\arg \min_{\mathcal{R}_k} \mathcal{F}_k$;
\item compute a set of clusters $\mathcal{C}_k$ over $\mathcal{R}_{k_{opt}}$.
\end{itemize}
\end{myDef}
Specifically, each adjustment of $o.l \in \mathcal{R}_{k_{opt}}$ is denoted as $r_{opt}(o)$ and is then used as the previous location of $o$ (i.e. $o.\tilde{l}$) at $\textit{dt}_{k+1}$ for evolutionary clustering.

\begin{myEx}\label{ex:evolutionary_clustering}
Following Example~\ref{ex:minimal group},
ECO first finds a set of adjustments $\mathcal{R}_{{2}_{opt}}=\{r_{opt}(o_i)| 1 \leq i \leq 12\}$ at $\textit{dt}_2$. Then, it performs clustering over $\mathcal{R}_{{2}_{opt}}$ and gets $\mathcal{C}_2=\{c_1, c_2\}$, where $c_1=\{o_i|1\leq i \leq 6 \}$ and $c_2=\{o_i|7\leq i \leq 12 \}$.
Note that we only show $r_{opt}(o_6)(\,=r(o_6))$ and $r_{opt}(o_{10})(\,=r(o_{10}))$ in Figures~\ref{fig:motivation} and~\ref{fig:t2} because $r_{opt}(o_i)=o_i.l\,(1 \leq i \leq 12 \wedge i\neq 6\wedge i\neq 10)$ at $\textit{dt}_2$. 
\end{myEx}
Clearly, the objective function in Formula~\ref{f:F} is neither continuous nor differentiable. Thus, computing the optimal adjustments using existing solvers involves iterative processes~\cite{song2015turn} that are too expensive for online scenarios. We thus prove that Formula~\ref{f:F} can be  solved approximately in linear time in Section~\ref{sec: Quadratic_time_solution}.

\section{Computation of Adjustments}\label{sec: Quadratic_time_solution}
Given the current time step $\textit{dt}_k$,
we start by decomposing $\mathcal{F}_k$ at the unit of minimal groups as follows,
\begin{equation}\label{f:F_decompose}
\begin{aligned}
   \mathcal{F}_k&=\sum_{\tilde{s}\in \mathcal{S}_{k-1}} f_k(\tilde{s}.l) \\
    &=\sum_{\tilde{s}\in \mathcal{S}_{k-1}}\sum_{o\in\Omega} \left( d(r(o),o.l)^2+\alpha \cdot\left(\delta\cdot \left(\lceil\frac{d(r(o), \tilde{s}.l)}{\delta}\rceil-1\right)\right)^2
    \right)\\
     &s.t.\, \forall o \in \Theta_k \, (d(r(o), o.\tilde{l})\leq \mu \cdot (o.t-o.\tilde{t})),
\end{aligned}
\end{equation}
where $\Theta_k=\mathcal{O}_k\cap (\bigcup_{\tilde{s}\in \mathcal{S}_{k-1}
}\mathcal{M}_{k-1}(\tilde{s}))$, $\Omega=\mathcal{M}_{k-1}(\tilde{s})\backslash \{\tilde{s}\}$,
$r(o)$ is the adjustment of $o.l$ at $\textit{dt}_k$, $\tilde{s}$ is the seed point of $o$ at $\textit{dt}_{k-1}$, and $\tilde{s}.l$ is the location of $\tilde{s}$ at $\textit{dt}_{k}$. We omit the multiplier $\frac{1}{\pi^2}$ from Formula~\ref{f:F} because $\Delta t$, $\mu$, and $\delta$ are constants and do not affect the results. 

\subsection{Linear Time Solution}\label{sec:smooth_non_seed_point}
We show that Formula~\ref{f:F_decompose} can be solved approximately in linear time. However,
Formula~\ref{f:F_decompose} uses each previous seed point $\tilde{s}$ for smoothing, and such points may also exhibit unusual behaviors from $\textit{dt}_{k-1}$ to $\textit{dt}_k$.
Moreover, $\tilde{s}$ may not be in $\mathcal{O}_k$.
We address these problems in Section~\ref{sec:smooth_seed_point} by proposing a seed point shifting strategy, and we assume here that  $\tilde{s}\in \mathcal{O}_k$ has already been smoothed, i.e., $r(\tilde{s})=\tilde{s}.l$.

\begin{myLe}\label{le:F_decompose}
$\mathcal{F}_k$ achieves the minimum value if each $f_k(\tilde{s}.l)\,(\tilde{s}\in \mathcal{S}_{k-1})$ achieves the minimum value.
\end{myLe}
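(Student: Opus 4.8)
The plan is to exploit the separability of both the objective $\mathcal{F}_k$ and its feasible region across the minimal groups of the previous time step. Formula~\ref{f:F_decompose} already rewrites $\mathcal{F}_k = \sum_{\tilde{s}\in \mathcal{S}_{k-1}} f_k(\tilde{s}.l)$, so the task reduces to showing that the summands $f_k(\tilde{s}.l)$ can be minimized independently of one another, whence minimizing each one minimizes the whole.

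First I would pin down the decision variables. The only quantities being optimized are the adjustments $r(o)$ for $o\in \mathcal{M}_{k-1}(\tilde{s})\setminus\{\tilde{s}\}$, $\tilde{s}\in\mathcal{S}_{k-1}$ (recall that $\tilde{s}$ is assumed pre-smoothed, so $r(\tilde{s})=\tilde{s}.l$ is a constant, and any trajectory outside $\bigcup_{\tilde{s}}\mathcal{M}_{k-1}(\tilde{s})$ is not adjusted). By Definition~\ref{def:seed_point} the minimal groups $\{\mathcal{M}_{k-1}(\tilde{s})\}_{\tilde{s}\in\mathcal{S}_{k-1}}$ are pairwise disjoint, so each such $o$ lies in exactly one $\mathcal{M}_{k-1}(\tilde{s})\setminus\{\tilde{s}\}$. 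Consequently the variable $r(o)$ occurs in exactly one summand $f_k(\tilde{s}.l)$, and distinct summands share no decision variables; the reference location $\tilde{s}.l$ appearing inside $f_k(\tilde{s}.l)$ is the (already smoothed, hence constant) location of that group's own seed point, so it contributes no cross-group coupling either.

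Next I would observe that the constraint set factorizes the same way. The side condition in Formula~\ref{f:F_decompose} is a conjunction, over $o\in\Theta_k$, of the single-variable constraints $d(r(o), o.\tilde{l})\le \mu\cdot(o.t-o.\tilde{t})$, each of which restricts only $r(o)$. Hence the feasible region is the Cartesian product of the Euclidean balls $B(o)=\{\,p : d(p, o.\tilde{l})\le \mu\cdot(o.t-o.\tilde{t})\,\}$, and this product respects the partition of variables by $\tilde{s}$. A sum of functions over disjoint blocks of variables, minimized over the product of the corresponding feasible blocks, attains its minimum exactly when each block is minimized individually; therefore $\min \mathcal{F}_k = \sum_{\tilde{s}\in\mathcal{S}_{k-1}} \min f_k(\tilde{s}.l)$, an optimal $\mathcal{R}_{k_{opt}}$ being obtained by concatenating per-group optimizers. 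In particular, if every $f_k(\tilde{s}.l)$ is at its minimum, then so is $\mathcal{F}_k$, which is the claim.

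I do not expect a serious obstacle here; this is a routine separable-optimization argument. The only points that need genuine care are (i) confirming that no decision variable is shared between two groups, which rests on the disjointness clause of Definition~\ref{def:seed_point} together with $\tilde{s}$ being fixed rather than optimized, and (ii) confirming that the objective has no cross-group coupling term, which follows since the only "external" quantity in $f_k(\tilde{s}.l)$ is that group's own (constant) seed-point location.
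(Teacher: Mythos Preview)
Your proposal is correct and follows essentially the same approach as the paper: both arguments rest on the disjointness of minimal groups, $\mathcal{M}_{k-1}(\tilde{s})\cap \mathcal{M}_{k-1}(\tilde{s}')=\emptyset$ for $\tilde{s}\neq\tilde{s}'$, which ensures the summands $f_k(\tilde{s}.l)$ share no decision variables and can be minimized independently. The paper merely states this and omits the details, whereas you have spelled out the separability of both the objective and the constraint set more carefully.
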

\begin{proof}
To prove this, we only need to prove that $f_k(\tilde{s}.l)$ and ${f}_k(\tilde{s}'.\tilde{l})$\,$(\tilde{s}\neq \tilde{s}'\wedge \tilde{s}, \tilde{s}'\in \mathcal{S}_{k-1})$ do not affect each other. This can be established easily, as we require $\mathcal{M}_{k-1}(\tilde{s})\cap \mathcal{M}_{k-1}(\tilde{s}')=\emptyset$. We thus omit the details due to space limitation.
\end{proof}

Lemma~\ref{le:F_decompose} implies that Formula~\ref{f:F_decompose} can be solved by minimizing each $f_k(\tilde{s}.l)$ ($\tilde{s}\in \mathcal{S}_{k-1}$).
Next, we further "push down" the cost shown in Formula~\ref{f:F_decompose} to each pair of $o\,(o\in \mathcal{M}_{k-1}(\tilde{s})\backslash \{\tilde{s}\})$ and $\tilde{s}$.

\begin{eqnarray}\label{f:F_fixedcp}
 f_k(r(o), \tilde{s}.l)&=&\left(d(r(o),o.l)^2+\alpha \cdot\left(\delta\cdot \left(\lceil\frac{d(r(o), \tilde{s}.l)}{\delta}\rceil-1\right)\right)^2
    \right) \nonumber \\
 \quad &s.t.&\, d(r(o), o.\tilde{l})\leq \mu \cdot (o.t-o.\tilde{t})
\end{eqnarray}

\begin{myLe}\label{le:f_decompose}
$f_k(\tilde{s}.l)$ achieves the minimum value if each  $f_k(r(o),\tilde{s}.l)\,$\\$(o \in \mathcal{M}_{k-1}(\tilde{s})\cap \mathcal{O}_k\backslash \{\tilde{s}\})$ achieves the minimum value.
\end{myLe}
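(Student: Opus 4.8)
The plan is to run exactly the separability argument of Lemma~\ref{le:F_decompose}, only one level further down: now I decompose a single-group cost $f_k(\tilde{s}.l)$ into the per-trajectory costs $f_k(r(o),\tilde{s}.l)$ of Formula~\ref{f:F_fixedcp}. First I would rewrite $f_k(\tilde{s}.l)$, with the summation index restricted to trajectories that are still active, as $f_k(\tilde{s}.l)=\sum_{o\in\mathcal{M}_{k-1}(\tilde{s})\cap\mathcal{O}_k\backslash\{\tilde{s}\}} f_k(r(o),\tilde{s}.l)$, which is immediate from Formula~\ref{f:F_decompose} once we note that for $o\notin\mathcal{O}_k$ no adjustment $r(o)$ exists and the term $d(r(o),o.l)^2$ is undefined. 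Under the standing assumption of Section~\ref{sec:smooth_seed_point} that the seed point has already been fixed, i.e.\ $r(\tilde{s})=\tilde{s}.l$, every quantity other than $r(o)$ in the summand $f_k(r(o),\tilde{s}.l)$ --- namely $o.l$, $o.\tilde{l}$, $o.t$, $o.\tilde{t}$, $\tilde{s}.l$, $\delta$ and $\alpha$ --- is a constant, so each summand is a function of the single two-dimensional variable $r(o)$ alone.

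Next I would check that the constraints decouple as well. The only constraint bound to $r(o)$ in Formula~\ref{f:F_fixedcp} is the disk constraint $d(r(o),o.\tilde{l})\leq\mu\cdot(o.t-o.\tilde{t})$, which mentions $r(o)$ and constants only and never couples $r(o)$ with $r(o')$ for $o'\neq o$. Hence the feasible region of the tuple $(r(o))_{o\in\mathcal{M}_{k-1}(\tilde{s})\cap\mathcal{O}_k\backslash\{\tilde{s}\}}$ is a Cartesian product of disks, and $f_k(\tilde{s}.l)$ is an additively separable objective over that product domain. A separable sum over a product set attains its minimum precisely when each summand is minimized over its own factor; in particular, if every $f_k(r(o),\tilde{s}.l)$ achieves its minimum then so does $f_k(\tilde{s}.l)$, which is the claim (the converse, not needed here, holds for the same reason).

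The only point that needs care --- and the one I expect to be the genuine obstacle, exactly as in Lemma~\ref{le:F_decompose} --- is the requirement of Definition~\ref{def:repair} that distinct trajectories receive distinct adjustments, which is the sole coupling among the $r(o)$'s. I would argue it is harmless: each per-trajectory feasible set is a full-dimensional disk, and the minimizer set of a summand $f_k(\cdot,\tilde{s}.l)$ is cut out by the strictly convex term $d(\cdot,o.l)^2$ plus the piecewise-constant term $\alpha\delta^2(\lceil d(\cdot,\tilde{s}.l)/\delta\rceil-1)^2$, so it is either of positive measure or an isolated point; in either case, whenever two independently chosen minimizers coincide, one of them can be moved by an arbitrarily small amount along a level set of $d(\cdot,o.l)$ while staying in the same ``historical shell'' $\{(m-1)\delta<d(\cdot,\tilde{s}.l)\leq m\delta\}$, which leaves both terms of $f_k(\cdot,\tilde{s}.l)$ --- and the disk constraint --- unchanged. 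Thus a feasible minimizer with pairwise-distinct adjustments always exists and the minimum value is unaffected, so this bookkeeping can be omitted as in Lemma~\ref{le:F_decompose}; the separability itself is routine once the constraint structure is made explicit.
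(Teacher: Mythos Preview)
Your proposal is correct and follows exactly the paper's approach: the paper's proof consists of the single observation that $f_k(r(o),\tilde{s}.l)$ and $f_k(r(o'),\tilde{s}.l)$ for $o\neq o'$ are independent of each other, which is precisely your separability argument. Your treatment is more thorough than the paper's---you explicitly verify that the speed constraints decouple and you engage with the distinctness requirement $r(o)\neq r(o')$ of Definition~\ref{def:repair}, which the paper silently ignores in both Lemma~\ref{le:F_decompose} and Lemma~\ref{le:f_decompose}.
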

\begin{proof}
The proof is straightforward, because $f_k(r(o), \tilde{s}.l)$ and $f_k(r'(o), \tilde{s}.l)$ $(o, o' \in \mathcal{M}_{k-1}(\tilde{s})\cap \mathcal{O}_k\backslash \{\tilde{s}\}\wedge o\neq o')$ are independent of each other.
\end{proof}

According to Lemma~\ref{le:f_decompose}, the problem is simplified to computing $r_{opt}(o)=\arg \min_{r(o)} f_k(r(o), \tilde{s}.l)$ ($o \in \mathcal{M}_{k-1}(\tilde{s})\cap \mathcal{O}_k\backslash \{\tilde{s}\}$) given $\tilde{s}$. However, Formula~\ref{f:F_fixedcp} is still intractable as its objective function is not continuous. We thus aim to transform it into a continuous function.
Before doing so, we cover the case where the computation of $r_{opt}(o)$ w.r.t a trajectory $o$ can be skipped.

\begin{myLe}\label{le:no_adjust}
If $d(o.l, \tilde{s}.l) \leq \delta$ then $o.l=\arg \min_{r(o)} f_k(r(o), \tilde{s}.l)$. 
\end{myLe}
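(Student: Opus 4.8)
The plan is to show that when $d(o.l, \tilde s.l) \le \delta$, setting $r(o) = o.l$ simultaneously zeroes out the snapshot term and makes the historical term attain its smallest achievable value, so no other choice of $r(o)$ can do better. First I would check feasibility: $r(o) = o.l$ must satisfy the speed constraint $d(r(o), o.\tilde l) \le \mu\cdot(o.t - o.\tilde t)$. This holds because, as discussed right after Lemma~\ref{le:speed1}, we pre-process $o.l$ with respect to $o.\tilde l$ so that $d(o.l, o.\tilde l) \le (o.t - o.\tilde t)\cdot \mu$ before running evolutionary clustering (Section~\ref{sec:speed_adjust}); hence $o.l$ is a feasible adjustment.

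Next I would bound the two terms of $f_k(r(o), \tilde s.l)$ from below. The snapshot term $d(r(o), o.l)^2$ is always $\ge 0$, with equality iff $r(o) = o.l$. For the historical term, observe that $d(r(o), \tilde s.l) \ge 0$ always, so $\lceil d(r(o),\tilde s.l)/\delta\rceil \ge 1$ whenever $d(r(o),\tilde s.l) > 0$, and $\lceil 0/\delta\rceil = 0$; in either case $\lceil d(r(o),\tilde s.l)/\delta\rceil \ge 1$ is false only at distance $0$, but the quantity $\bigl(\lceil d(r(o),\tilde s.l)/\delta\rceil - 1\bigr)^2$ is minimized — equal to $0$ — precisely when $d(r(o), \tilde s.l) \le \delta$ (the ceiling is then $\le 1$, so the bracket is $\le 0$, and since it cannot be negative for nonnegative distance it equals $0$; more carefully, $\lceil x \rceil = 1$ for $x \in (0,1]$ and $\lceil x\rceil = 0$ for $x = 0$, so the bracket is $0$ for all $d(r(o),\tilde s.l) \in [0,\delta]$). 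Now the hypothesis $d(o.l, \tilde s.l) \le \delta$ says exactly that the choice $r(o) = o.l$ lands inside this ball, so its historical term is $0$ as well.

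Putting this together: with $r(o) = o.l$ we get $f_k(o.l, \tilde s.l) = 0 + \alpha\cdot 0 = 0$, and since $f_k(r(o),\tilde s.l) \ge 0$ for every feasible $r(o)$, this value is the global minimum; hence $o.l = \arg\min_{r(o)} f_k(r(o), \tilde s.l)$.

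The only subtle point — and the main thing to get right rather than a genuine obstacle — is the behavior of the ceiling-based historical term on the boundary of the $\delta$-ball and at distance exactly $0$: one must confirm that $\bigl(\lceil d/\delta\rceil - 1\bigr)^2 = 0$ for the full closed range $d \in [0,\delta]$, so that the hypothesis $d(o.l,\tilde s.l)\le\delta$ (a non-strict inequality) indeed suffices. Everything else is immediate from nonnegativity of both summands and the pre-processing that guarantees feasibility of $r(o)=o.l$.
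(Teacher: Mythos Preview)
Your argument is essentially the paper's: show both summands of $f_k(\cdot,\tilde s.l)$ vanish at $r(o)=o.l$ and invoke nonnegativity, with the added (and welcome) explicit feasibility check via the speed pre-processing. One small slip in your boundary discussion: at $d(r(o),\tilde s.l)=0$ the ceiling is $0$, so $(\lceil 0/\delta\rceil-1)^2=1$, not $0$; your claim that the bracketed quantity vanishes on all of $[0,\delta]$ fails at the single point $d=0$. This degenerate case (which would require $o.l=\tilde s.l$) is glossed over in the paper as well and is excluded in practice since $r(o)\neq r(\tilde s)=\tilde s.l$ by Definition~\ref{def:repair}, so it does not undermine the lemma, but your ``more carefully'' parenthetical should be corrected or simply dropped.
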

\begin{proof}
Let $r(o)\,(r(o)\neq o.l)$ be an adjustment of $o.l$.
Given $d(o.l, \tilde{s}.l) \leq \delta$, $\mathcal{TC}_k(o.l)=0
\leq \mathcal{TC}_k(r(o))$. On the other hand, as $d(r(o),o.l)>d(o.l,o.l)=0$, the snapshot cost $\mathcal{SC}_k(o.l)=0< \mathcal{SC}_k(r(o))$. Thus, $r_{opt}(o)=o.l$ if $d(o.l, \tilde{s}.l) \leq \delta$. 
\end{proof}
A previous study~\cite{kim2009particle} smooths the distance between each pair of neighboring nodes no matter their relative distances. In contrast, Lemma~\ref{le:no_adjust} suggests that if a non-seed point remains close to its previous seed point at the current time step, smoothing can be ignored.
This avoids over-smoothing close trajectories. 
Following Example~\ref{ex:evolutionary_clustering}, $o_2.l=\arg \min_{r(o_2)} f_2(r(o_2), o_1.l)$. 

\begin{myDef}
A \textbf{circle} is given by $\mathcal{Q}(e, x)$, where $e$ is the center and $x$ is the radius. 
\end{myDef}

\begin{myDef}
A \textbf{segment} connecting two locations $l$ and $l'$ is denoted as $\textit{se}(l,l')$.
The \textbf{intersection} of a circle $\mathcal{Q}(e, x)$ and a segment $\textit{se}(l, l')$ is denoted as $\textit{se}(l,l')  \oplus \mathcal{Q}(e, r)$.
\end{myDef}

Figure~\ref{fig:t2} shows a circle $\mathcal{Q}(o_1.l, \delta)$ that contains $o_1.l$, $o_2.l$, and $o_3.l$. Further, $r(o_6)=\textit{se}(o_6.l,o_4.l) \oplus \mathcal{Q}(o_4.l, \delta)$.

\begin{myLe}\label{le:notempty}
$\textit{se}(o.l, \tilde{s}.l) \cap \mathcal{Q}(o.\tilde{l},\mu \cdot (o.t-o.\tilde{t}))\neq \emptyset$.  
\end{myLe}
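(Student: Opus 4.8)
The plan is to show that the segment $\textit{se}(o.l,\tilde{s}.l)$ and the disk $\mathcal{Q}(o.\tilde{l},\mu\cdot(o.t-o.\tilde{t}))$ always share at least one point, by exhibiting an explicit point of the segment that lies inside (or on) the circle. The natural candidate is the endpoint $o.l$ itself: if I can show $d(o.l,o.\tilde{l})\leq \mu\cdot(o.t-o.\tilde{t})$, then $o.l\in \mathcal{Q}(o.\tilde{l},\mu\cdot(o.t-o.\tilde{t}))$, and since $o.l$ is trivially on the segment $\textit{se}(o.l,\tilde{s}.l)$, the intersection is nonempty. So the whole statement reduces to the speed-constraint inequality on $o$ itself.

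First I would recall that, by the preprocessing step announced after Lemma~\ref{le:speed1} (and detailed in Section~\ref{sec:speed_adjust}), every trajectory $o$ is adjusted before evolutionary clustering so that $d(o.l,o.\tilde{l})\leq \mu\cdot(o.t-o.\tilde{t})$ holds. Hence this inequality may be assumed as a standing hypothesis at time step $\textit{dt}_k$. Given that, the argument is immediate: $o.l$ belongs to $\textit{se}(o.l,\tilde{s}.l)$ as one of its endpoints, and $d(o.l,o.\tilde{l})\leq \mu\cdot(o.t-o.\tilde{t})$ says precisely that $o.l$ lies in the closed disk $\mathcal{Q}(o.\tilde{l},\mu\cdot(o.t-o.\tilde{t}))$. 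Therefore $o.l\in \textit{se}(o.l,\tilde{s}.l)\cap \mathcal{Q}(o.\tilde{l},\mu\cdot(o.t-o.\tilde{t}))$, which is thus nonempty.

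I expect the only subtlety — the "main obstacle," such as it is — to be making explicit which hypothesis is being invoked: the statement of Lemma~\ref{le:notempty} as written carries no visible precondition, so the proof must lean on the global assumption that $o$ has already been speed-adjusted. If one did not want to rely on that preprocessing, an alternative would be to note that $o.\tilde{l}$ itself is the center of the disk and to argue that the segment from $o.l$ toward $\tilde{s}.l$ cannot avoid a neighborhood of $o.\tilde{l}$ using Definition~\ref{def:seed_point} (which forces $d(o.\tilde{l},\tilde{s}.\tilde{l})\leq\delta$) together with the speed bounds on $\tilde{s}$ from Lemma~\ref{le:normalize_tc}; but this is more delicate and unnecessary once the standing speed constraint on $o$ is in force. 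I would therefore present the short version, with a one-line reminder that $d(o.l,o.\tilde{l})\leq \mu\cdot(o.t-o.\tilde{t})$ is guaranteed by the speed-adjustment step, and conclude that $o.l$ witnesses the claimed nonempty intersection.
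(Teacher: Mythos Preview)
Your proposal is correct and matches the paper's own proof: both invoke the speed-constraint preprocessing to ensure $d(o.l,o.\tilde{l})\leq \mu\cdot(o.t-o.\tilde{t})$, whence the endpoint $o.l$ lies in the disk $\mathcal{Q}(o.\tilde{l},\mu\cdot(o.t-o.\tilde{t}))$ and on the segment, witnessing the nonempty intersection. Your identification of the hidden hypothesis (the standing speed adjustment) is exactly the point the paper relies on.
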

\begin{proof}
In Section~\ref{sec:problem_definition}, we constrain $o.t-o.\tilde{t} \leq \mu \cdot \Delta t$ before smoothing, which implies that $o.l\in \mathcal{Q}(o.\tilde{l}, \mu\cdot (o.t-o.\tilde{t}))$.  
Hence, $\textit{se}(o.l, \tilde{s}.l) \cap \mathcal{Q}(o.\tilde{l},\mu \cdot (o.t-o.\tilde{t}))\neq \emptyset$.
\end{proof}
In Figure~\ref{fig:t2_1}, given $o_6.t-o_6.\tilde{t}=3$, $o_6.l\in \textit{se}(o_6.l,o_4.l)\cap\mathcal{Q}(o_6.\tilde{l}, 3\mu)$.

\paragraph{\textbf{Omitting the speed constraint}}
We first show that without utilizing the speed constraint, an optimal adjustment $r_{opt'}(o)$ of $o.l$ that minimizes $f(r(o), \tilde{s}.l)$ can be derived in constant time. 
Based on this, we explain how to compute $r_{opt}$ based on $r_{opt'}(o)$.

\begin{myLe}\label{le:r_given_y}
$\forall r'(o) \notin \textit{se}(o.l,\tilde{s}.l) (\exists r(o) \in \textit{se}(o.l,\tilde{s}.l)  ( f_k(r(o), 
\tilde{s}.l)\leq f_k(r'(o), \tilde{s}.l)))$.

\end{myLe}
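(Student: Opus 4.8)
The plan is to show that projecting any candidate adjustment $r'(o)$ onto the segment $\textit{se}(o.l,\tilde{s}.l)$ can only decrease (or leave unchanged) both the snapshot cost and the historical cost, so that the projected point $r(o)$ on the segment satisfies $f_k(r(o),\tilde{s}.l)\le f_k(r'(o),\tilde{s}.l)$. Concretely, given $r'(o)\notin \textit{se}(o.l,\tilde{s}.l)$, I would take $r(o)$ to be the orthogonal projection of $r'(o)$ onto the line through $o.l$ and $\tilde{s}.l$, clipped to the segment if the foot of the perpendicular falls outside it. I then analyse the two terms of $f_k$ in Formula~\ref{f:F_fixedcp} separately.

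First I would handle the snapshot term $d(r(o),o.l)^2$. Since $o.l$ is an endpoint of the segment, the nearest point of the segment to $r'(o)$ is at least as close to $o.l$ as $r'(o)$ is: if the perpendicular foot lies on the segment, $d(r(o),o.l)\le d(r'(o),o.l)$ by the Pythagorean relation $d(r'(o),o.l)^2 = d(r(o),o.l)^2 + d(r'(o),r(o))^2$; if it lies beyond $o.l$, then $r(o)=o.l$ and the term is zero; if it lies beyond $\tilde{s}.l$, a short argument via the same decomposition still gives $d(r(o),o.l)\le d(r'(o),o.l)$ because moving from $r'(o)$ toward the segment strictly reduces distance to every point of the segment, $o.l$ included. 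Hence the snapshot cost does not increase. Next, the historical term depends on $r(o)$ only through $\lceil d(r(o),\tilde{s}.l)/\delta\rceil$, and by the identical projection argument with the \emph{other} endpoint $\tilde{s}.l$, we get $d(r(o),\tilde{s}.l)\le d(r'(o),\tilde{s}.l)$, so the ceiling term is also non-increasing, and therefore the whole historical cost does not increase.

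It remains to check that the projected point $r(o)$ still satisfies the speed constraint $d(r(o),o.\tilde{l})\le \mu\cdot(o.t-o.\tilde{t})$, since Formula~\ref{f:F_fixedcp} is a constrained minimisation and an unconstrained improvement is useless if it leaves the feasible region. By Lemma~\ref{le:notempty} the segment $\textit{se}(o.l,\tilde{s}.l)$ meets the disk $\mathcal{Q}(o.\tilde{l},\mu\cdot(o.t-o.\tilde{t}))$, and in fact the endpoint $o.l$ itself lies in that disk. The feasible portion of the segment is thus a non-empty sub-segment containing $o.l$; if the projection $r(o)$ already lies in this disk we are done, and otherwise we slide $r(o)$ along the segment toward $o.l$ until it first enters the disk. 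This sliding only decreases $d(\cdot,o.l)$ further and, since we move toward $o.l$ and away from $\tilde{s}.l$ is not guaranteed, we must be slightly careful: moving toward $o.l$ can only help the snapshot term but could in principle enlarge the distance to $\tilde{s}.l$. This is the main obstacle, and I expect to resolve it by noting that on the segment the distance to $\tilde{s}.l$ is monotone in the segment parameter, so the point we land on after sliding toward $o.l$ has $d(\cdot,\tilde{s}.l)\le d(o.l,\tilde{s}.l)$; if in addition one first observes (via Lemma~\ref{le:no_adjust}, or directly) that for any feasible $r'(o)$ with $d(r'(o),\tilde{s}.l)< d(o.l,\tilde{s}.l)$ one may simply compare against $r(o)=o.l$, the remaining case is $d(r'(o),\tilde{s}.l)\ge d(o.l,\tilde{s}.l)$, where sliding to any feasible segment point keeps the historical term bounded by the value at $r'(o)$. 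Assembling these cases, both terms of $f_k$ at the chosen feasible segment point are at most their values at $r'(o)$, which establishes the claim.

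In short, the argument is: (i) orthogonally project onto the line $o.l\tilde{s}.l$ and clip to the segment; (ii) use the Pythagorean decomposition at each endpoint to show both cost terms are non-increasing; (iii) repair feasibility by sliding toward $o.l$ inside the disk of Lemma~\ref{le:notempty}, using monotonicity of distances along the segment and a case split on whether $d(r'(o),\tilde{s}.l)$ exceeds $d(o.l,\tilde{s}.l)$. The feasibility-repair step, and in particular controlling the historical term while moving toward $o.l$, is the delicate part; everything else is elementary planar geometry.
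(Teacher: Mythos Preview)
Your projection argument in steps (i)--(ii) is correct and actually a bit cleaner than the paper's route. The paper does not project; it instead splits on whether $r'(o)$ lies inside the disk $\mathcal{Q}(\tilde{s}.l,y)$ with $y=d(o.l,\tilde{s}.l)$: outside the disk it picks $r(o)\in\textit{se}(o.l,\tilde{s}.l)$ matching $d(\cdot,o.l)$ (so $\mathcal{SC}$ is equal and $\mathcal{TC}$ drops), inside the disk it matches $d(\cdot,\tilde{s}.l)$ (so $\mathcal{TC}$ is equal and $\mathcal{SC}$ drops). Your single orthogonal-projection-with-clipping step simultaneously shrinks both endpoint distances via Pythagoras, which is a nice unification of the two cases.

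Where you go astray is step (iii). The lemma in the paper is proved \emph{ignoring} the speed constraint: it sits under the paragraph ``Omitting the speed constraint'', and the paper only reintroduces feasibility afterwards when distinguishing $r_{opt'}$ from $r_{opt}$, explicitly conceding that the feasible optimum may lie off the segment and is only approximated. So no feasibility repair is needed here, and your steps (i)--(ii) already finish the proof.

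Moreover, your step (iii) has a genuine gap as written. In the case $d(r'(o),\tilde{s}.l)<d(o.l,\tilde{s}.l)$ you propose to compare against $r(o)=o.l$, but that comparison can fail: the snapshot cost at $o.l$ is zero, yet the historical term $\big(\lceil d(o.l,\tilde{s}.l)/\delta\rceil-1\big)^2$ can strictly exceed $\big(\lceil d(r'(o),\tilde{s}.l)/\delta\rceil-1\big)^2$, so $f_k(o.l,\tilde{s}.l)>f_k(r'(o),\tilde{s}.l)$ is possible. Since the lemma does not require feasibility, simply drop step (iii) and you are done.
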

\begin{proof}
Let $d(o.l,\tilde{s}.l)=y$. 
First, we prove that $\forall r'(o) \notin \mathcal{Q}(\tilde{s}.l, y)$\\$ ( \exists r(o) \in \textit{se}(o.l,\tilde{s}.l) (f_k(r'(o), \tilde{s}.l)\geq f_k(r(o), \tilde{s}.l)))$.
Two cases are considered, i.e., (i) $d(r'(o), o.l) \leq y$ and (ii) $d(r'(o), o.l) > y$. For the first case, we can always find an adjustment $r(o)\in  \textit{se}(o.l,\tilde{s}.l)$, such that $d(r'(o), o.l)=d(r(o), o.l)$. Hence, $\mathcal{SC}_k(r'(o)) =\mathcal{SC}_k(r(o))$. However, we have $\mathcal{TC}_k(r'(o)) \geq \mathcal{TC}_k(r(o))$ due to $d(r'(o), \tilde{s}.l) > d(r(o), \tilde{s}.l)$. Thus, $f_k(r'(o), \tilde{s}.l) \geq f_k(r(o), \tilde{s}.l)$.
For the second case, it is clear that  $\forall r(o)\in \textit{se}(o.l,\tilde{s}.l)( \mathcal{SC}_k(r'(o)) >\mathcal{SC}_k(r(o))\wedge \mathcal{TC}_k(r'(o)) \geq \mathcal{TC}_k(r(o)))$. Thus, $f_k(r'(o), \tilde{s}.l)>f_k(r(o), \tilde{s}.l)$.

Second, we prove that $\forall r'(o)\in \mathcal{Q}(\tilde{s}.l, y)\backslash \textit{se}(o.l,\tilde{s}.l)  ( \exists r(o) \in \textit{se}(o.l,\tilde{s}.l) (f_k(r'(o), \tilde{s}.l)\geq f_k(r(o), \tilde{s}.l)))$. We can always find $r(o)\in  \textit{se}(o.l,\tilde{s}.l)$, such that $d(r'(o), \tilde{s}.l)=d(r(o), \tilde{s}.l)$. Hence, $\mathcal{TC}_k(r''(o)) $\\$=\mathcal{TC}_k(r(o))$. However, in this case $\mathcal{SC}_k(r'(o))$ $>\mathcal{SC}_k(r(o))$ due to $r(o)\in \textit{se}(o.l,\tilde{s}.l) \wedge r'(o)\notin \textit{se}(o.l,\tilde{s}.l)$.
Thus, we have $f_k(r'(o), \tilde{s}.l)$\\$>f_k(r(o), \tilde{s}.l)$.
\end{proof}
In Figure~\ref{fig:t2}, $f(r(o_6), o_4.l)\leq  f(r''(o_6), o_4.l)$ and $f(r'(o_6), o_4.l)\leq  f(r''(o_6), o_4.l)$ due to $r''(o_6)\notin \textit{se}(o_4.l, o_6.l)$.  Lemma~\ref{le:r_given_y} indicates that if we ignore the speed constraint in Formula~\ref{f:F_fixedcp}, we can search $r_{opt'}(o)$ just on $\textit{se}(o.l,\tilde{s}.l)$ without missing any result. 

\begin{figure}[t]
\begin{center}
\subfigcapskip=-20pt
\includegraphics[width=0.38\textwidth]{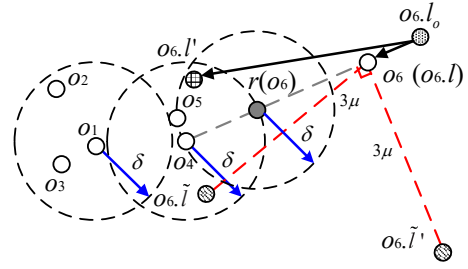}\vspace{-3mm}
\caption{An example of speed-based pre-processing, i.e., $\bm{o_6.l_o\rightarrow o_6}$, in Figure~\ref{fig:t2} ($\bm{\rho=3}$)}
\vspace{-5mm}
\label{fig:t2_1}
\end{center}
\end{figure}
\begin{myLe}\label{le:solution}
Let $d(r_{opt'}(o), \tilde{s}.l)=b_{opt'} \cdot \delta$. If $d(o.l, \tilde{s}.l)>\delta$ then $b_{opt'}\in\left\{{\bm{\mathrm{N}}^*}\cap [\lambda_1, \lambda_2]\right\}\cup \lambda_2$, where $\lambda_1=\frac{d(\tilde{s}.l, o.\tilde{l})-\mu \cdot (o.t-{o}.\tilde{t})}{\delta}$, $\lambda_2=\frac{d(o.l, \tilde{s}.l)}{\delta}$ 
and ${\bm{\mathrm{N}}}^*$ is the natural numbers.
\end{myLe}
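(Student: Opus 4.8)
The plan is to collapse the two-dimensional search for $r_{opt'}(o)$ to a one-dimensional one along the segment $\textit{se}(o.l,\tilde{s}.l)$ and then exploit the staircase shape of the historical term. By Lemma~\ref{le:r_given_y} it suffices to look for $r_{opt'}(o)$ on $\textit{se}(o.l,\tilde{s}.l)$, so I would write a generic point $r$ of this segment as the one with $d(r,\tilde{s}.l)=b\delta$ for $b\in[0,\lambda_2]$; since $r$ lies between $\tilde{s}.l$ and $o.l$ we have $d(r,o.l)=d(o.l,\tilde{s}.l)-b\delta=(\lambda_2-b)\delta$. Substituting into Formula~\ref{f:F_fixedcp}, the objective restricted to the segment equals $\delta^2\,g(b)$ with $g(b)=(\lambda_2-b)^2+\alpha\bigl(\lceil b\rceil-1\bigr)^2$, so the task becomes: minimize $g$ over the set $S=\{b\in[0,\lambda_2] : d(r,o.\tilde{l})\le\mu\cdot(o.t-o.\tilde{t})\}$ of values that respect the speed constraint, and $b_{opt'}$ is the $b$-coordinate of that minimizer.

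Two facts about $S$ carry the proof. First, by the triangle inequality $d(r,o.\tilde{l})\ge d(\tilde{s}.l,o.\tilde{l})-d(\tilde{s}.l,r)=d(\tilde{s}.l,o.\tilde{l})-b\delta$, so feasibility forces $b\ge\bigl(d(\tilde{s}.l,o.\tilde{l})-\mu\cdot(o.t-o.\tilde{t})\bigr)/\delta=\lambda_1$; hence $S\subseteq[\lambda_1,\lambda_2]$, which already pins $b_{opt'}$ into $[\lambda_1,\lambda_2]$. Second, $b\mapsto d(r,o.\tilde{l})$ is convex (a norm precomposed with the affine parametrization of the segment), and $b=\lambda_2$, i.e.\ $r=o.l$, lies in $S$ because the pre-processing in Section~\ref{sec:problem_definition} (see also the argument of Lemma~\ref{le:notempty}) forces $d(o.l,o.\tilde{l})\le\mu\cdot(o.t-o.\tilde{t})$; consequently, whenever $b^\ast\in S$ and $b^\ast\le b'\le\lambda_2$, also $b'\in S$. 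Informally: sliding an adjustment toward $o.l$ along the segment never breaks the speed constraint.

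It then remains to locate the minimizer of $g$ on $S$. On each interval $(n-1,n]$ the term $\lceil b\rceil$ is the constant $n$, so there $g(b)=(\lambda_2-b)^2+\alpha(n-1)^2$ is non-increasing in $b$ (using $b\le\lambda_2$) and is minimized at the right endpoint of that interval clipped to $[0,\lambda_2]$, namely $n$ when $n\le\lambda_2$ or $\lambda_2$ otherwise, while $g$ jumps upward at every integer. Hence, given any $b^\ast\in S$ that is neither a positive integer nor $\lambda_2$ nor $0$, I would replace it by $b'=\min\{\lceil b^\ast\rceil,\lambda_2\}$: then $b^\ast<b'\le\lambda_2$, so $b'\in S$ by the second fact, $\lceil\,\cdot\,\rceil$ is constant on $(b^\ast,b']$, and $(\lambda_2-b')^2<(\lambda_2-b^\ast)^2$, giving $g(b')<g(b^\ast)$ --- contradicting optimality. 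The only leftover possibility is $b^\ast=0$ (where $\lceil 0\rceil=0$ is not a positive integer), and here the hypothesis $d(o.l,\tilde{s}.l)>\delta$, i.e.\ $\lambda_2>1$, is exactly what rescues the argument: $b=1$ lies in $S$ (again by the second fact, since $0\le 1\le\lambda_2$) and $g(1)=(\lambda_2-1)^2<\lambda_2^2+\alpha=g(0)$. Since $\{\bm{\mathrm{N}}^*\cap[\lambda_1,\lambda_2]\}\cup\{\lambda_2\}$ is finite and meets $S$ (it contains $\lambda_2$) and every point of $S$ dominates one of its members, $r_{opt'}(o)$ exists and $b_{opt'}$ belongs to this set, the extra point $\lambda_2$ being precisely the right endpoint of the last, possibly truncated, stair.

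I expect the crux to be the bookkeeping around the discontinuous term $\lceil d(r(o),\tilde{s}.l)/\delta\rceil$: seeing that its pieces of constancy turn $g$ into a run of downward parabolic arcs separated by upward jumps, so the candidate minima are forced onto the integer right-endpoints plus the segment endpoint $\lambda_2$, and checking that each of the rightward moves used to eliminate the non-candidates stays inside the feasible set $S$ (this is what the convexity/monotonicity observation in the second paragraph is for). The small degenerate case $b^\ast=0$ is the only place where the standing hypothesis $d(o.l,\tilde{s}.l)>\delta$ is actually consumed; the complementary case $d(o.l,\tilde{s}.l)\le\delta$ is already disposed of by Lemma~\ref{le:no_adjust}.
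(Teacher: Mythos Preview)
Your proof is correct and follows the same route as the paper: restrict to the segment via Lemma~\ref{le:r_given_y}, obtain the bounds $\max\{1,\lambda_1\}\le b_{opt'}\le\lambda_2$ from the triangle inequality together with the speed bound, and then use the staircase shape of the ceiling term to show that any non-integer $b$ in a stair $(n-1,n]$ is strictly dominated by $\min\{n,\lambda_2\}$. The one addition you make is the convexity observation that $b\mapsto d(r,o.\tilde{l})$ is convex with $\lambda_2$ feasible, so rightward replacements along the segment stay inside $S$; the paper compares costs at the dominating integer without explicitly checking that this integer is itself feasible, so your version is a touch more careful, but the underlying argument is identical.
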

\begin{proof}
We start by proving $\max\{\lambda_1,1\}  \leq b_{opt'} \leq  \lambda_2$. 
First, we have  $r_{opt'}(o)\in \textit{se}(o.l, \tilde{s}.l)$ according to Lemma~\ref{le:r_given_y}. Thus, $b_{opt'}\cdot \delta \leq d(o.l, \tilde{s}.l)$, i.e., $b_{opt'}\leq \lambda_2$.
Further, $\forall r(o)\in \textit{se}(o.l, \tilde{s}.l) (d(r(o),\tilde{s}.l)+ d(r(o), o.\tilde{l}) \geq d(\tilde{s}.l, o.\tilde{l}))$ due to the triangle inequality. 
Thus, $b_{opt'}\cdot \delta + \mu \cdot (o.t-{o}.\tilde{t}) \geq d(\tilde{s}.l, o.\tilde{l})$, i.e., $b_{opt'} \geq  \lambda_1$.
Moreover, $\forall r'(o)\in \textit{se}(o.l, \tilde{s}.l) ((0< d(r'(o), \tilde{s}.l) < \delta \wedge d (r (o), \tilde{s}.l ) = \delta) \Rightarrow (\mathcal{TC}(r'(o))=\mathcal{TC}(r(o)) \wedge \mathcal{SC}(r'(o))>\mathcal{SC}(r(o))))$. Thus, we get $b_{opt'}\geq 1$.

Next, we prove $\forall r'(o),r(o)\in \textit{se}(o.l, \tilde{s}.l) ((b-1)\cdot \delta <d(r'(o), \tilde{s}.l)$\\$<b\cdot \delta \wedge d(r(o), \tilde{s}.l)=b\cdot \delta  \wedge 1 \leq b\leq  \lfloor\lambda_2 \rfloor \wedge b\in \bm{\mathrm{N}}^*)\Rightarrow 
(f(r(o), \tilde{s}.l)<f(r'(o), \tilde{s}.l)))$.
According to Formula~\ref{f:tc}, $\mathcal{TC}(r(o))=\mathcal{TC}(r'(o))$. Further, $d(o.l, r'(o))=d(o.l,\tilde{s}.l)-d(r'(o), \tilde{s}.l)$ and $d(o.l, r(o))=d(o.l,\tilde{s}.l)$ $-b\cdot \delta$. As $d(r'(o), \tilde{s}.l)< b\cdot \delta$ we have $\mathcal{SC}(r(o)) <\mathcal{SC}(r'(o))$. Thus, 
$b_{opt'}\notin [1, \lfloor \lambda_2 \rfloor ]\backslash \bm{\mathrm{N}}^*$.

Finally, we prove $\forall r'(o)\in \textit{se}(o.l, \tilde{s}.l) (b\cdot \delta \leq d(r'(o), \tilde{s}.l)  \wedge \lfloor\lambda_2 \rfloor < b \leq \lambda_2)\Rightarrow 
(f(o.l, \tilde{s}.l)<f(r'(o), \tilde{s}.l)))$.
Similar to the above proof, in this case $\mathcal{TC}(r'(o))=\mathcal{TC}(o.l) \wedge \mathcal{SC}(r'(o))>\mathcal{SC}(o.l)$. Thus, 
$b_{opt'}\notin (\lfloor \lambda_2 \rfloor, \lambda_2]\backslash \{\lambda_2\}$.
\end{proof}
In Figure~\ref{fig:t2}, we have $r_{opt'}(o_6)\in\{o_6.l, r(o_6), r'(o_6)\}$.
Based on Lemmas~\ref{le:no_adjust} to~\ref{le:solution}, 
we let $d(r(o), \tilde{s}.l)=b\cdot \delta$ and simplify Formula~\ref{f:F_fixedcp} to the following function:
\begin{equation}\label{f:solution}
 \begin{aligned} 
  \quad f_k(b, \tilde{s}.l)&= \left(  d(o.l,\tilde{s}.l)-b\cdot \delta\right)^2+\alpha \cdot\left(\delta\cdot \left(b-1\right)\right)^2
  \\
 &s.t. \,\,
 b\in\left\{{\bm{\mathrm{N}}^*}\cap [\lambda_1, \lambda_2]\right\}\cup \lambda_2,
\end{aligned}
\end{equation}
where $\lambda_1=\frac{d(\tilde{s}.l, o.\tilde{l})-\mu \cdot (o.t-o.\tilde{t})}{\delta}$ and $\lambda_2=\frac{d(o.l, \tilde{s}.l)}{\delta}$.
The snapshot cost $\left(d(o.l,\tilde{s}.l)-b\cdot \delta\right)^2$ is derived  according to  Lemma~\ref{le:r_given_y}, i.e., $o.l, r_{opt'}(o)$ and $\tilde{s}.l$ are on the same line segment; 
while the historical cost  $\left(\delta\cdot \left(b-1\right)\right)^2$ is obtained by simply plugging $d(r(o),\tilde{s}.l)=b \cdot \delta$ into Formula~\ref{f:F_fixedcp}. 
The objective function in Formula~\ref{f:solution} is a continuous. Thus, the $b_{opt'}
\,(\in\left\{{\bm{\mathrm{N}}^*}\cap [\lambda_1, \lambda_2]\right\}\cup \lambda_2)$ that minimizes the function
can be obtained in constant time without sacrificing accuracy.

\begin{myEx}\label{ex:solution}
Continuing Example~\ref{ex:evolutionary_clustering} and given $d(o_6.l, o_4.l)= 25$, $\alpha=2.1$ and $\delta=10$, we get $b_{opt'}=1$ and $r_{opt'}(o_6)=r(o_6)$.
\end{myEx}

\paragraph{\textbf{Introducing the speed constraint}} Recall that $r_{opt'}(o)$ is the optimal adjustment of $o.l$ without taking the speed constraint in Formula~\ref{f:F_fixedcp} into account, while $r_{opt}(o)$ takes the constraint into account.
We have narrowed the range of $r_{opt'}$ to a set of discrete locations on $\textit{se}(o.l, \tilde{s}.l)$ without sacrificing any accuracy. Further, if $r_{opt'}\in \mathcal{Q}(o.\tilde{l},\mu \cdot \Delta t)$ then $r_{opt}(o)=r_{opt'}(o)$. However, if $r_{opt'}\notin \mathcal{Q}(o.\tilde{l},\mu \cdot \Delta t)$, $r_{opt'}(o)$ is an invalid adjustment. In this case, 
letting $d(r_{opt}(o),\tilde{s}.l)=b_{opt}\cdot\delta$, we propose to approximate $r_{opt}$ by 
searching only in the narrowed range of $r_{opt'}$, i.e., we propose to compute $b_{opt}$ approximately as follows.
\begin{equation}\label{f:optimal}
\begin{aligned}
 &
     b_{opt}=\arg \min_{b\in \left\{{\rm{N}^*}\cap [\lambda_1, \lambda_2]\right\}\cup \lambda_2}\lvert b-b_{opt'}\rvert\\
 &s.t. \,\, \mathcal{Q}(o.\tilde{l},\mu \cdot (o.t-o.\tilde{t}))\cap \mathcal{Q}(\tilde{s}.l, b\cdot \delta)\neq \emptyset,
\end{aligned}
\end{equation}
where $\mathcal{Q}(o.\tilde{l},\mu \cdot (o.t-o.\tilde{t}))\cap \mathcal{Q}(\tilde{s}.l, b\cdot \delta)\neq \emptyset$ indicates that $r_{opt}(o)\in \mathcal{Q}(o.\tilde{l},\mu \cdot (o.t-o.\tilde{t}))$ must hold due to $d(r_{opt}(o), \tilde{s}.l)=b_{opt}\cdot \delta$. After getting $b_{opt'}$, $b_{opt}$ can be located according to $\textit{se}(o.l, \tilde{s}.l)\oplus \mathcal{Q}(o.\tilde{l},\mu \cdot (o.t-o.\tilde{t}))$ in constant time.
Following Example~\ref{ex:solution} and given $o_6.t-o_6.\tilde{t}=3$ and $\mu=9$, $r_{opt}(o_6)=r(o_6)$ if $o_6.l_1=o_6.\tilde{l}$,
while $r_{opt}(o_6)=o_6.l$ if $o_6.l_1=o_6.\tilde{l}'$ (shown in Figure~\ref{fig:t2_1}). Specifically, in the latter case, $o_6.l$ is the only feasible solution of $r_{opt}(o_6)$ according to Formula~\ref{f:optimal}, as $\textit{se}(o_6.l,o_4.l)\oplus \mathcal{Q}(o_6.\tilde{l}', 3\mu)=o_6.l$.
Note that computing $r_{opt}$ using Formula~\ref{f:optimal} may not yield an optimal value that minimizes $f_k(r(o), \tilde{s}.l)$. This is because we approximate the feasible region of $r_{opt}$ by the narrowed range of $r_{opt'}$ and may miss an $r(o) \,(r(o)\in \mathcal{Q}(o.\tilde{l},\mu \cdot (o.t-o.\tilde{t}))\backslash \textit{se}(o.l, \tilde{s}.l))$ that minimizes Formula~\ref{f:F_fixedcp}.
However, experiments show that $r_{opt'}=r_{opt}$ in most case. The underlying reasons are that the maximum distance a trajectory can move under the speed limitation 
is generally far larger than the distance a trajectory actually moves between any two time steps and that we constrain $d(o.l, o.\tilde{l}) \leq \mu \cdot (o.t-o.\tilde{t})$ before smoothing, which "repairs" large noise to some extent. 
So far, the efficiency of computing $r_{opt}(o)$ using Formula~\ref{f:F_fixedcp} has been improved to $O(1)$ time complexity.

\subsection{Shifting of Seed Points}\label{sec:smooth_seed_point}
Section~\ref{sec:smooth_non_seed_point} assumes that the previous seed point $\tilde{s}.l$ evolves gradually when smoothing $o\in \mathcal{M}_{k-1}(\tilde{s})\cap\mathcal{O}_k\backslash \{\tilde{s}\}$ at $\textit{dt}_k$, which is not always true. Thus, $\tilde{s}.l$ may also need to be smoothed.
We first select a "pivot" for smoothing $\tilde{s}.l$.
An existing method~\cite{song2015turn} maps the noise point to the accurate point that is closest to it in a batch mode. Inspired by this, we smooth $\tilde{s}.l$ using $o.l$\,($o\in \mathcal{M}_{k-1}(\tilde{s})\cap\mathcal{O}_k$), which is a set of discrete locations. This is based on the observation that the travel companions of each trajectory evolves gradually due to the smooth movement of trajectories. Next, we determine which trajectory $o$ should be selected as a "pivot" to smooth $\tilde{s}.l$.

Evolutionary clustering assigns a low cost (cf. Formula~\ref{f:cost}) if the clusterings change smoothly during a short time period.
Since we use cost embedding, we consider the location of a trajectory $o$ as evolving smoothly if the distance between $o$ and $o'$ ($o'\in \mathcal{M}_{k-1}(\tilde{s})\cap\mathcal{O}_k\backslash\{ o\}$) varies only little between two adjacent time steps. This is essentially evaluated by $f_k(o)$ (cf. Formula~\ref{f:F_decompose}), which measures the cost of smoothing $o'$ ($o'\in \mathcal{M}_{k-1}(\tilde{s})\cap\mathcal{O}_k\backslash\{ o\}$) according to $o$.
Hence, we select the "pivot" for smoothing $\tilde{s}.l$ using the following formula:
\begin{equation}\label{f:shift}
   \tilde{s}_{\mathit{new}}=\arg\min_{o\in \mathcal{M}_{k-1}(\tilde{s})\cap\mathcal{O}_k} f_k(o)
\end{equation}
After obtaining a "pivot" $\tilde{s}_{\mathit{new}}$, instead of first smoothing $\tilde{s}.l$ according to $\tilde{s}_{\mathit{new}}.l$ and then smoothing $o.l (o\in \mathcal{M}_{k-1}(\tilde{s})\backslash \{\tilde{s}\})$ by $\tilde{s}.l$, we shift the seed point of $o\in \mathcal{M}_{k-1}(\tilde{s})\backslash \{\tilde{s}\}$ from $\tilde{s}$ to $\tilde{s}_{\mathit{new}}$ and use $\tilde{s}_{\mathit{new}}$ to smooth other trajectories $o\,( o\neq \tilde{s}_{\mathit{new}})$.
The reasons are: (i) by Formula~\ref{f:shift},   $\tilde{s}_{\mathit{new}}$ is the trajectory with the smoothest movement from $\textit{dt}_{k-1}$ to $\textit{dt}_{k}$ among trajectories in $\mathcal{M}_{k-1}(\tilde{s})$, and thus it is less important to smooth it; (ii) we can save $\lvert \mathcal{M}_{k-1}(\tilde{s})\rvert-1$ computations in Formula~\ref{f:F_decompose}. Formula~\ref{f:shift} suggests that the seed point may not be shifted, i.e., $\tilde{s}_{\mathit{new}}$ may be $\tilde{s}$.
Intuitively, when computing $\tilde{s}_{\mathit{new}}$, the locations of all trajectories in their corresponding minimal group are smoothed; and with the seed point shifting strategy, smoothing does not require that the previous seed point is active at the current time step.
\begin{myEx}
Continuing Example~\ref{ex:solution}, given $f_2(o_6.l)=\min\{f_2(o_4.l),$\\$ f_2(o_5.l), f_2(o_6.l)\}$, $\tilde{s}_{\mathit{new}}=\tilde{s}=o_6$. When calculating $\tilde{s}_{\mathit{new}}$, we get $r_{opt}(o_4)=o_4.l$, $r_{opt}(o_5)=o_5.l$, and $r_{opt}(o_6)=r(o_6)$.
\end{myEx}
The time complexity of smoothing a minimal group $\mathcal{M}_{k-1}(\tilde{s})$ is $O(\lvert \mathcal{M}_{k-1}(\tilde{s})\rvert^2)$.

\subsection{Speed-based Pre-processing}\label{sec:speed_adjust}
We present the pre-processing that forces each to-be-smoothed trajectory $o \in \mathcal{O}_k\cap (\bigcup_{\tilde{s}\in \mathcal{S}_{k-1}
}\mathcal{M}_{k-1}(\tilde{s}))$ to observe the speed constraint. The pre-processing guarantees the correctness of  the normalization of the snapshot and historical costs and can repairs large noise to some extent.
We denote the location of $o$ before  pre-processing as $o.l_o$ and the possible location after as $o.l_p$. 

A naive pre-processing strategy is to map $o.l_o$ to a random location on or inside $\mathcal{Q}(o.\tilde{l}, \mu \cdot \Delta t)$. However, this random strategy may make the smoothing less reasonable.

\begin{myEx}\label{ex:speed_adjust_motivation}
Continuing Example~\ref{ex:solution}, 
Figure~\ref{fig:t2_1} shows two possible locations $o_6.l$ and $o_6.l'$ of $o_6.l_o$, both of which are chosen at random while observing the speed constraint, i.e., they are located on and inside $\mathcal{Q}(o_6.\tilde{l}, 3\mu)$, respectively. Since $d(o_6.l', o_4.l)<\delta<d(o_6.l, o_4.l)$, the adjustment of $o_6.l'$ is $o_6.l'$ itself while that of $o_6.l$ is $r(o_6)$. 
\end{myEx}

In this example, $o_6.l'$ is less reasonable than $r(o_6)$. Specifically, according to the minimum change principle~\cite{song2015turn}, the changes to the data distribution made by the speed-based pre-processing and the neighbor-based smoothing should be as small as possible. However, considering $d(o_6.l_o, o_4.l)$, $o_6.l'$ is too close to $o_4.l$ compared with $o_6.l$ and $r(o_6)$. Given a pre-processed location $o.l$, its change due to smoothing has already been minimized through Formula~\ref{f:F_fixedcp}. Thus, to satisfy the minimum change principle, we just need to make the impact of speed-based pre-processing on neighbor-based smoothing as small as possible. 
Hence, we find the pre-processed $o.l$ via the speed constraint as follows.
\begin{equation}\label{f:speed_adjust}
\begin{aligned}
    &o.l=\arg\min_{o.l_p}\left\lvert d(o.l_p,\tilde{s}.l)-  d(o.l_o,\tilde{s}.l)\right\rvert
\\
    & s.t. \, d(o.l_p-o.\tilde{l})\leq \mu \cdot (o.t-o.\tilde{t})
\end{aligned}
\end{equation}
This suggests that the difference between $d(o.l,\tilde{s}.l)$ and $d(o.l_o,\tilde{s}.l))$ is expected to be as small as possible, in order to mitigate the effect of speed-based pre-processing on computing historical cost. Before applying Formula~\ref{f:speed_adjust}, we pre-process $\tilde{s}.l$ so that it also follows the speed constraint:
\begin{equation}\label{f:speed_adjust_seed}
\begin{aligned}
    &\tilde{s}.l=\arg\min_{\tilde{s}.l_p} d(\tilde{s}.l_o,\tilde{s}.l_p)
\\
    & s.t. \, d(\tilde{s}.l_p-\tilde{s}.\tilde{l})\leq \mu \cdot (\tilde{s}.t-\tilde{s}.\tilde{t})
\end{aligned}
\end{equation}
As $\tilde{s}$ is not smoothed by any trajectories in $\mathcal{M}_{k-1}(\tilde{s})$ (cf. Section~\ref{sec:smooth_non_seed_point}),
Formula~\ref{f:speed_adjust_seed} lets the closest location to $\tilde{s}.l_p$ satisfying the speed constraint be $\tilde{s}.l$. This is also in accordance with the minimum change principle~\cite{song2015turn}.
According to the seed point shifting strategy, we examine each $o\in \mathcal{M}_{k-1}(\tilde{s})\cap\mathcal{O}_k$ to identify the most smoothly moving trajectory as $\tilde{s}_{\mathit{new}}$. Thus, before this process, we have to force each $o\in \mathcal{M}_{k-1}(\tilde{s})\cap\mathcal{O}_k$ to follow the speed constraint w.r.t. the current to-be-examined seed point $\tilde{s}$, i.e., computing $o.l$ w.r.t. $\tilde{s}.l$ according to Formula~\ref{f:speed_adjust}.
Obviously, a speed-based pre-processing is only needed when $d(o.l_o-o.\tilde{l})> \mu \cdot (o.t-o.\tilde{t})$; otherwise, $o.l=o.l_o$.
Formulas~\ref{f:speed_adjust} and~\ref{f:speed_adjust_seed} can be computed in constant time.  

\section{Algorithms}\label{sec:algorithm}
We first introduce a grid index and then present the algorithms for generating minimal groups, smoothing locations, and performing the clustering, together with a set of optimization techniques.
\begin{algorithm} [tb]
\small
\LinesNumbered
\caption{Generating minimal groups}\label{alg:micro_group}
	\KwIn{a set of trajectories $\mathcal{O}_k$, a threshold $\delta$}
	\KwOut{a set of seed point $\mathcal{S}_k$ and minimal groups $\bigcup_{s\in \mathcal{S}_k}\mathcal{M}_k(s)$}
   \For{ each $o \in \mathcal{O}_k $}{
    $\mathcal{S}_{k}\leftarrow \mathcal{S}_{k}\cup o$ if $\forall s\in \mathcal{S}_{k} (d(o.l,s.l)>\delta$)
    }\For{ each $o \in \mathcal{O}_k\backslash \mathcal{S}_k$}{
    $\mathcal{M}_{k}(s)\leftarrow\mathcal{M}_{k}(s)\cup o$ \\ $s\leftarrow \arg\min_{\{s\in \mathcal{I}_{\delta}(g(o)) \wedge d(o.l,s.l)\leq \delta\}} d(o.l,s.l)$\\
    }
    \Return {$\mathcal{S}_k$ and $\bigcup_{s\in \mathcal{S}_k}\mathcal{M}_k(s)$}
\end{algorithm}
\subsection{Grid Index}
We use a grid index~\cite{gan2017dynamic} to accelerate our algorithms. Figure~\ref{fig:t2} shows an example  index.
Specifically, the diagonal
of each grid cell (denoted as $g$) has length ${\varepsilon}$, which is the parameter used in DBSCAN ~\cite{gan2017dynamic}. This accelerates the process of finding core points.
The number of trajectories that fall into $g$ is denoted as $\lvert g \rvert$. 
Given $o\in g$ at $\textit{dt}_k$ and $o\in \mathcal{M}_{k}(s)$, $\mathcal{G}(g)$ is the collection of grid cells $g'$, such that $o'\in \mathcal{M}_{k}(s)\wedge o'\in g'$.
Following Example~\ref{ex:minimal group}, $\mathcal{G}(g_{34})=\{g_{24}, g_{34}\}$, as shown in Figure~\ref{fig:t2}. 
The smallest distance between the boundaries of two grid cells, $g$ and $g'$, is denoted as $\min(g,g')$. Clearly, $\min(g,g)=0$.
For example in Figure~\ref{fig:t2}, $\min(g_{24}, g_{44})=\varepsilon$.
Next, we introduce the concept of $h$-closeness~\cite{gan2017dynamic}.

\begin{myDef}\label{def:e_close}
Two grid cells $g$ and $g'$ are  $\bm{h}$-\textbf{\textit{close}}, if $\min(g,g')\leq h$.
The set of the $\bm{h}$-\textbf{\textit{close}} grid cells of $g$ is denoted as $\mathcal{I}_h(g)$.
\end{myDef}

\begin{myLe}\label{le:grid}
For $o\in g$, we have $d(o.l, o'.l)>h$ if $o'\in g'\wedge g'\notin \mathcal{I}_h(g)$.
\end{myLe}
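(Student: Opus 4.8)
\textbf{Proof proposal for Lemma~\ref{le:grid}.}

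The plan is to argue by contraposition, unwinding the definition of $h$-closeness and the geometric meaning of $\min(g,g')$. Suppose, toward a contradiction, that $o \in g$, $o' \in g'$, $g' \notin \mathcal{I}_h(g)$, and yet $d(o.l, o'.l) \le h$. By Definition~\ref{def:e_close}, $g' \notin \mathcal{I}_h(g)$ means precisely that $\min(g,g') > h$, where $\min(g,g')$ is the smallest distance between the boundaries of the two cells. The core of the argument is the elementary fact that, for any point $p$ lying in (or on the boundary of) cell $g$ and any point $q$ lying in (or on the boundary of) cell $g'$, the Euclidean distance $d(p,q)$ is at least $\min(g,g')$: this is just the statement that $\min(g,g')$ realises the infimum of pairwise distances over the two closed cells. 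Applying this with $p = o.l$ and $q = o'.l$ gives $d(o.l,o'.l) \ge \min(g,g') > h$, contradicting $d(o.l,o'.l) \le h$. Hence no such pair exists, i.e. $d(o.l,o'.l) > h$ whenever $o' \in g'$ and $g' \notin \mathcal{I}_h(g)$.

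The first step I would carry out is to make the quantity $\min(g,g')$ precise as $\min(g,g') = \inf\{ d(p,q) : p \in \overline{g},\, q \in \overline{g'}\}$ (with $\overline{g}$ the closed grid cell), so that the monotonicity ``point in $g$, point in $g'$ $\Rightarrow$ distance $\ge \min(g,g')$'' is immediate; since grid cells are axis-aligned rectangles this infimum is attained and is computed coordinatewise, so there is nothing subtle here. The second step is to invoke Definition~\ref{def:e_close} to translate the hypothesis $g' \notin \mathcal{I}_h(g)$ into the strict inequality $\min(g,g') > h$. The third step simply chains these: $d(o.l,o'.l) \ge \min(g,g') > h$.

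I do not expect a genuine obstacle here; the only thing that needs a word of care is the boundary/closure convention — a trajectory ``in'' a cell should be treated as lying in the closed cell, and $\min(g,g')$ as the boundary-to-boundary distance exactly as defined in the text — so that the inequality $d(o.l,o'.l)\ge\min(g,g')$ is valid even when points sit on cell edges. Given that the paper already defines $\min(g,g')$ as ``the smallest distance between the boundaries of two grid cells'' and notes $\min(g,g)=0$, this convention is already in force, and the lemma follows in two or three lines.
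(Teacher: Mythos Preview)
Your proposal is correct and is exactly the straightforward argument the paper has in mind; the paper itself omits the proof entirely, stating only that it is straightforward. Your contrapositive via $\min(g,g')>h$ and the infimum interpretation of $\min(g,g')$ is the natural (and only reasonable) way to fill in the details.
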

The proof is straightforward. We utilize two distance parameters, i.e., $\varepsilon$ for clustering (cf. Definition~\ref{def:core_point}) and $\delta$ for finding minimal groups (cf. Definition~\ref{def:seed_point}). Thus, we only need to consider $\mathcal{I}_h(g)$, where $h=\varepsilon, \delta$.  Following again existing work~\cite{gan2017dynamic}, we define 
$\mathcal{I}_\varepsilon(g_{ij})=\Omega\backslash (g_{i_1j_1}\cup g_{i_1j_2} \cup g_{i_2j_1} \cup g_{i_2j_2})$, where $\Omega=\{g_{i'j'}| i_1 \leq i'\leq  i_2\wedge j_1 \leq j'\leq j_2 \}$ and $i_1=i-2$, $i_2=i+2$, $j_1=j-2$ and $j_2=j+2$. For example in Figure~\ref{fig:t2},  $\mathcal{I}_{\varepsilon}(g_{44})=\Omega\backslash (g_{22} \cup g_{62} \cup g_{26} \cup g_{66})$, where $\Omega=\{g_{ij}| 2\leq i,j \leq 6 \}$.
Since we set $\delta<\varepsilon$, we only need to compute $\mathcal{I}_h(g)$, such that $h<\varepsilon$.

\begin{myLe}\label{le:grid_prune}
$\mathcal{I}_h(g_{ij})=\mathcal{I}_{\varepsilon}(g_{ij})$ if $\frac{\varepsilon}{\sqrt{2}} \leq h<\varepsilon$; otherwise $\mathcal{I}_h(g_{ij})=\{g_{i'j'}| i-1 \leq i'\leq i+1 \wedge j-1 \leq j' \leq j+1 \}$.
\end{myLe}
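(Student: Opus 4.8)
\textbf{Proof plan for Lemma~\ref{le:grid_prune}.}

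The plan is to reason directly from the geometry of the grid, whose cells have diagonal length $\varepsilon$, hence side length $\varepsilon/\sqrt{2}$. Throughout I use Definition~\ref{def:e_close}: $g'\in\mathcal{I}_h(g)$ iff $\min(g,g')\le h$, where $\min(g,g')$ is the smallest distance between the boundaries of the two cells. So the task reduces to computing, for each offset $(p,q)=(i'-i, j'-j)$, the boundary-to-boundary distance $\min(g_{ij}, g_{i'j'})$ and comparing it with $h$.

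First I would record the elementary formula for that distance: if two axis-aligned cells of side $w=\varepsilon/\sqrt2$ are separated by $|p|$ columns and $|q|$ rows, then $\min(g_{ij},g_{i'j'}) = w\cdot\sqrt{(\max(|p|-1,0))^2 + (\max(|q|-1,0))^2}$. In particular: adjacent cells (including diagonal neighbours, $|p|,|q|\le 1$) have boundary distance $0$; a cell two steps away along one axis and at most one step along the other ($\{|p|,|q|\}=\{2,\le 1\}$) has distance $w = \varepsilon/\sqrt2$; a cell two steps away along both axes ($|p|=|q|=2$) has distance $w\sqrt2 = \varepsilon$; and any cell with $\max(|p|,|q|)\ge 3$ has distance at least $2w = \sqrt2\,\varepsilon/\ldots$, in any case $>\varepsilon$.

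Next I would split into the two stated regimes for $h$. If $\varepsilon/\sqrt2 \le h < \varepsilon$: the cells at boundary distance $0$ are included (distance $0\le h$); the cells at boundary distance $\varepsilon/\sqrt2$ are included since $\varepsilon/\sqrt2\le h$; the four corner cells at boundary distance exactly $\varepsilon$ are excluded since $h<\varepsilon$; and everything farther is excluded a fortiori. This is exactly the set $\Omega\backslash(g_{i_1j_1}\cup g_{i_1j_2}\cup g_{i_2j_1}\cup g_{i_2j_2})$ with $i_1=i-2,\ i_2=i+2,\ j_1=j-2,\ j_2=j+2$, i.e.\ $\mathcal{I}_\varepsilon(g_{ij})$ as defined in the text, so $\mathcal{I}_h(g_{ij})=\mathcal{I}_\varepsilon(g_{ij})$. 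If instead $h<\varepsilon/\sqrt2$ (the ``otherwise'' case, with $h>0$): only the cells at boundary distance $0$ qualify, which are precisely the $3\times3$ block $\{g_{i'j'}\mid i-1\le i'\le i+1 \wedge j-1\le j'\le j+1\}$; every cell two or more steps away in some coordinate has boundary distance $\ge \varepsilon/\sqrt2 > h$ and is excluded. This yields the second displayed identity.

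The only mild subtlety — and the step I would be most careful about — is the boundary-distance bookkeeping for the ``knight''-type offsets such as $(p,q)=(2,1)$: one must confirm that the separating gap is governed solely by the coordinate with offset $2$ (contributing one empty cell of width $w$) while the coordinate with offset $1$ contributes nothing, so the distance is $w=\varepsilon/\sqrt2$ rather than something larger; this is what places those cells inside $\mathcal{I}_\varepsilon$ but outside the small $3\times 3$ block, and it is exactly why the threshold $\varepsilon/\sqrt2$ is the breakpoint. Everything else is a routine case check, so I would state the distance formula, tabulate the three distance values $0,\ \varepsilon/\sqrt2,\ \varepsilon$ against the offset pattern, and read off both cases.
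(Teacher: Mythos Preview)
Your proposal is correct and follows exactly the route the paper intends: the paper's entire proof is the single remark that the claim ``follows from the grid cell width being $\frac{\varepsilon}{\sqrt{2}}$,'' and your case analysis of boundary-to-boundary distances for offsets $(p,q)$ is precisely the expansion of that remark. Your explicit distance formula and the tabulation of the three possible values $0,\ \varepsilon/\sqrt2,\ \varepsilon$ on the $5\times5$ block are the right way to make the one-liner rigorous, and your attention to the $(2,1)$-type offsets correctly identifies why $\varepsilon/\sqrt2$ is the breakpoint.
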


The proof of Lemma~\ref{le:grid_prune} follows from the grid cell width being $\frac{\varepsilon}{\sqrt{2}}$. In Figure~\ref{fig:t2}, given $\frac{\varepsilon}{\sqrt{2}} \leq h<\varepsilon$,   $\mathcal{I}_{\delta}(g_{44})=\mathcal{I}_{\varepsilon}(g_{44})$.

\subsection{Generating Minimal Groups}\label{sec:forming_micro_groups}
Sections~\ref{sec:ec} and \ref{sec: Quadratic_time_solution} indicate that $o$ is smoothed at $\textit{dt}_k$ if $\exists \mathcal{M}_{k-1}(\tilde{s})(o\in \mathcal{M}_{k-1}(\tilde{s}))$; otherwise, $o$ is considered as an "outlier," to which neighbor-based smoothing cannot be applied.
Thus, we aim to include as many trajectories as possible in the minimal groups, in order to smooth as many trajectories as possible.

According to the above analysis, an optimal set of minimal groups should satisfy $\forall o\in \mathcal{O}_k\backslash (\bigcup_{s\in \mathcal{S}_{k}}\mathcal{M}_{k}(s)), \forall s\in \mathcal{S}_k$\, $(d(o,s)>\delta)$. Clearly,
$\bigcup_{s\in \mathcal{S}_{k}}\mathcal{M}_{k}(s)$, the set of trajectories in the minimal groups, is determined given $\mathcal{S}_k$. Definition~\ref{def:seed_point} implies that the local density of a seed point $s$ should be not small, i.e., $\lvert \mathcal{N}_{\delta}(s)\rvert \geq \rho$. It guarantees that there is at least one trajectory $o\,(=s)$ in a minimal group that is not located at the border of a cluster.
Considering the above requirement and constraint, we have to enumerate all the possible combinations to get the optimal set of $\mathcal{S}_k$, which is infeasible.

Therefore, we propose a greedy algorithm, shown in Algorithm~\ref{alg:micro_group}, for computing a set of minimal groups at $\textit{dt}_k$. Each trajectory $o$ is mapped to a grid cell $g$ before generating minimal groups.
We first greedily determine $\mathcal{S}_k$ and then generate minimal groups according to $\mathcal{S}_k$. This is because a non-seed point $o$ attached to a minimal group $\mathcal{M}_{k}(s)$ at the very beginning may turn out to be closer to another newly obtained seed point $s'$. This incurs repeated processes for finding a seed point for $o$. Instead, we compute the seed point $s$ for each $o\in \mathcal{O}_k$ exactly once.
According  to Lemma~\ref{le:grid}, we will not miss any possible seed point for $o$ by searching $\mathcal{I}_{\delta}(o)$ rather than $\mathcal{S}_k$ (Line 5). Note that Algorithm~\ref{alg:micro_group} generates minimal groups $\mathcal{M}_k(s) $ such that $\lvert \mathcal{M}_k(s) \rvert< \rho$. We simply ignore these during smoothing.

\begin{algorithm} [tb]
\small
\LinesNumbered
\caption{Smoothing}\label{alg:smoothing}
	\KwIn{a minimal group $\mathcal{M}_{k-1}(\tilde{s})$}
	\KwOut{a set of adjustments $\mathcal{R}_k(\mathcal{M}_{k-1}(\tilde{s}))$}
    $\textit{sum}\leftarrow 0$, $\mathcal{R}_k(\mathcal{M}_{k-1}(\tilde{s}))\leftarrow\emptyset$,
     $\textit{glp}\leftarrow\infty$, $\tilde{s}_{\mathit{new}}\leftarrow\textit{null}$, $\mathcal{A}\leftarrow\emptyset$\\
     \For{each $o' \in \mathcal{M}_{k-1}(\tilde{s})\cap \mathcal{O}_k$}{
      compute $o'.l$ according to Formula~\ref{f:speed_adjust_seed}, $\mathcal{A}\leftarrow o'.l$, $\textit{sum}\leftarrow0$\\ 
      \For{each $o \in \mathcal{M}_{k-1}(\tilde{s})\cap \mathcal{O}_k\backslash\{ o'\}$}{
      compute $o.l$ according to Formula~\ref{f:speed_adjust}\\
     compute  $r_{opt}(o)$ according to Formulas~\ref{f:solution} and ~\ref{f:optimal}\\
     $\textit{sum}\leftarrow \textit{sum}+f_k(r_{opt}(o), o'.l)$\\
      \If{$\textit{sum}\geq \textit{glp}$} 
     { break\qquad \qquad /*\,$o'$ must not be $\tilde{s}_{\mathit{new}}$\,*/}
       $\mathcal{A}\leftarrow \mathcal{A}\cup r_{opt}(o)$\\
     }
    
     \If{$\lvert \mathcal{A} \rvert=\lvert \mathcal{M}_{k-1}(\tilde{s})\rvert$}{
    $\textit{glp}\leftarrow \textit{sum}$, $\tilde{s}_{\mathit{new}}\leftarrow o'$, $\mathcal{R}_k(\mathcal{M}_{k-1}(\tilde{s}))\leftarrow \mathcal{A}$
    }
    }
    \Return{$\mathcal{R}_k(\mathcal{M}_{k-1}(\tilde{s}))$}
\end{algorithm}

\subsection{Evolutionary Clustering}\label{sec:evolutionary_clustering_alg}
\paragraph{\textbf{Smoothing}}
Algorithm~\ref{alg:smoothing} gives the pseudo-code of the smoothing algorithm. We maintain $\textit{glp}$ to record the current minimal $f_k(o'.l)=\sum_{o \in \mathcal{M}_{k-1}(\tilde{s})\cap \mathcal{O}_k\backslash \{o'\}}f_k(r(o), o'.l)\,(o'\in \mathcal{M}_{k-1}(\tilde{s})\cap \mathcal{O}_k)$
and maintain $\mathcal{A}$ to record adjustments w.r.t. $o'.l$ (Line 1). The computation of $f_k(o'.l)$ is terminated early if its current value exceeds $\textit{glp}$ (Lines 8--9).
As can be seen, if $o'\,(o'\in\mathcal{M}_{k-1}(\tilde{s}))\cap \mathcal{O}_k$ is identified as the trajectory with the smoothest movement in its minimal group, the set of adjustments $r(o)\, (o\in \mathcal{M}_{k-1}(\tilde{s})\cap \mathcal{O}_k)$ w.r.t $o'.l$ is returned, i.e., $\mathcal{R}_k(\mathcal{M}_{k-1}(\tilde{s}))$.

\paragraph{\textbf{Optimizing modularity}}
Modularity is a well-known quality measure for clustering~\cite{kim2009particle, yin2021multi}, which is computed as follows.
\begin{equation}\label{f:qs}
    QS=\sum_{c\in \mathcal{C}_k}\left(\frac{IS(c)}{TS}-
    \left(\frac{DS(c)}{TS}\right)^2\right)
\end{equation}
\noindent Here, $\textit{TS}$ is the sum of similarities of all pairs of trajectories, $\textit{IS}(c)$ is the sum of similarities of all pairs of trajectories in cluster $c$, $\textit{DS}(c)$ is the sum of similarities between a trajectory in cluster $c$ and any trajectory in cluster $c'\,(c'\in \mathcal{C}_k \backslash \{c\})$.
A high $\textit{QS}$ indicates a good clustering result. The similarity between any two trajectories $o$ and $o'$ is defined as $\frac{1}{d(o.l, o'.l)}$.

A previous study~\cite{kim2009particle} iteratively adjusts $\varepsilon$ to find the (local) optimal $\textit{QS}$ as well as the clustering result at each time step. Specifically, given an $\varepsilon$, a constant $\Delta\varepsilon$, and the current clustering result $\mathcal{C}_k$,
it calculates three modularity during each iteration: $\textit{QS}_{h}$, $\textit{QS}_{l}$, and $\textit{QS}_{m}$. $\textit{QS}_{m}=\textit{QS}$ is the modularity of $\mathcal{C}_k$.  $\textit{QS}_{h}$ is calculated from pairs in $\mathcal{C}_k$ with a similarity in the range $[\varepsilon, \varepsilon+\Delta\varepsilon]$, and $\textit{QS}_{l}$ is calculated from pairs in $\mathcal{C}_k$ with a similarity in the range $[0, \varepsilon-\Delta\varepsilon)$. Then $\varepsilon$ is adjusted as follows.
\begin{itemize}
    \item If $\textit{QS}_h=\max\{\textit{QS}_{h},\textit{QS}_{l}, \textit{QS}_{m}\}$, $\varepsilon$ increases by $\Delta\varepsilon$;
    \item If $\textit{QS}_l=\max\{\textit{QS}_{h},\textit{QS}_{l}, \textit{QS}_{m}\}$, $\varepsilon$ decreases by $\Delta\varepsilon$;
    \item If $\textit{QS}_m=\max\{\textit{QS}_{h},\textit{QS}_{l}, \textit{QS}_{m}\}$, $\varepsilon$ is unchanged.
\end{itemize}
The first two cases leads to another iteration of calculating $\textit{QS}_{h}$, $\textit{QS}_{l}$, and $\textit{QS}_{m}$ using the newly updated $\varepsilon$, while the last case terminates the processing.
This iterative optimization of modularity~\cite{kim2009particle} has a relatively high time cost. We improve the cost by only updating $\varepsilon$ at $\textit{dt}_{k}$ (denoted as $\varepsilon_{k}$) once to "approach" the (local) optimal modularity of $\mathcal{C}_k$.
Although $\varepsilon_k$ is then used for clustering at $\textit{dt}_{k+1}$ instead of at  $\textit{dt}_{k}$, the quality of the clustering is generally still improved, as the clustering result evolves gradually. Specifically, $\varepsilon$ is still obtained by the iterative optimization at the first time step.

\begin{algorithm} [tb]
\small
\LinesNumbered
\caption{Evolutionary clustering (ECO)}\label{alg:Evolutionary_Clustering}
	\KwIn{a snapshot $\mathcal{O}_k$, a clustering result $\mathcal{C}_{k-1}$, a set of minimal groups $\bigcup_{s\in \mathcal{S}_{k-1}}\mathcal{M}_{k-1}(\tilde{s})$, thresholds $\delta$, $\rho$, $\alpha$, $\varepsilon_{k-1}$, $\Delta\varepsilon$, $\textit{minPts}$}
	\KwOut{a clustering result ${C}_k$ 
	}
  smooth $\mathcal{O}_k$ and get the set of adjustments $\mathcal{R}_{k_\mathit{opt}}$  \, /* Algorithm~\ref{alg:smoothing} */\\
  build a grid index according to $\varepsilon_{k-1}$ and $\mathcal{R}_{k_\mathit{opt}}$\\
 generate minimal groups based on $\mathcal{R}_{k_\mathit{opt}}$ \, \quad\,\,  \quad\, /* Algorithm~\ref{alg:micro_group} */\\
 cluster $\mathcal{R}_{k_\mathit{opt}}$ to get $\mathcal{C}_k$ \qquad \qquad \qquad \qquad\qquad \, \,\,/* DBSCAN */\\
 update $\varepsilon_{k-1}$ to $\varepsilon_k$ according to $\mathcal{C}_k$    \qquad  \qquad \, \\
map $c\in \mathcal{C}_k$ to $c'\in \mathcal{C}_{k-1}$ \qquad \quad \quad \qquad \qquad \, \, /*\, literature~\cite{kim2009particle}\,*/\\ 
 \Return{$\mathcal{C}_k$}
\end{algorithm}

\paragraph{\textbf{Grid index and minimal group based accelerations}}
As we set the grid cell width to $\frac{\varepsilon}{\sqrt{2}}$, each $o\in g$ is a core point if $\lvert g \rvert\geq \textit{minPts}$~\cite{gan2017dynamic}. Similarly, if $\lvert \mathcal{M}_k(s) \rvert\geq \rho$, $s$ is a core point.
These efficient checks accelerate the search for core points as well as DBSCAN. 
In Example~\ref{ex:solution} and given \textit{minPts}=3, $o_4$, $o_5$ and $o_6$ are core points due to $\lvert g_{44} \rvert=3$ and $o_1$ is core point due to $\lvert \mathcal{M}_2(o_1) \rvert=3$.

\paragraph{\textbf{Evolutionary clustering  of streaming trajectories}}
All pieces are now in place to present the algorithm for evolutionary clustering of streaming trajectories (ECO), shown in Algorithm~\ref{alg:Evolutionary_Clustering}. The sub-procedures in lines 1--5 are detailed in the previous sections. Note that, as we perform evolutionary clustering at each time step with an updated $\varepsilon$, a grid index is built at each time step once locations arrive. The time cost of this is neglible ~\cite{li2021trace}. Also note that, the grid index is built after smoothing, as the locations of trajectories are changed. Finally, we connect clusters in adjacent
time steps with each other (Line 6) as proposed in the literature~\cite{kim2009particle}. This mapping aims to find the evolving, forming, and dissolving relationships between $c'_k\in \mathcal{C}_{k-1}$ and $c'_k\in \mathcal{C}_{k}$. Building on Examples~\ref{ex:cluster} and~\ref{ex:evolutionary_clustering}, $c_1 \in \mathcal{C}_1$ evolves to $c_1 \in \mathcal{C}_2$ while $c_2 \in \mathcal{C}_1$ evolves to $c_2 \in \mathcal{C}_2$, and no clusters form or dissolve. The details of the mapping are available elsewhere~\cite{kim2009particle}.
The time complexity of ECO at time step $\textit{dt}_k$ is $O(\lvert \mathcal{O}_k \rvert^2)$.

\section{Experiments}\label{sec:experiments}
We report on extensive experiments aimed at achieving insight into the performance of ECO.
\subsection{Experimental Design}
\paragraph{\textbf{Datasets}.}
Two real-life datasets, Chengdu (CD) and Hangzhou (HZ), are used. The CD dataset is collected from 13,431 taxis over one day (Aug. 30, 2014) in Chengdu, China. It contains 30 million GPS records.
The HZ dataset is collected from 24,515 taxis over one month (Nov. 2011) in Hangzhou, China. It contains 107 million GPS records. The sample intervals of CD and HZ are 10s and 60s.

\paragraph{\textbf{Comparison algorithms and experimental settings.}} 
We compare with three methods: 
\begin{itemize}
    \item Kim-Han~\cite{kim2009particle} is a representative density-based evolutionary clustering method. It evaluates costs at the individual distance level to improve efficiency.
    \item DYN~\cite{yin2021multi} is the state-of-the-art evolutionary clustering. It adapts a particle swarm algorithm and random walks to improve result quality.
    \item OCluST~\cite{mao2018online} is the state-of-the-art for traditional clustering of streaming trajectories that disregards the temporal smoothness. It continuously absorbs newly arriving locations and updates representative trajectories maintained in a novel structure for density-based clustering.
\end{itemize}
In the experiments, we study the effect on performance of the parameters summarized in Table~\ref{tb:paramater}. $\Delta\varepsilon$ is set to 50 on both datasets.
The number of generations and the population size of DYN~\cite{yin2021multi} are both set to 20, in order to be able to process large-scale streaming data. Other parameters are set to their recommended values~\cite{yin2021multi, kim2009particle, mao2018online}. 
We compare with Kim-Han~\cite{kim2009particle} and DYN~\cite{yin2021multi} because, to the best of our knowledge, no other evolutionary clustering methods exist for trajectories. To adapt these two to work on clustering trajectories, we construct a graph on top of the GPS data by 
adding  an edge between two locations (nodes) $o.l$ and $o'.l$ if $d(o.l, o'.l)\leq \tau$, where $\tau=3000$ on CD and $\tau=1000$ on HZ. All algorithms are implemented in C++, and the experiments are run on a computer with an Intel Core i9-9880H CPU (2.30 GHz) and 32 GB memory.

\begin{table}
\centering\small
\caption{Parameter ranges and default values}\vspace{-4mm} 

\begin{tabular}{|c|c|}
 \bottomrule 
{ }\textbf{Parameter}&{ }\textbf{Range}\\ \hline
	{ }\textit{minPts} { }&{ }2, 4, 5, 6, 7, \textbf{8}, 10\\\hline
	{ }$\delta$ &{ } {200}, 300, \textbf{400}, 500, \textbf{600}, 700, 800 {}\\\hline
	{ } $\alpha$ &{ } 0.1, 0.3, 0.5, 0.7, \textbf{0.9}\\\hline
		{ } $\rho$ &{ } 4, 5, \textbf{6}, 7, 8\\
\bottomrule
\end{tabular}\label{tb:paramater}\vspace{-4mm}
\end{table}

\paragraph{\textbf{Performance metrics.}} 
We adopt \textit{modularity} $\textit{QS}$ (cf. Formula~\ref{f:qs}) to measure the quality of clustering.
We report $\textit{QS}$ as average values over all time steps.
The higher the $\textit{QS}$, the better the clustering. 
Moreover, we use \textit{normalized mutual information} $\textit{NMI}$~\cite{strehl2002cluster} to measure the similarity between two clustering results obtained at consecutive time steps.
\begin{equation}
    NMI= \frac{I(\mathcal{C}_{k-1}; \mathcal{C}_{k})}{\sqrt{H(\mathcal{C}_{k-1})\cdot H(\mathcal{C}_{k})}},
\end{equation}
where $I(\mathcal{C}_{k-1}; \mathcal{C}_{k})$ is the mutual information between clusters $\mathcal{C}_{k-1}$ and $\mathcal{C}_k$, $H(\mathcal{C}_{k})$ is the entropy of cluster $\mathcal{C}_k$.  
Specifically, the reported $\textit{NMI}$ values are averages over all time steps. Clusters evolve more smoothly if $\textit{NMI}$ is higher.
Finally, efficiency is measured as the \textit{average processing time} per record at each time step.

\begin{figure} \centering
\subfigcapskip=-5pt
  \subfigure[{CD dataset}]{      \includegraphics[scale=.17]{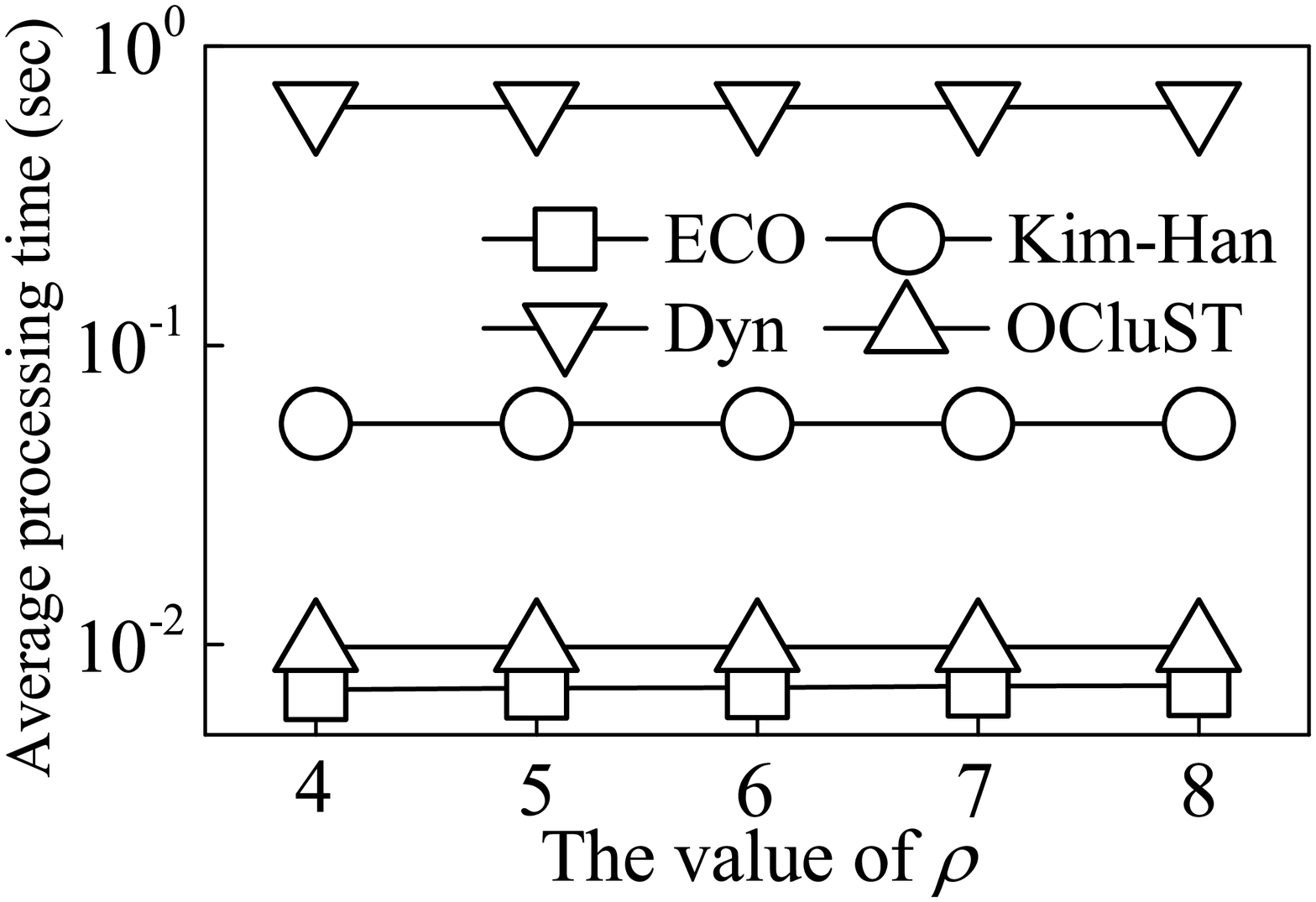}} 
 \subfigure[{HZ dataset}]{
\includegraphics[scale=.17]{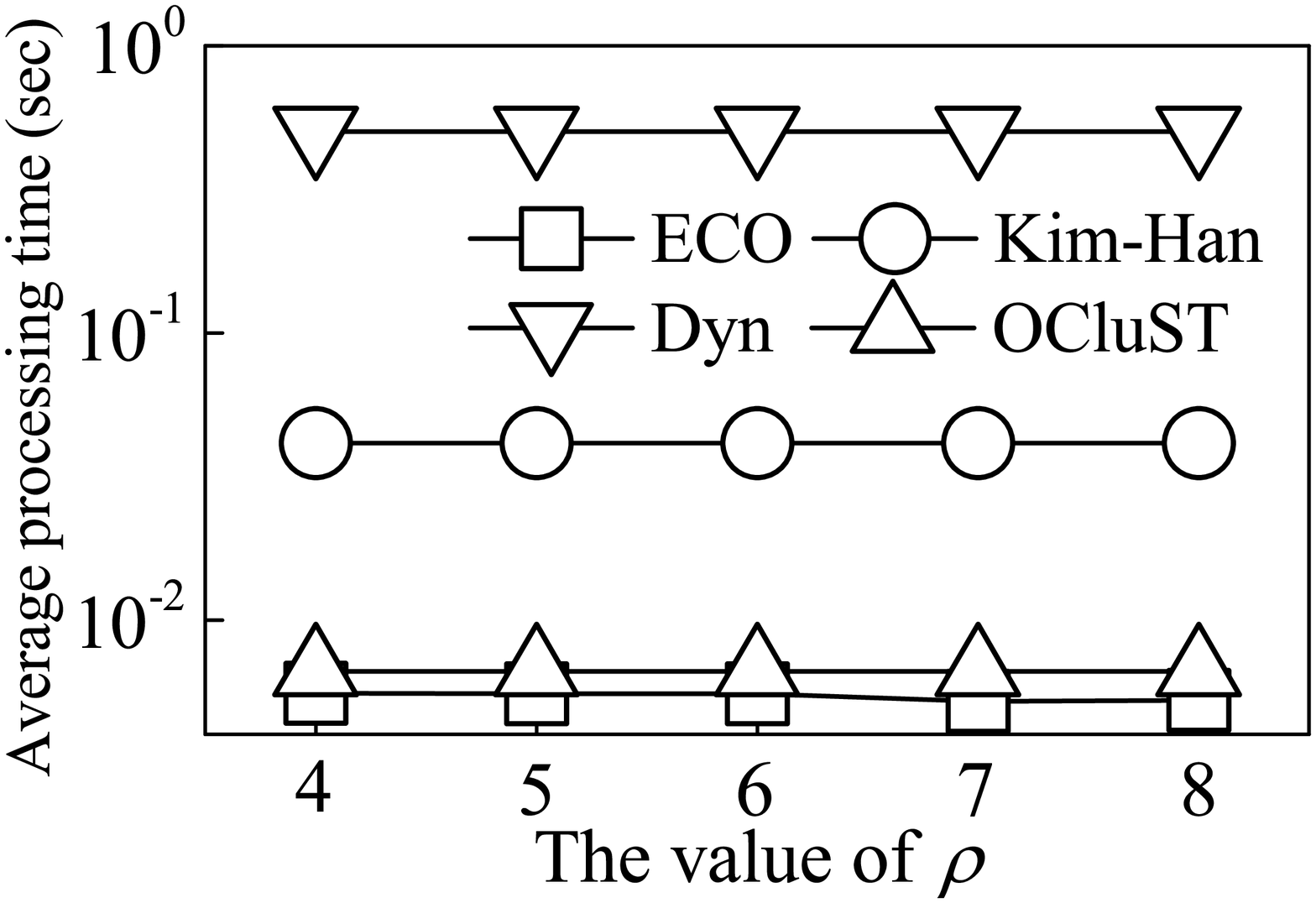}}
\subfigure[{CD dataset}]{      \includegraphics[scale=.17]{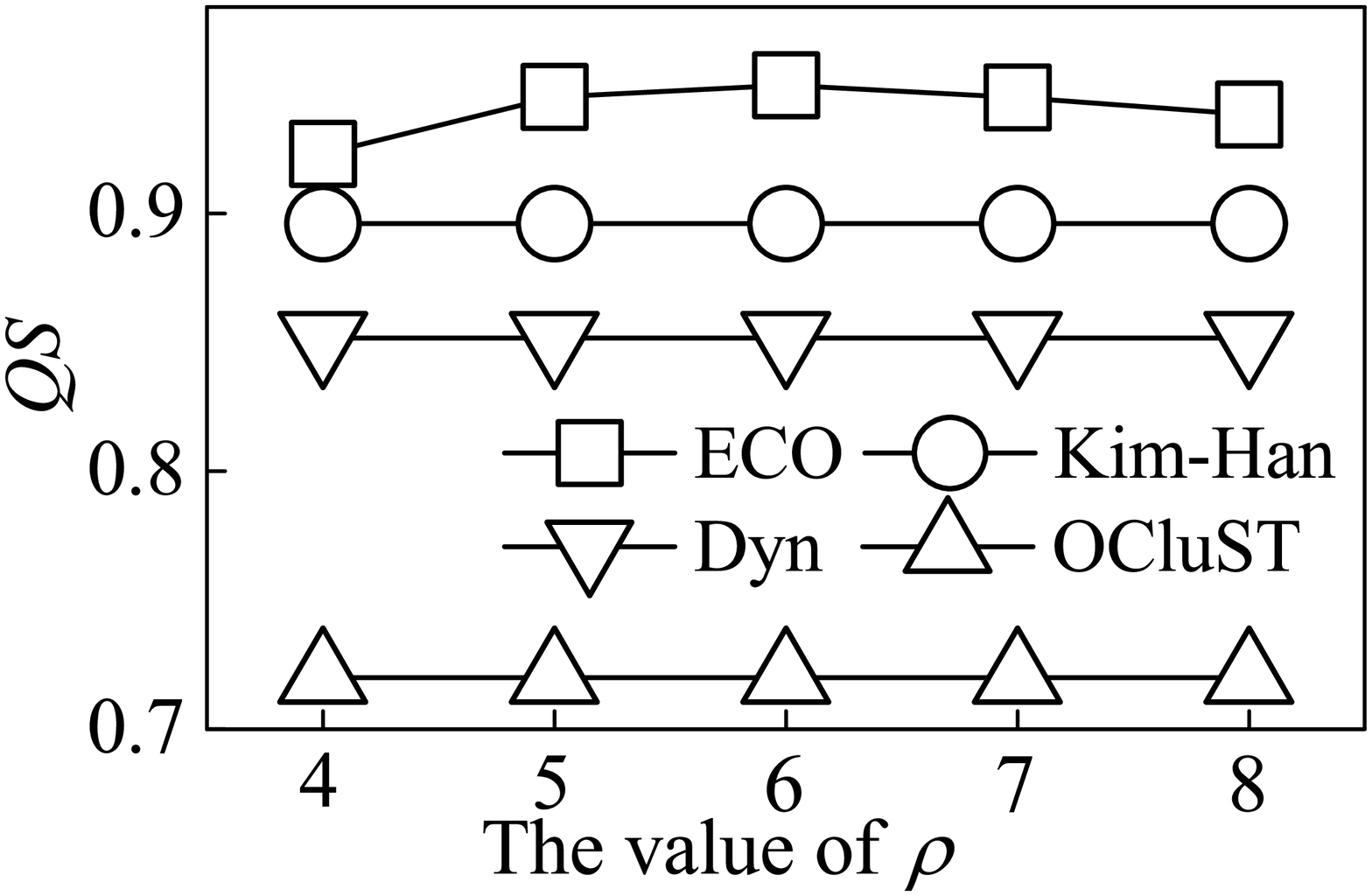}} 
 \subfigure[{HZ dataset}]{
\includegraphics[scale=.17]{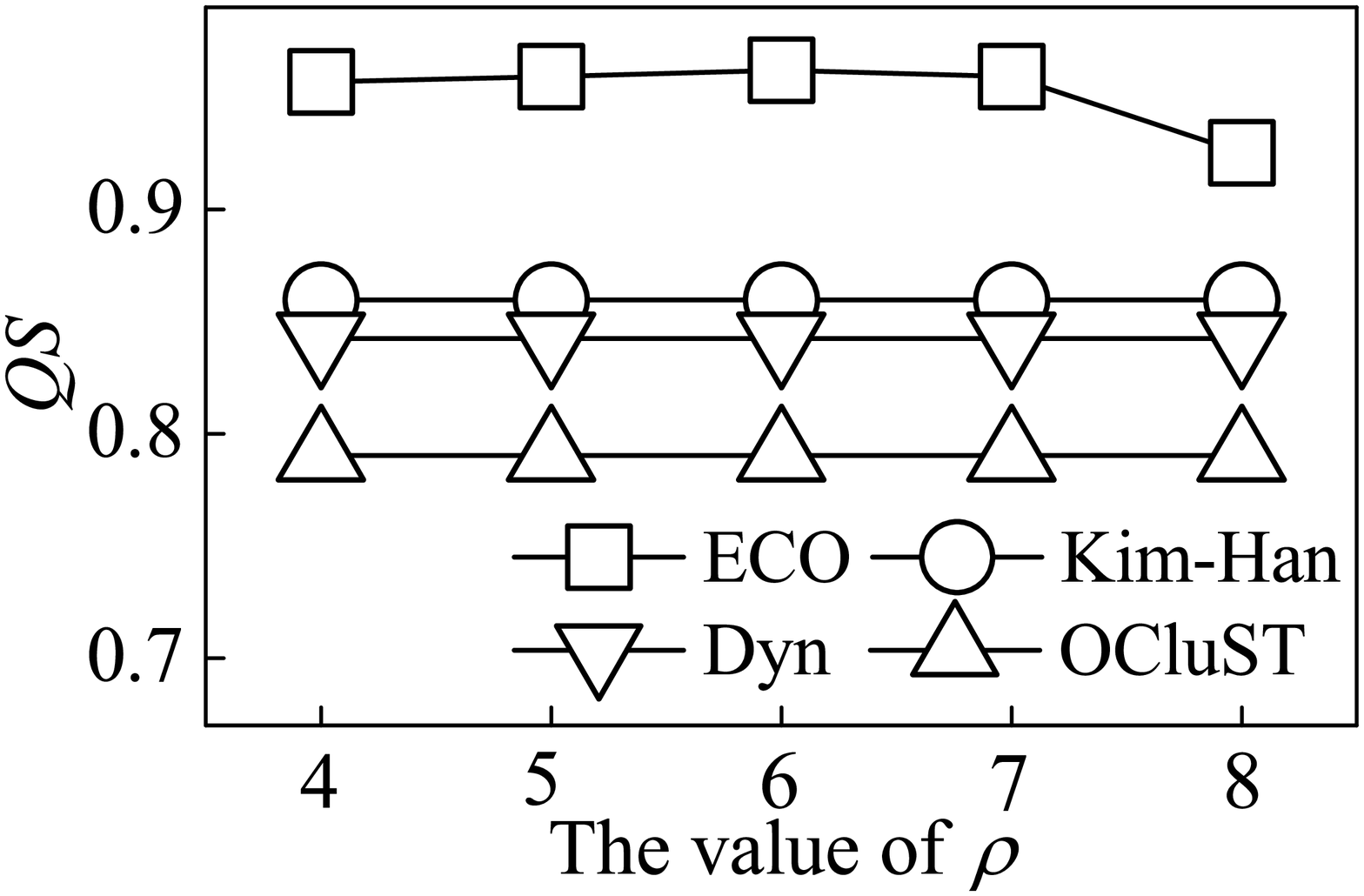}}
\subfigure[{CD dataset}]{      \includegraphics[scale=.17]{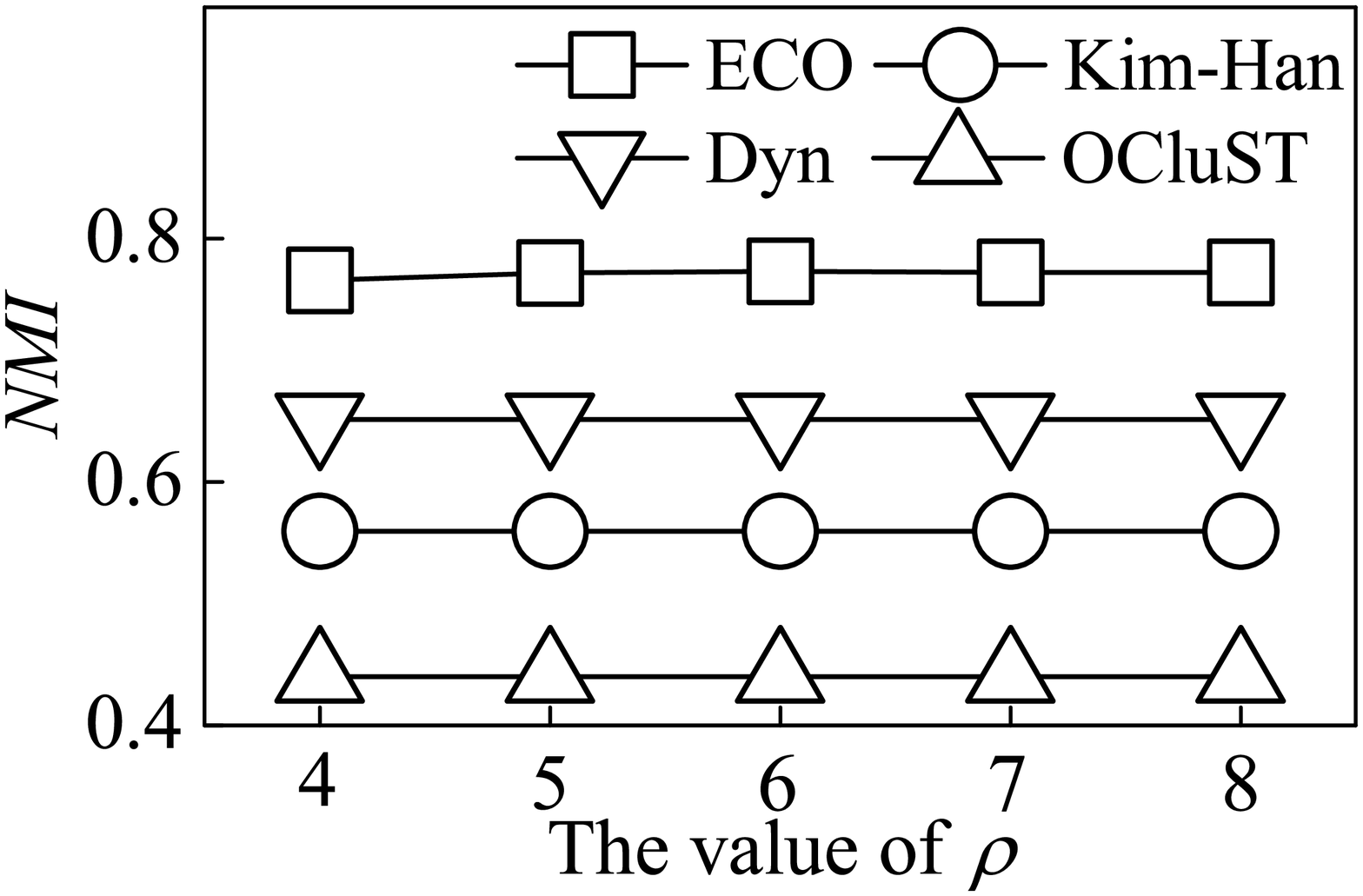}} 
 \subfigure[{HZ dataset}]{
\includegraphics[scale=.17]{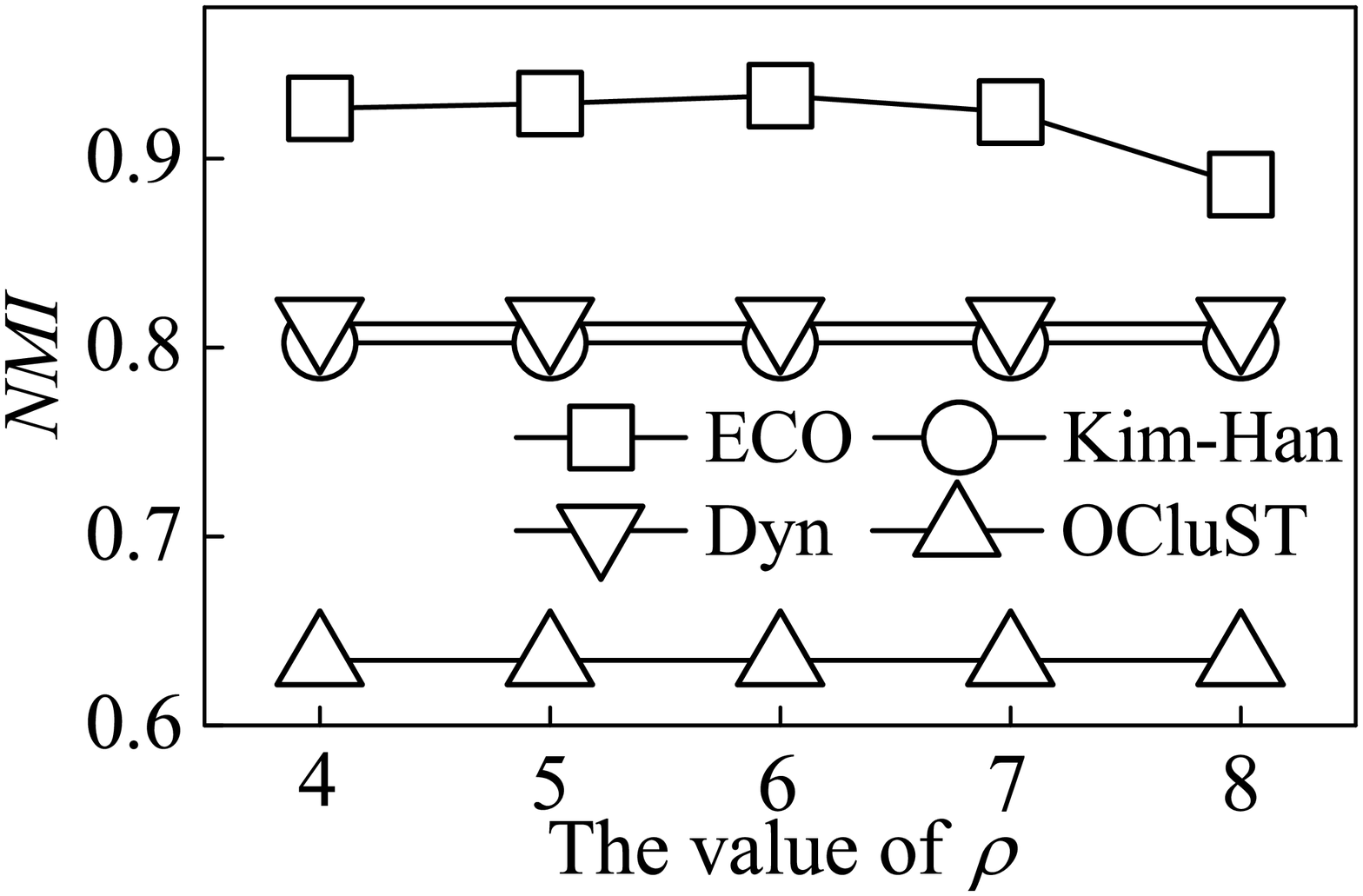}}
\vspace{-5mm}
\caption{Effects of varying $\mathbf{\rho}$}
\label{fig:rho}
\vspace{-5mm}
\end{figure}

\subsection{Comparison and Parameter Study}
We study the effect of parameters (summarized in Table~\ref{tb:paramater}) on the performance of the four methods.
\paragraph{\textbf{Effects of varying $\bm{\rho}$}}
Figure \ref{fig:rho} reports on the effect of varying $\delta$. First, ECO generally outperforms the baselines in terms of all performance metrics. In particular, ECO's average processing time is almost one order of magnitude lower than Kim-Han and almost two orders of magnitude better than  that of Dyn on both datasets. Moreover, ECO is even slightly more efficient than
OCluST, because the latter updates its data structure repeatedly for macro-clustering. 
The high efficiency and quality of ECO are mainly due to three reasons: (i) Except for the initialization, ECO excludes iterative processes and is accelerated by grid indexing and the proposed optimizing techniques; 
(ii) ECO takes into account temporal smoothness, which is designed specifically for trajectories (cf. Formula~\ref{f:F}); (iii) Locations with the potential to incur mutation of a clustering are adjusted to be closer to their neighbors that evolve smoothly, generally increasing the intra-density and decreasing the inter-density of clustering.

Second, we consider the effects of varying $\rho$. Figures~\ref{fig:rho}a and~\ref{fig:rho}b show that the average processing time is relatively stable. This is because the most time-consuming process in ECO is the clustering, which depends highly on the volume of data arriving at each time step. All four methods achieve higher efficiency on HZ than CD, due to CD's larger average data size of each time step. Figures~\ref{fig:rho}c--\ref{fig:rho}f show that as $\rho$ grows, $\textit{QS}$ and $\textit{NMI}$ first increase and then drop. 
On the one hand, trajectories with high local density are generally more stable, i.e., more likely to remain in the same cluster in adjacent time steps.
On the other hand, with a too large $\rho$, few minimal groups are generated, and thus few locations are smoothed. As the baselines  do not have parameter $\rho$, their performance is unaffected.

\paragraph{\textbf{Effects of varying $\bm{\delta}$}}
\begin{figure} \centering
\subfigcapskip=-5pt
  \subfigure[{CD dataset}]{      \includegraphics[scale=.17]{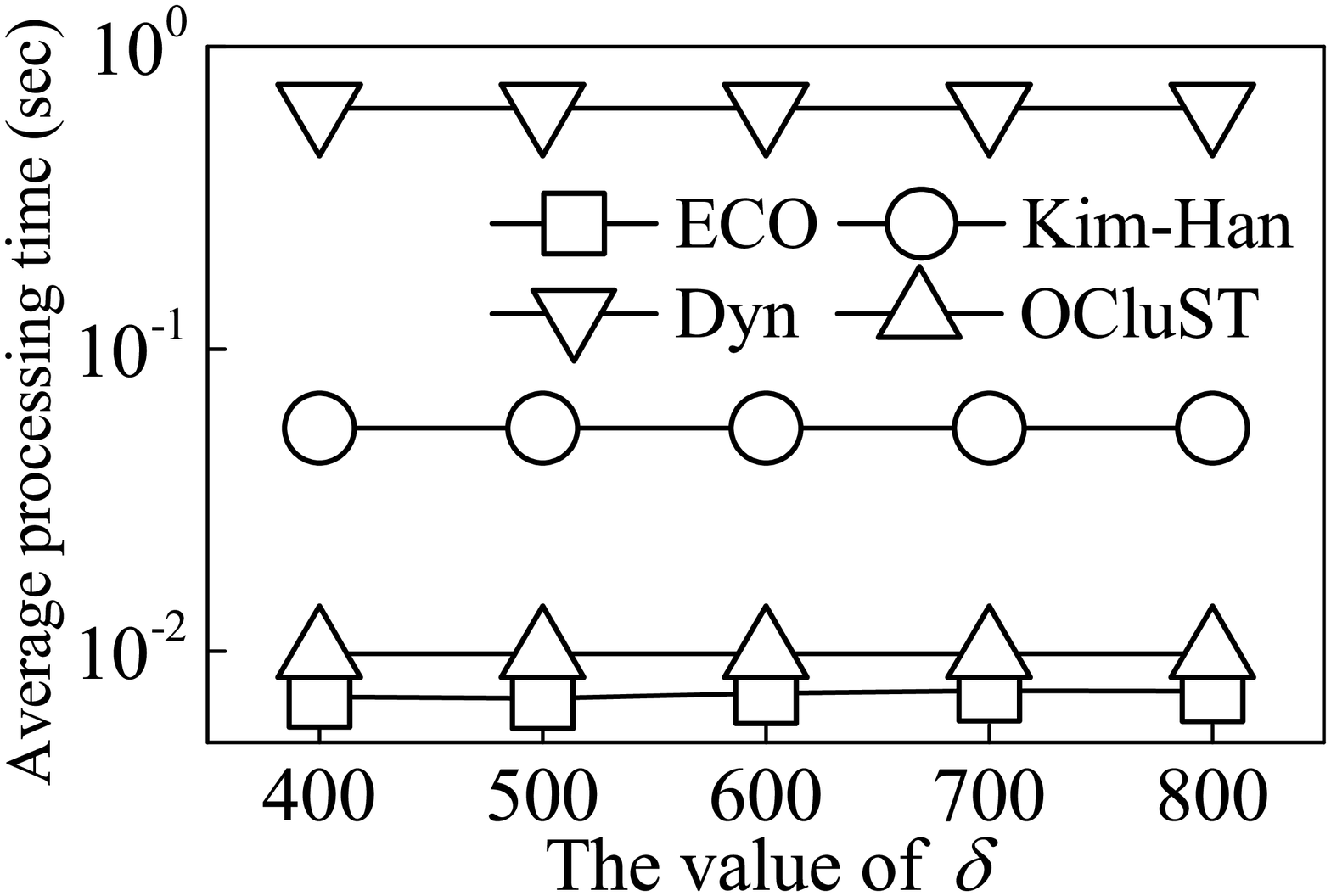}} 
 \subfigure[{HZ dataset}]{
         \includegraphics[scale=.17]{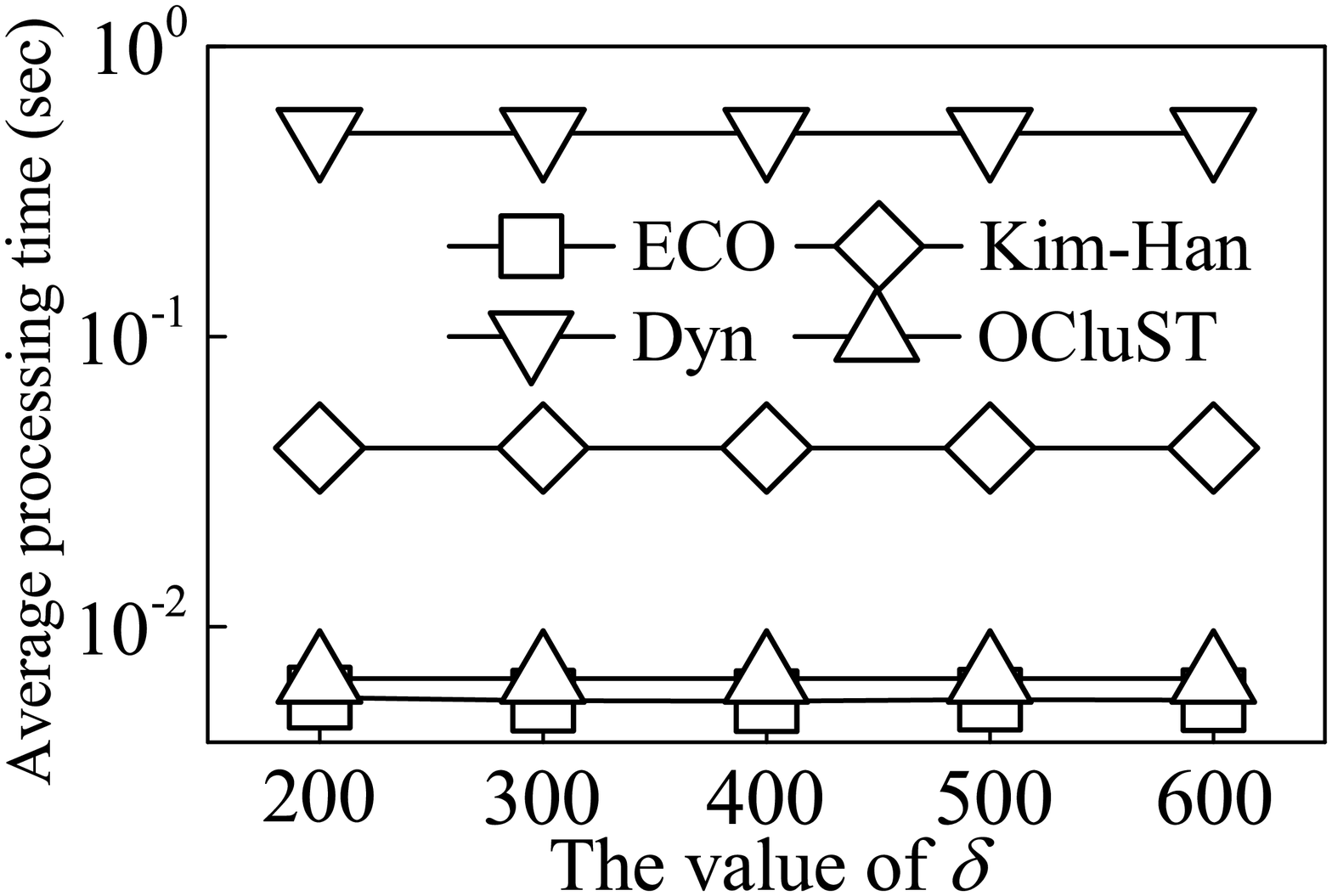}}
           \subfigure[{CD dataset}]{      \includegraphics[scale=.17]{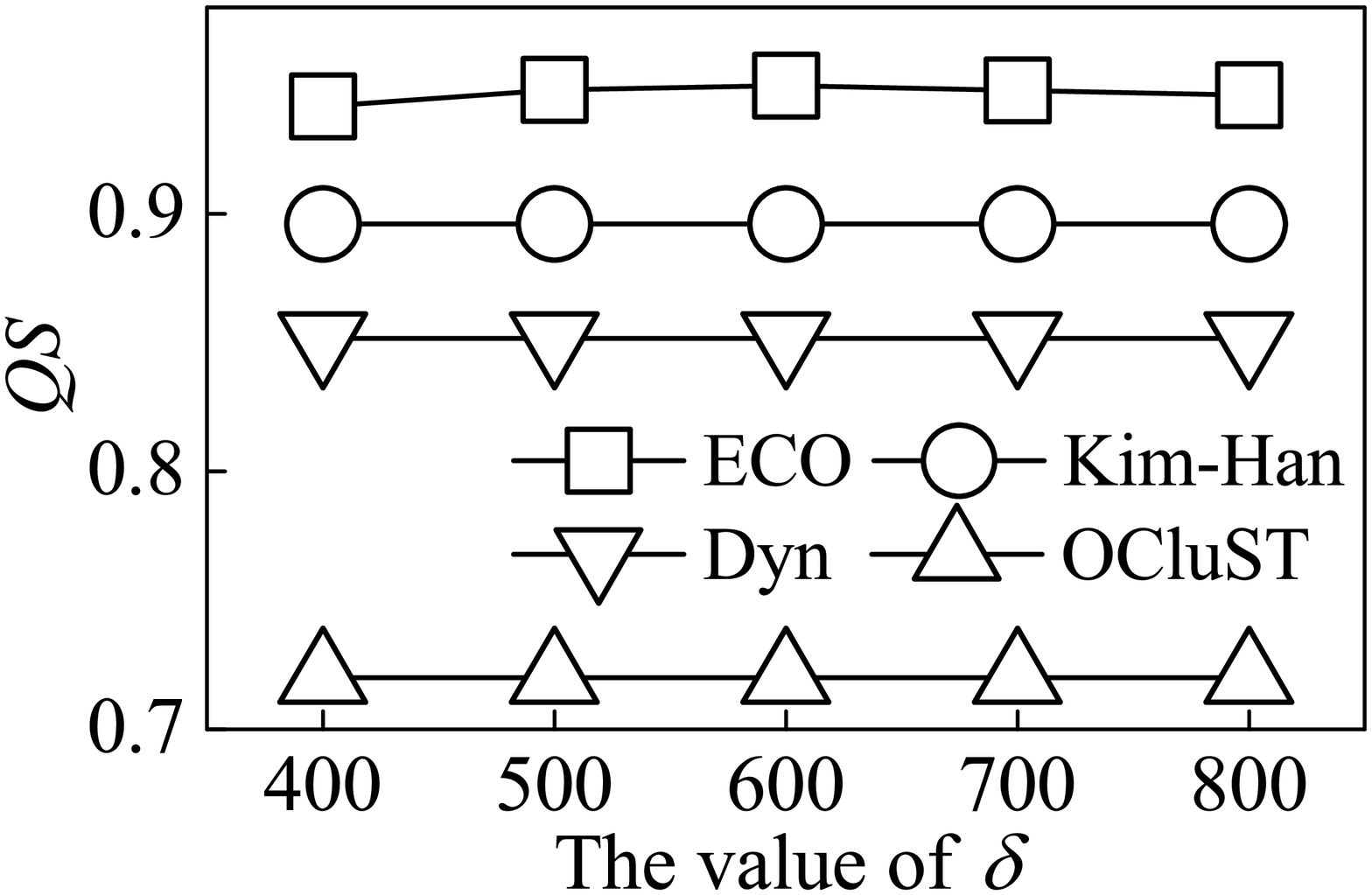}} 
 \subfigure[{HZ dataset}]{
         \includegraphics[scale=.17]{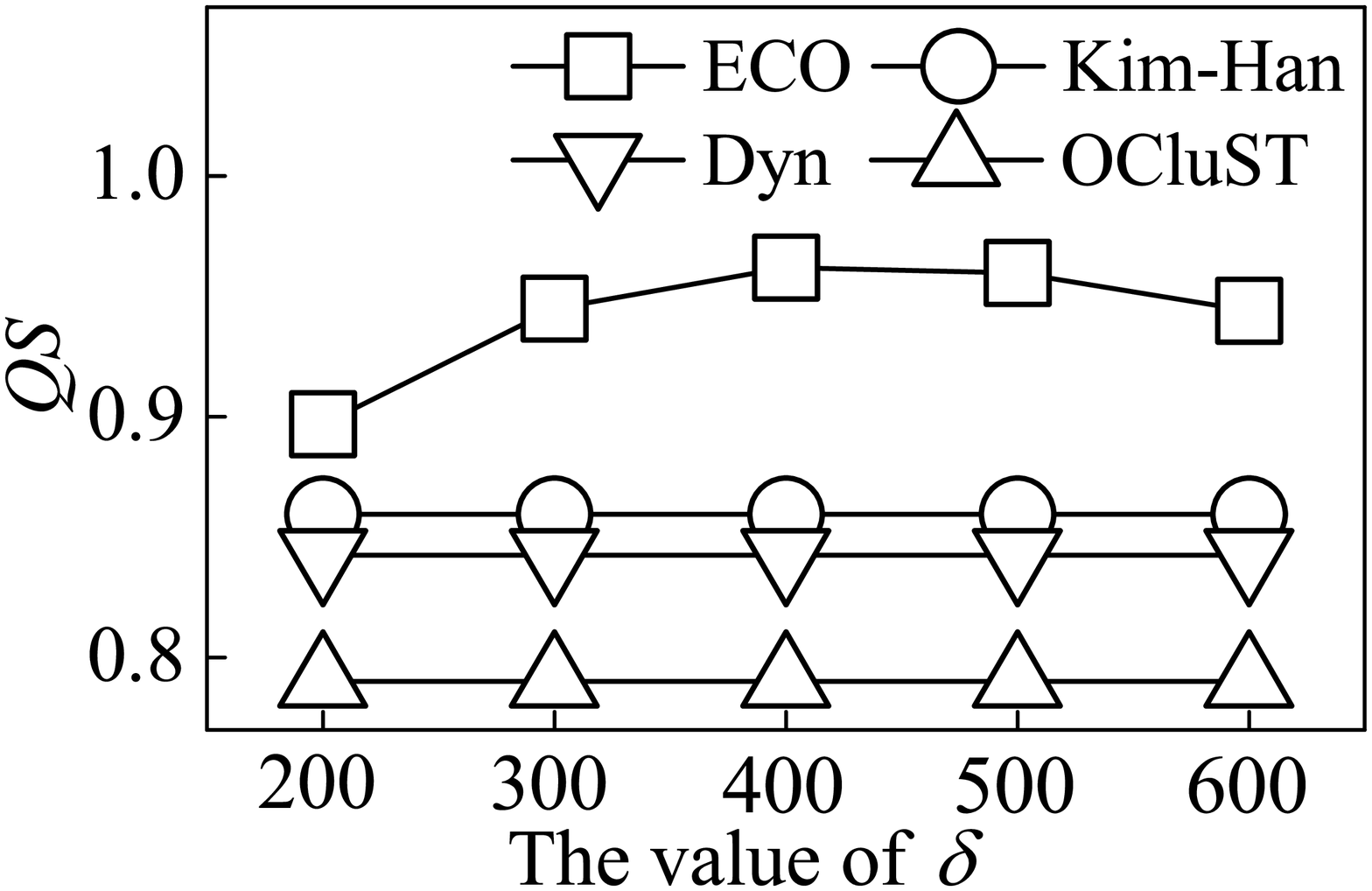}}
           \subfigure[{CD dataset}]{\includegraphics[scale=.17]{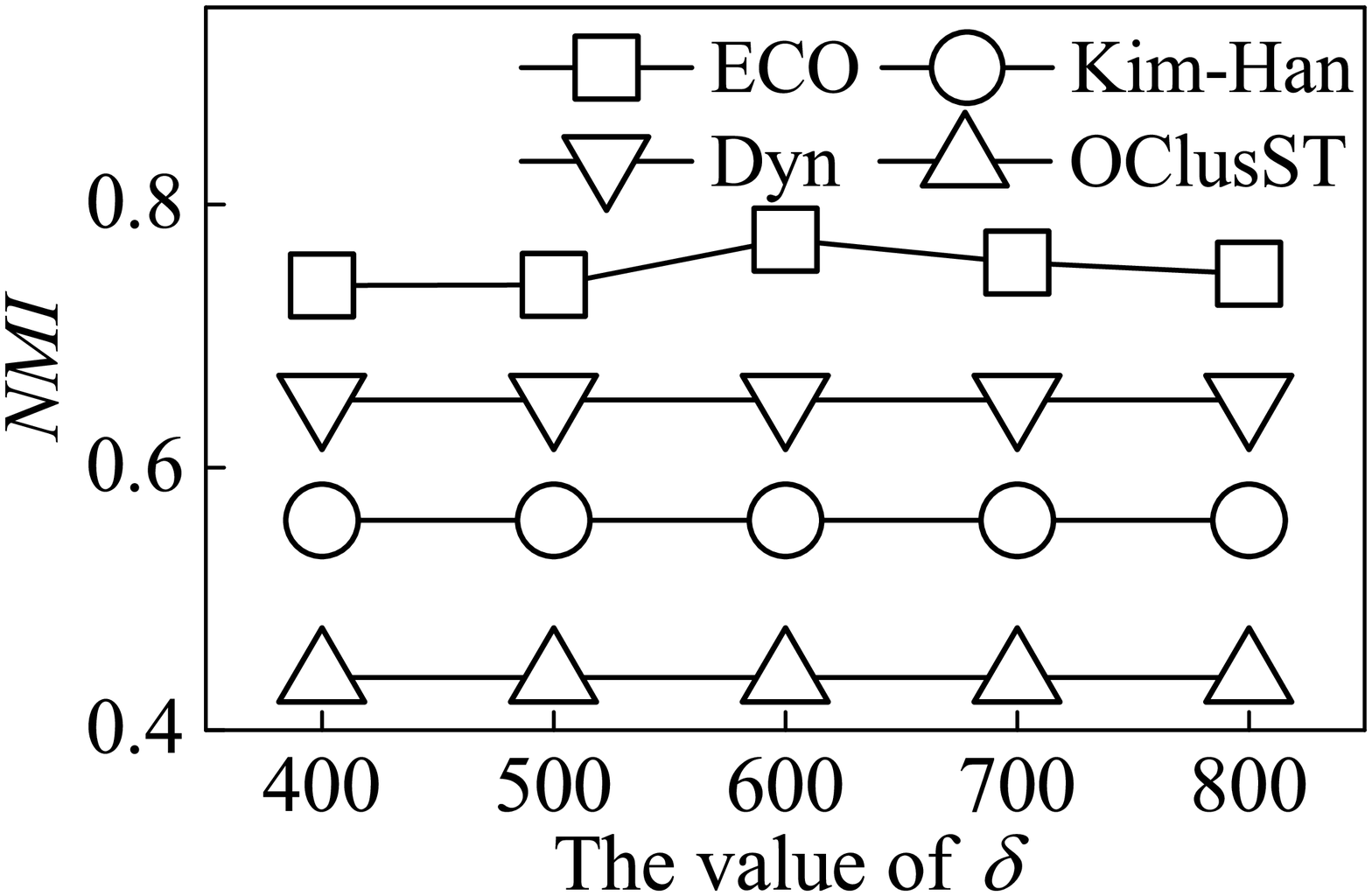}}
 \subfigure[{HZ dataset}]{
         \includegraphics[scale=.17]{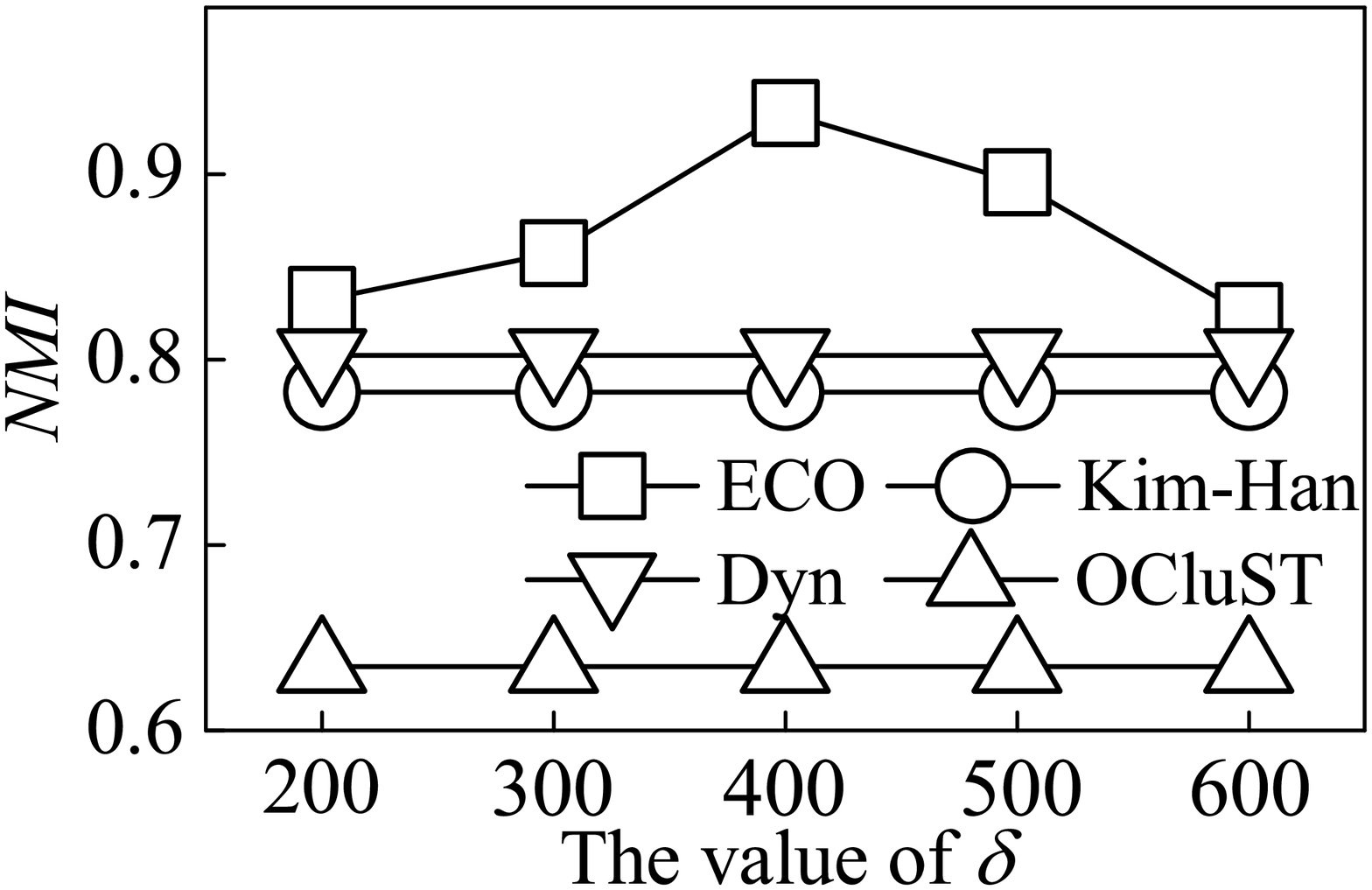}}
         \vspace{-5mm}
          \caption{Effects of varying $\mathbf{\delta}$}
	\label{fig:delta}
 \vspace{-6.5mm}
\end{figure}

Figure~\ref{fig:delta} reports in the effects of varying $\delta$. Specifically, ECO outperforms the baselines in terms of all metrics, and
the processing times of the methods remain stable, as shown in Figures~\ref{fig:delta}a and~\ref{fig:delta}b.  Figures~\ref{fig:delta}c--\ref{fig:delta}f indicate that as $\delta$ increases, both $\textit{QS}$ and $\textit{NMI}$ first increase and then drop. On the one hand, a too small $\delta$ leads to a small number of trajectories forming minimal groups and being smoothed; on the other hand, a too large $\delta$ also leads to few smoothing operations, as more pairs of trajectories $o$ and $o'$\,($o,o'\in \mathcal{M}_{k-1}(\tilde{s})$) satisfy $d(o.l,o'.l)\leq \delta$ at $\textit{dt}_k$.
 Since the baselines do not utilize parameter $\delta$, they are unaffected by variations in $\delta$.

\begin{figure} \centering
\subfigcapskip=-5pt
  \subfigure[{CD dataset}]{      \includegraphics[scale=.17]{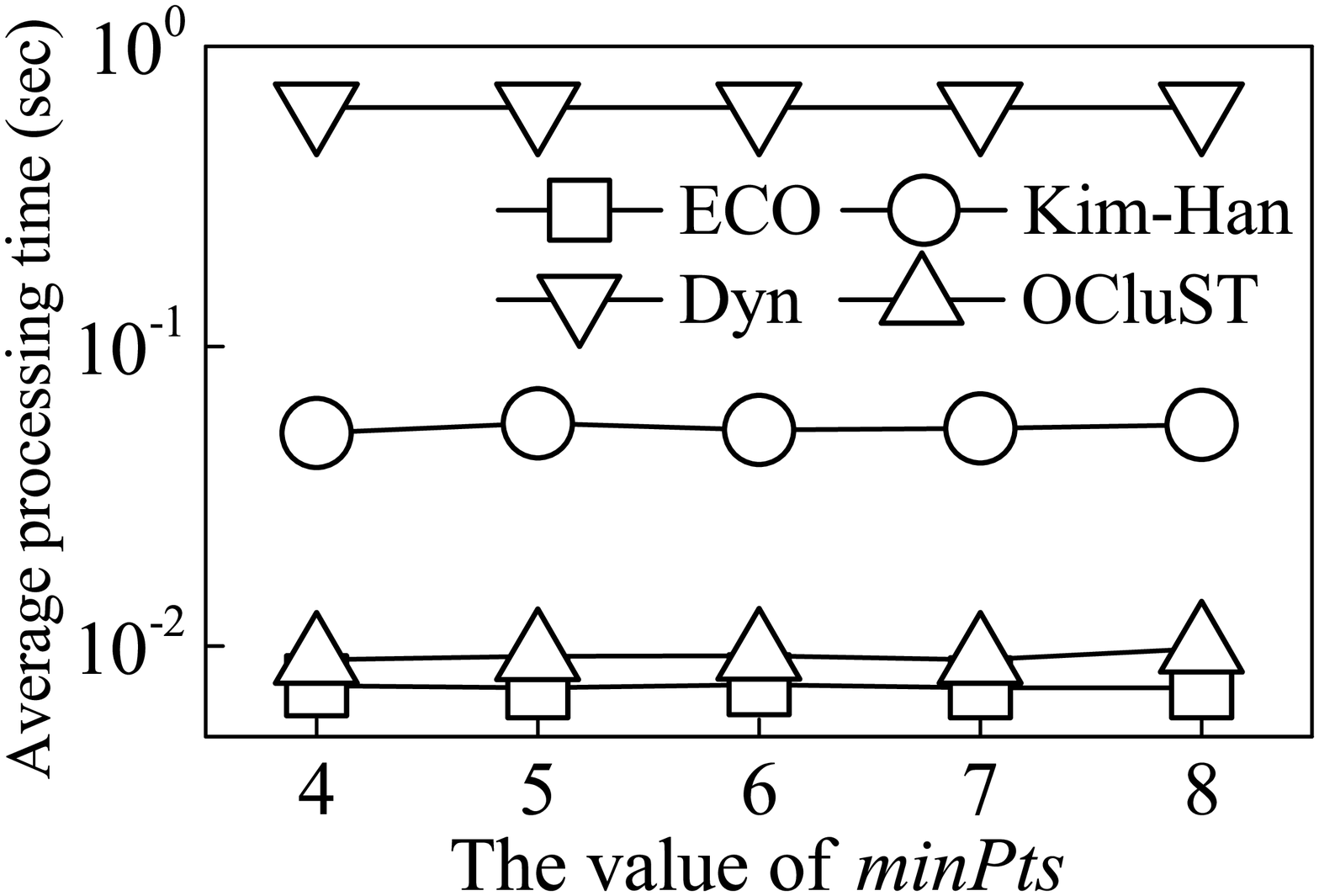}} 
 \subfigure[{HZ dataset}]{
\includegraphics[scale=.17]{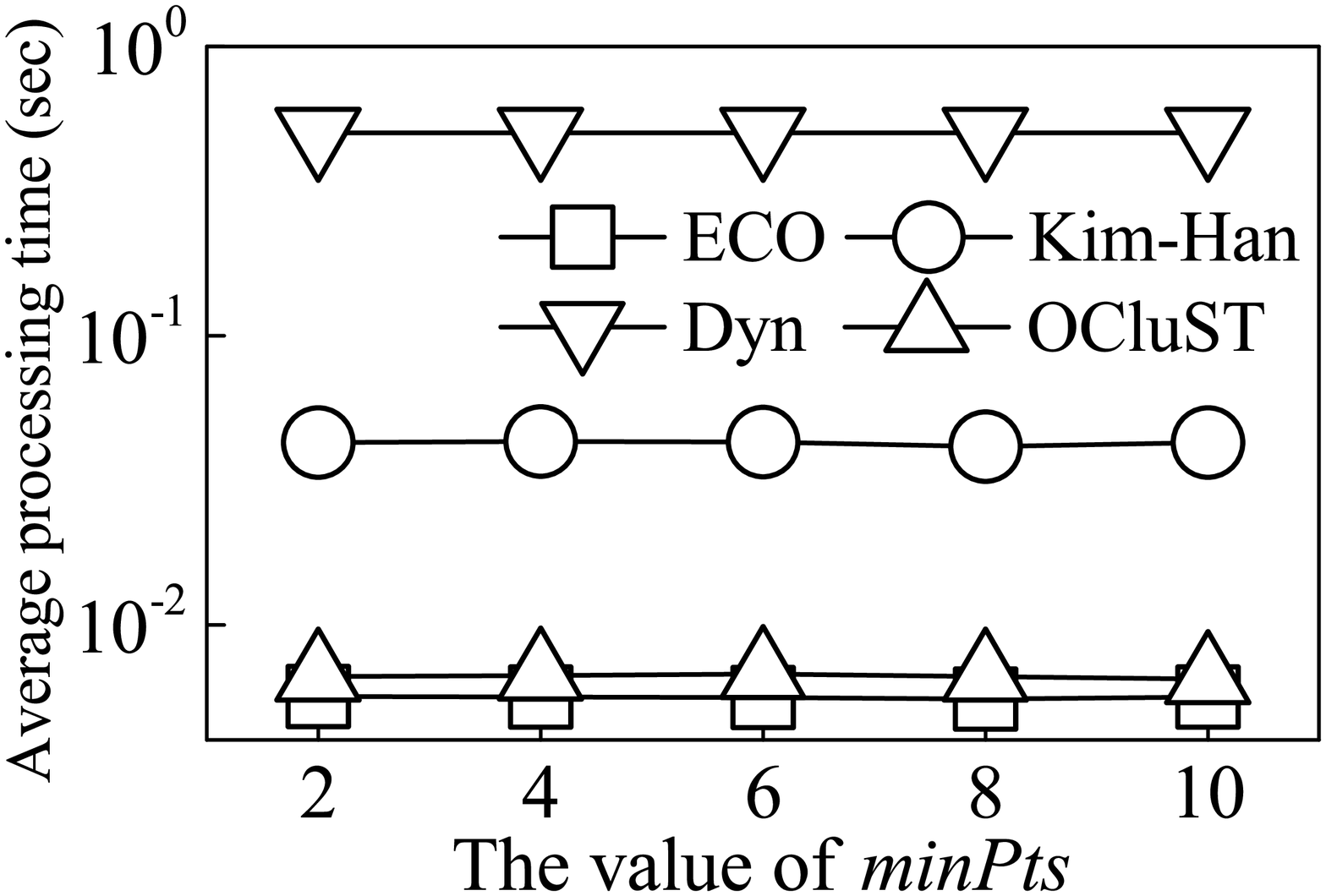}}
  \subfigure[{CD dataset}]{      \includegraphics[scale=.17]{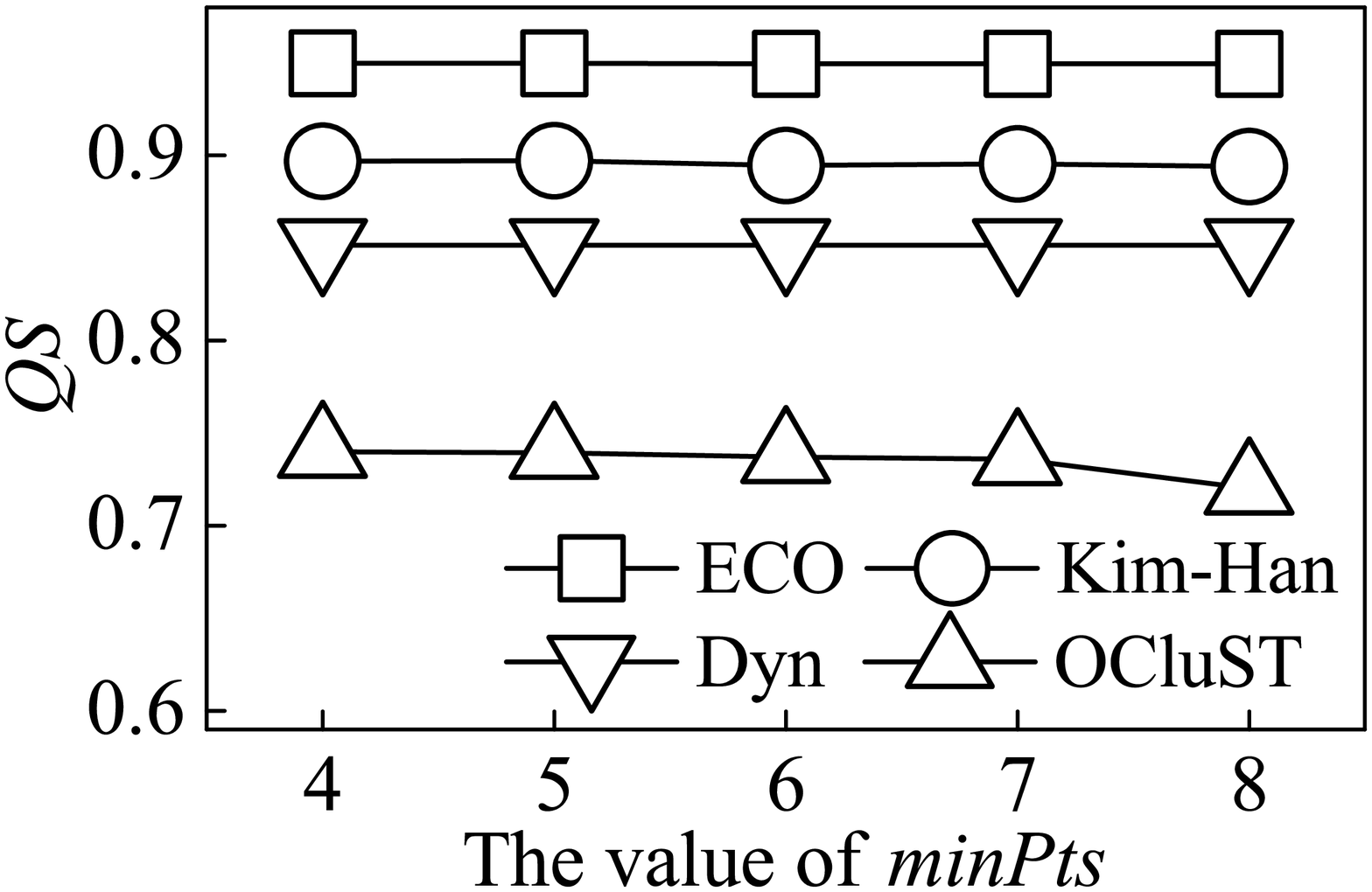}} 
 \subfigure[{HZ dataset}]{
\includegraphics[scale=.17]{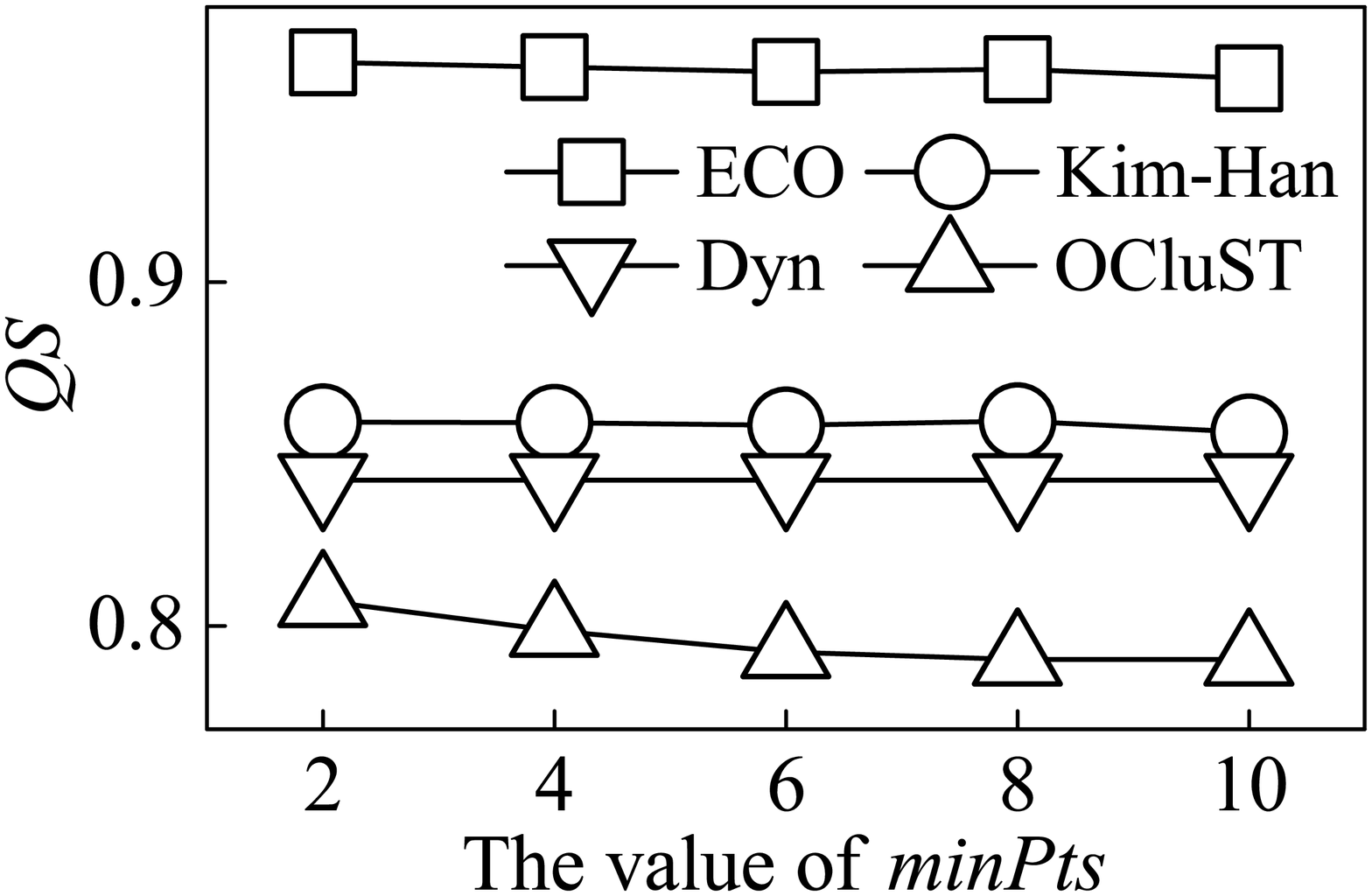}}
\subfigure[{CD dataset}]{      \includegraphics[scale=.17]{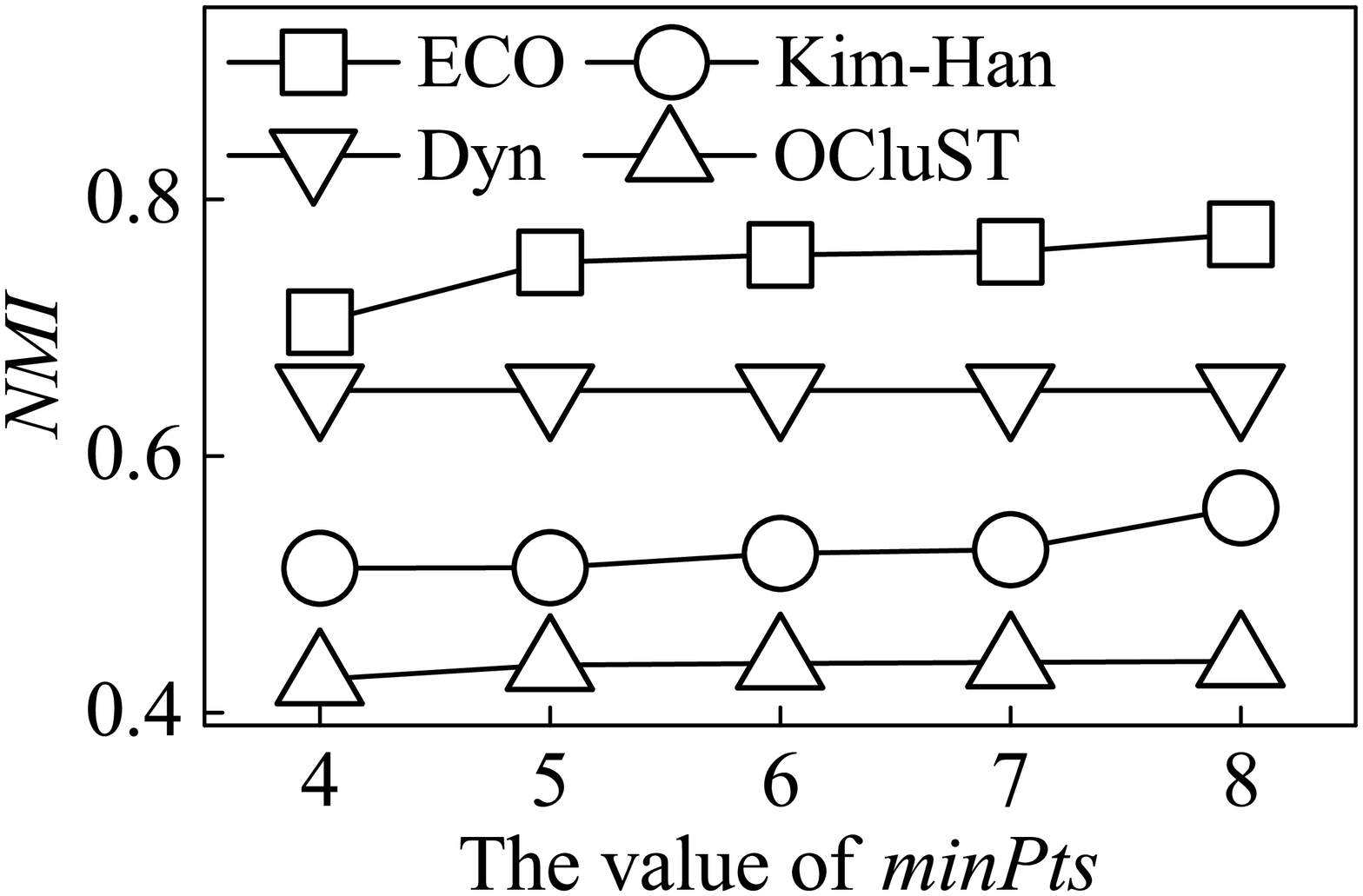}} 
 \subfigure[{HZ dataset}]{
\includegraphics[scale=.17]{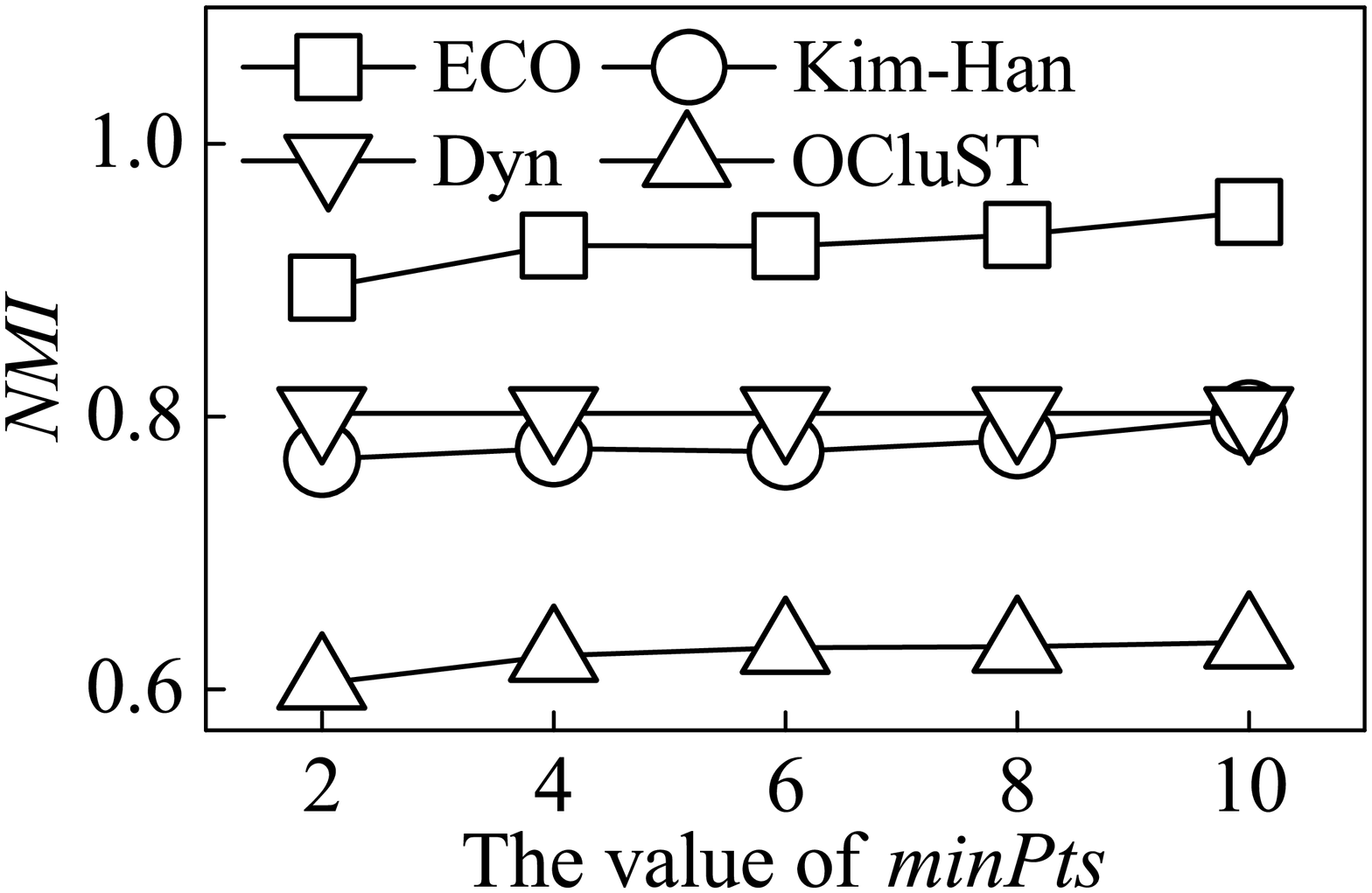}}
   \vspace{-5mm}
\caption{Effects of varying \textit{minPts}}	\label{fig:minPts}
   \vspace{-6mm}
\end{figure}

\paragraph{\textbf{Effects of varying \textit{minPts}}}
Figure~\ref{fig:minPts} shows the effect of $\textit{minPts}$ on clustering. When varying $\textit{minPts}$, the effects are similar to those seen when varying  $\rho$ and $\delta$. Figures~\ref{fig:minPts}c and \ref{fig:minPts}d show that $\textit{QS}$ of ECO, Dyn, and OCluST drop as \textit{minPts} increases. The findings indicate that the average distance between trajectories in different clusters decreases with \textit{minPts}. Assume that $o$ is a core point when \textit{minPts} is small and that $o'$ is a border point that is density reachable from $o$.
As \textit{minPts} increases, $o$ may no longer be a core point. In this case, $o'$ and $o$ may be density reachable from different core points and may thus be in different clusters, even if $d(o.l,o'.l)\leq \varepsilon$. As a result, the distances between trajectories in different clusters decrease. 
Figures~\ref{fig:minPts}e and~\ref{fig:minPts}f show that \textit{NMI} increases with \textit{minPts} for ECO, Dyn, and OCluST. The findings suggest that with a smaller \textit{minPts}, the trajectories at the "border" of a cluster are more likely to shift between being core points and being non-core points over time. In this case, clusters fluctuate more between consecutive time steps for a smaller \textit{minPts}. 
As Dyn adopts particle swarm clustering instead of density based clustering,  Dyn is unaffected by \textit{minPts}.

\begin{figure} \centering
\subfigcapskip=-5pt
  \subfigure[{CD dataset}]{      \includegraphics[scale=.17]{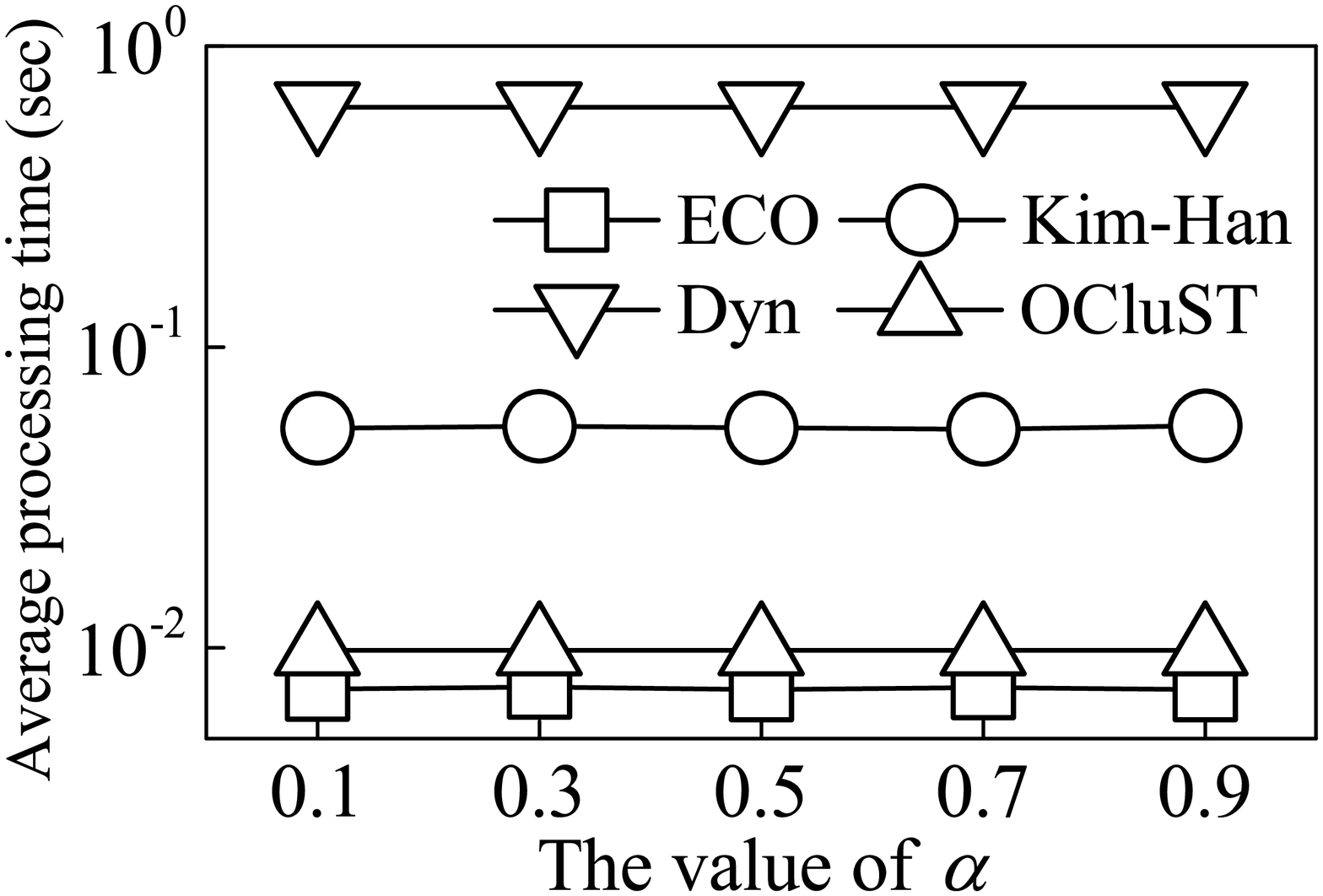}} 
 \subfigure[{HZ dataset}]{
\includegraphics[scale=.17]{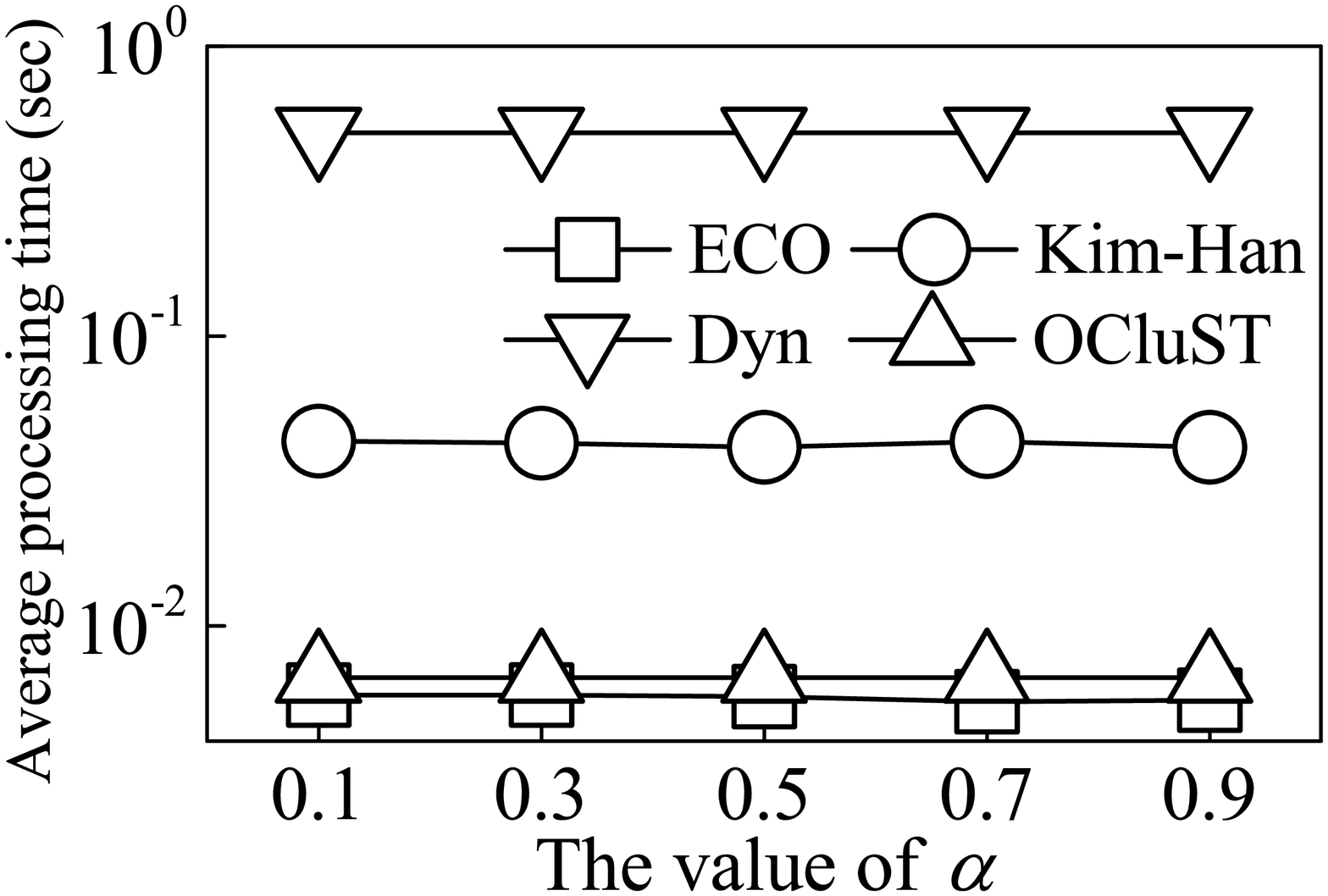}}
\subfigure[{CD dataset}]{      \includegraphics[scale=.17]{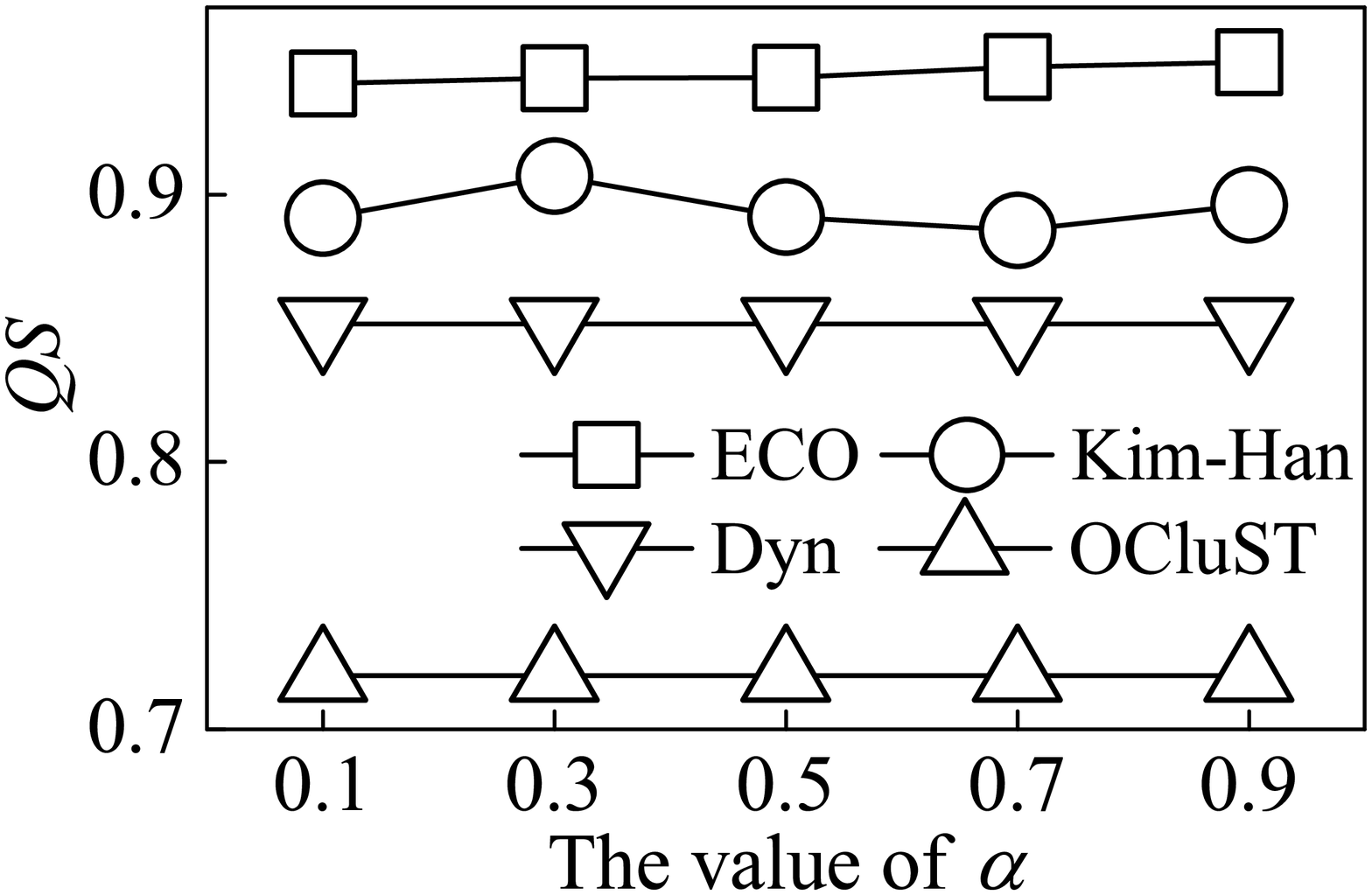}} 
 \subfigure[{HZ dataset}]{
\includegraphics[scale=.17]{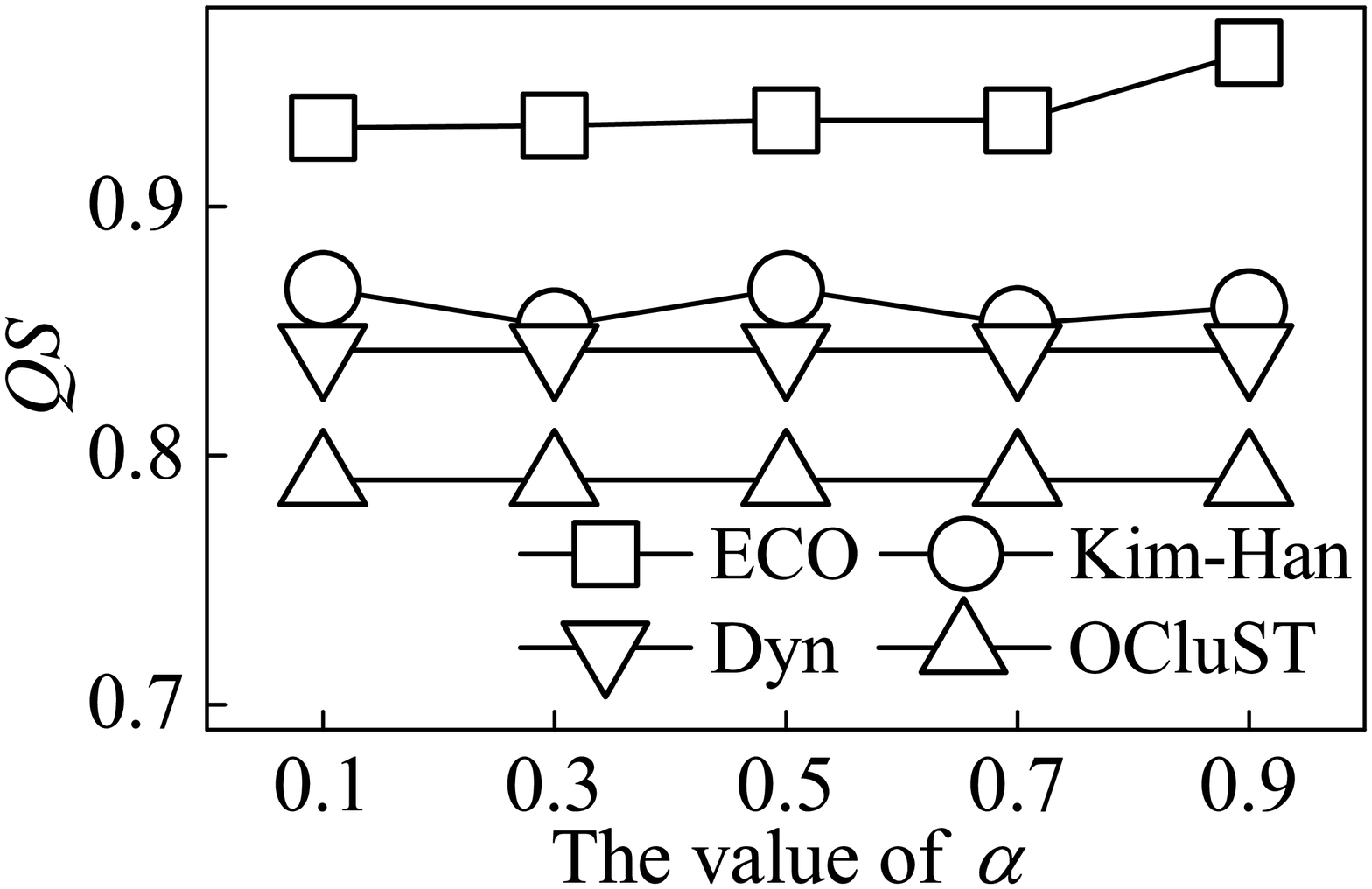}}
\subfigure[{CD dataset}]{      \includegraphics[scale=.17]{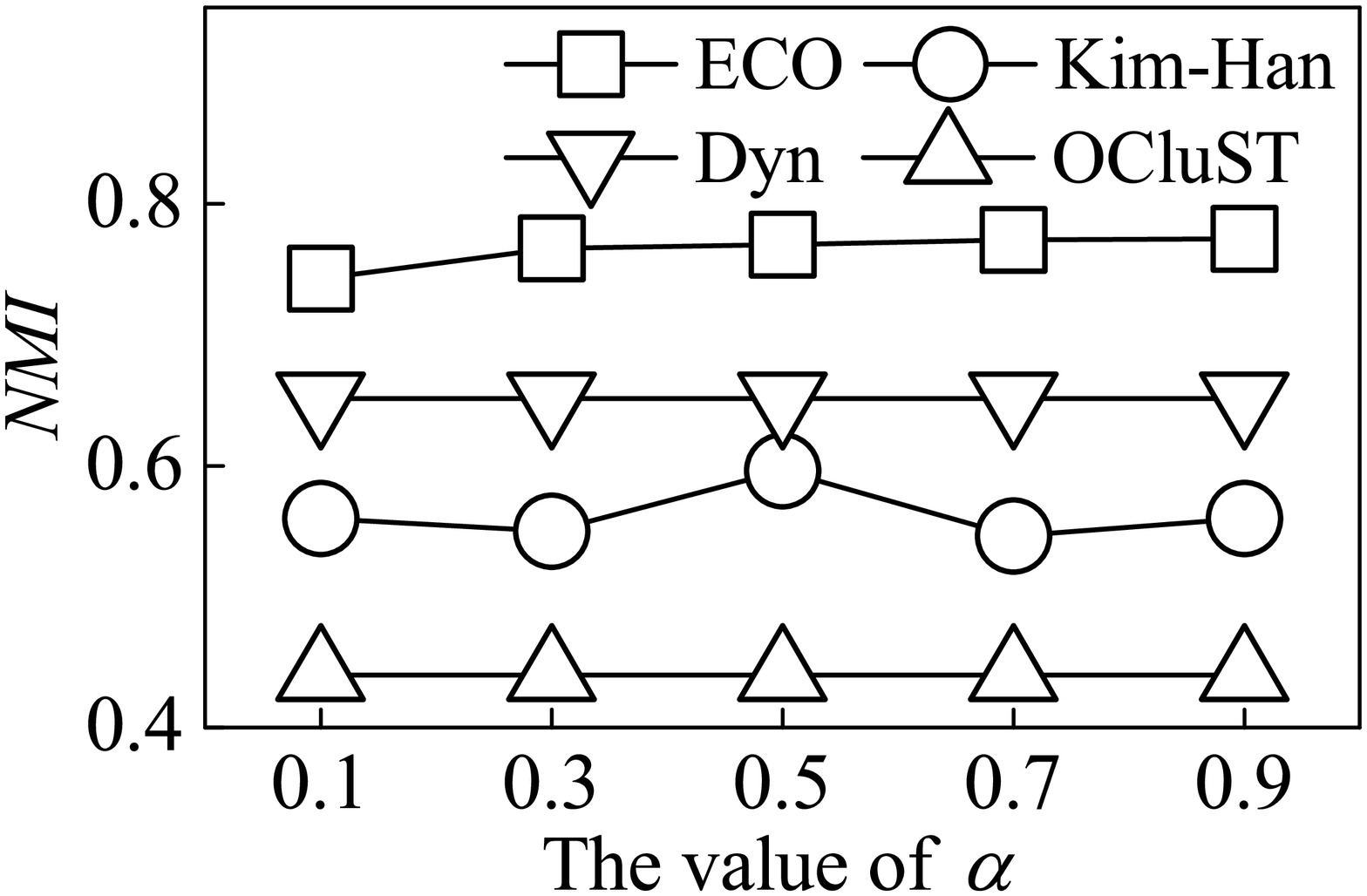}} 
 \subfigure[{HZ dataset}]{
\includegraphics[scale=.17]{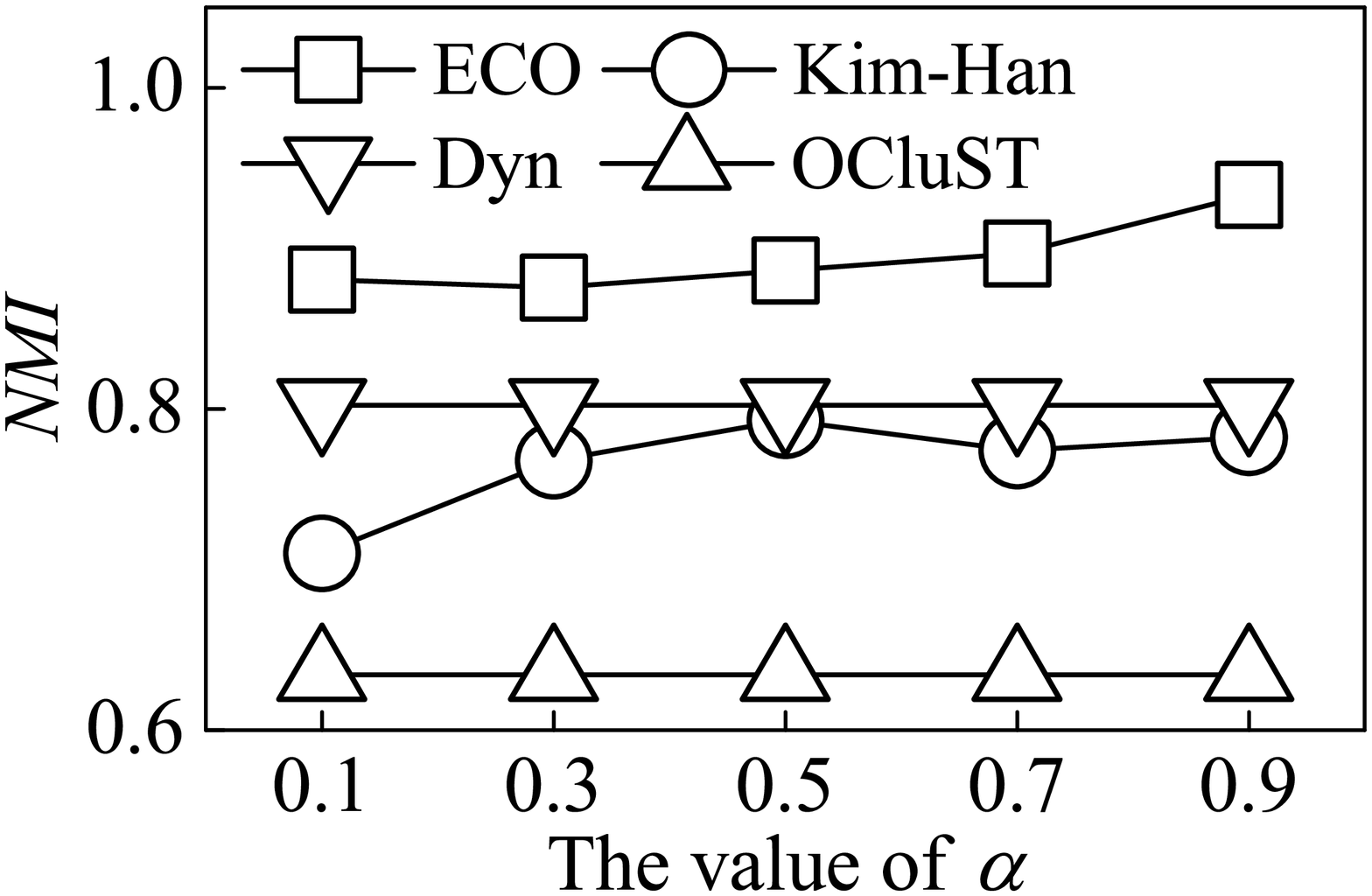}}
   \vspace{-5mm}
\caption{Effects of varying $\alpha$}	\label{fig:alpha}
   \vspace{-7mm}
\end{figure}

\paragraph{\textbf{Effects of varying $\bm{\alpha}$}}
Figure \ref{fig:alpha} shows the effects of varying $\alpha$. First, ECO always achieves the best performance among all methods. Second, all methods exhibit stable performance when varying $\alpha$. Third, both \textit{QS} and \textit{NMI} of ECO increase with $\alpha$. On the one hand, according to Formula~\ref{f:F_fixedcp}, a larger $\alpha$ generally leads to a smaller distance between two trajectories in the same cluster. On the other hand, Formula~\ref{f:F_fixedcp} reduces the historical cost as $\alpha$ increases, which renders the evolution of clusters more smooth.

\subsection{Scalability}
To study the scalability, we vary the data size from 20\% to 100\%, which is done by randomly sampling moving objects by their IDs. The results are reported in Figure~\ref{fig:scalability}. First,  ECO achieves the highest efficiency for large datasets, but is less efficient than OCluST for small datasets. This is because the locations of trajectories  are generally distributed uniformly when data size is small. In this case, the grid index becomes less useful. As expected, the processing times of all methods increase with the dataset size. 

Second, \textit{QS} improves with the data size for all methods, with ECO always being best. As illustrated above, data distribution is generally non-uniform  for a 100\% dataset and becomes increasingly uniform as the data size decreases. Thus, the average distances between any pair of trajectories in the same cluster become smaller as data size increases, resulting in a larger \textit{QS}.

Third, ECO achieves the highest \textit{NMI}, which increases with the data size. This is mainly because fewer trajectories are able to form minimal groups and be subjected to smoothing when the average distances between any pair of trajectories increases.

\begin{figure} \centering
\subfigcapskip=-5pt
  \subfigure[{CD dataset}]{      \includegraphics[scale=.17]{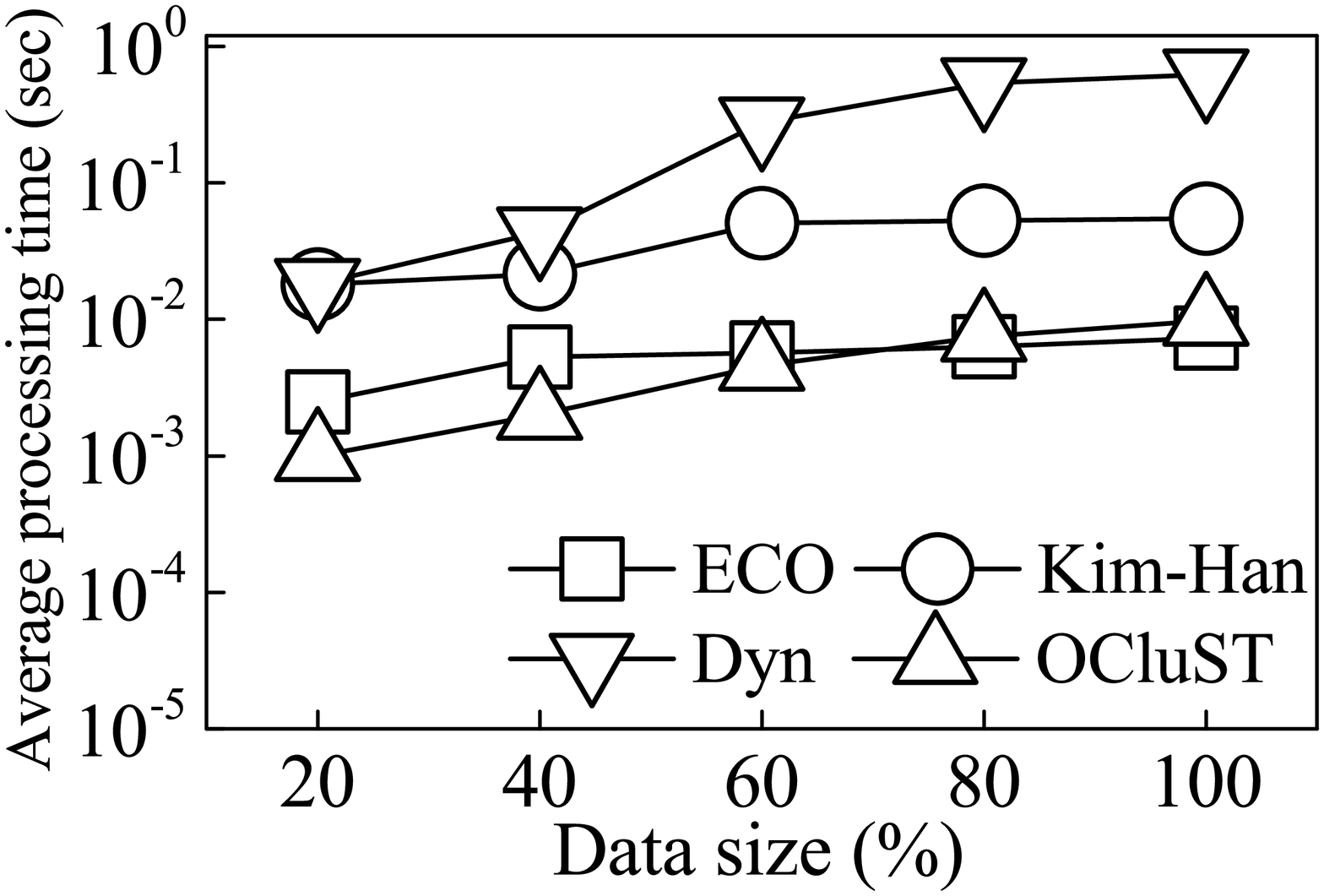}} 
 \subfigure[{HZ dataset}]{
\includegraphics[scale=.17]{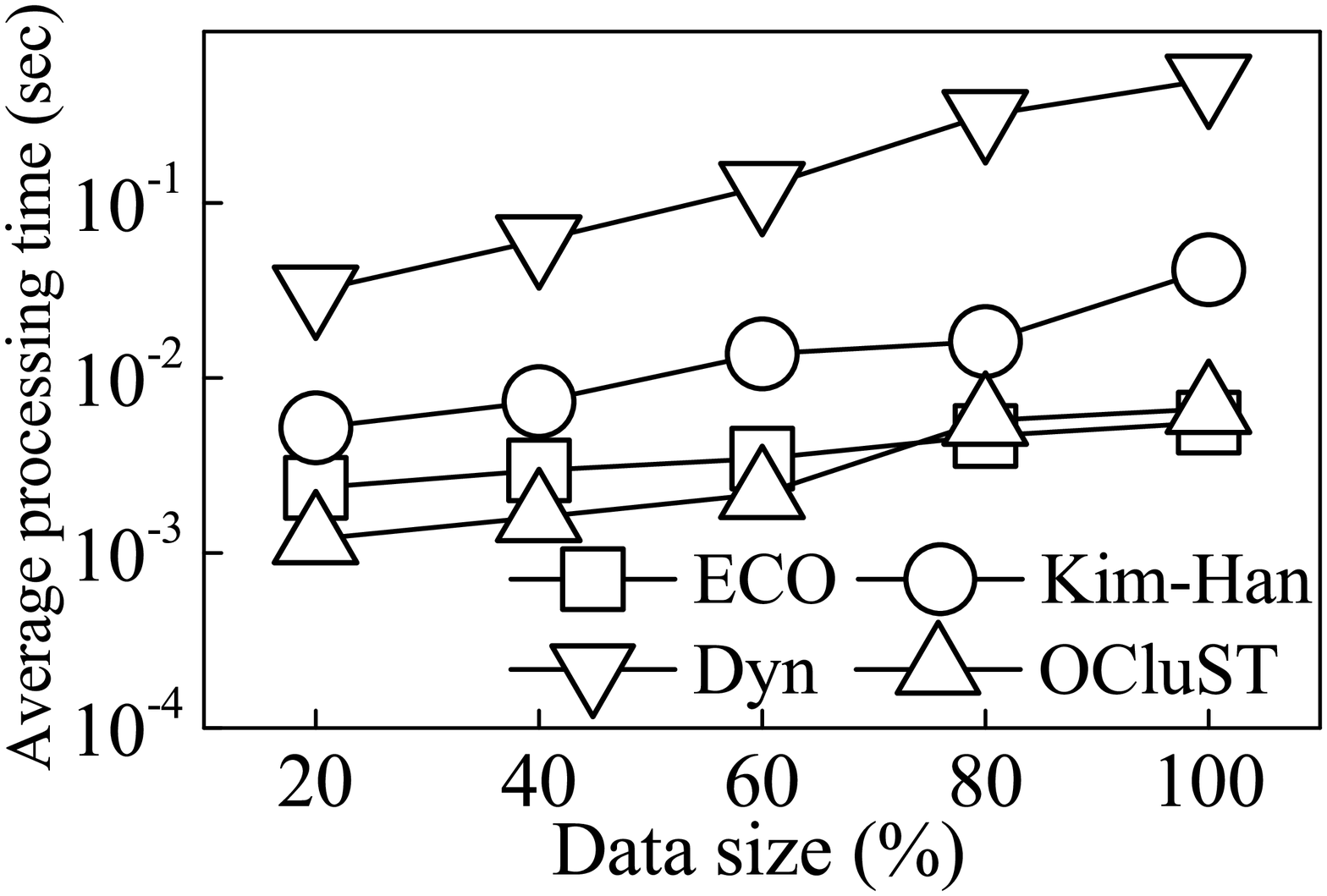}}
  \subfigure[{CD dataset}]{      \includegraphics[scale=.17]{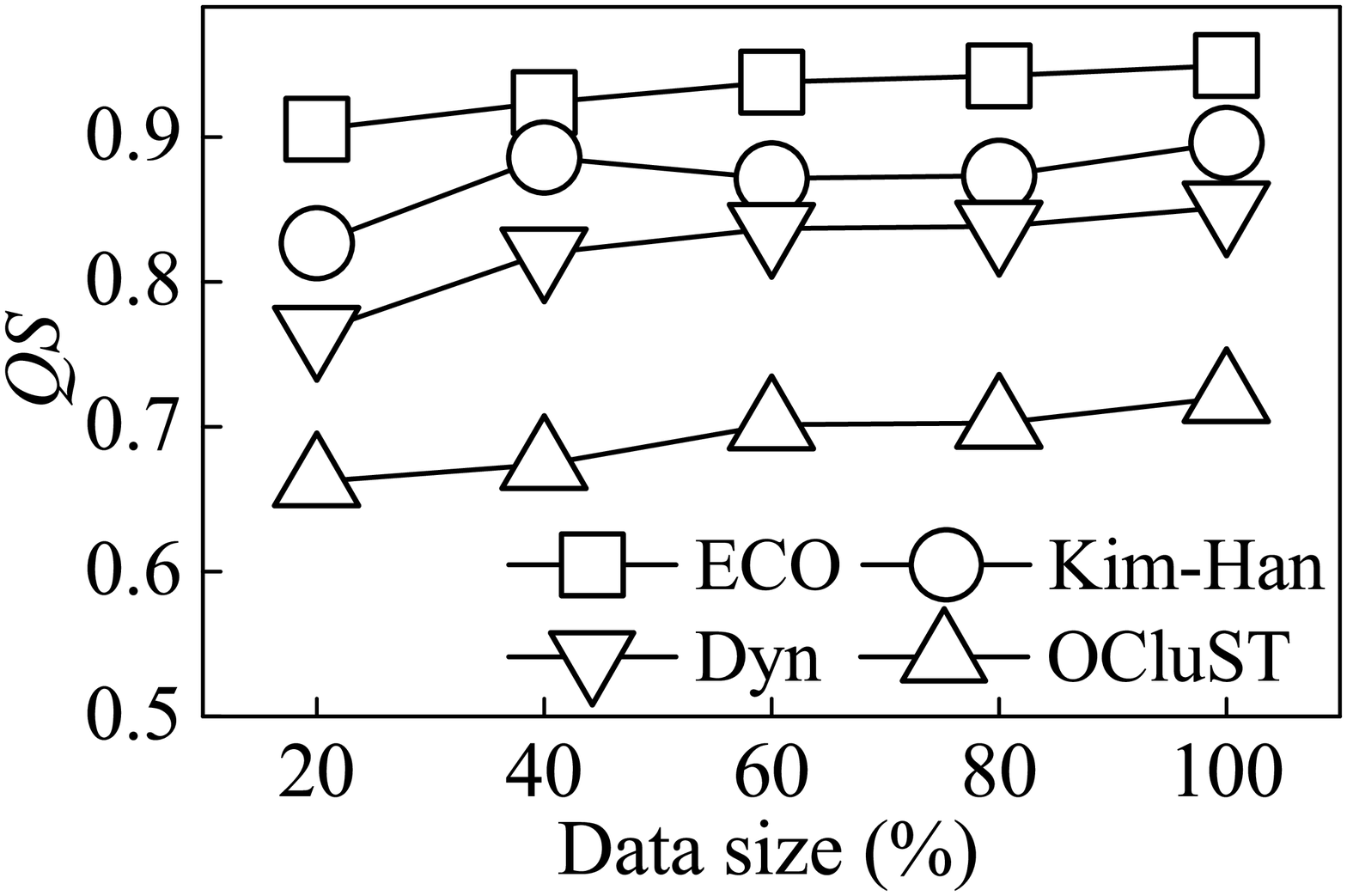}} 
 \subfigure[{HZ dataset}]{
\includegraphics[scale=.17]{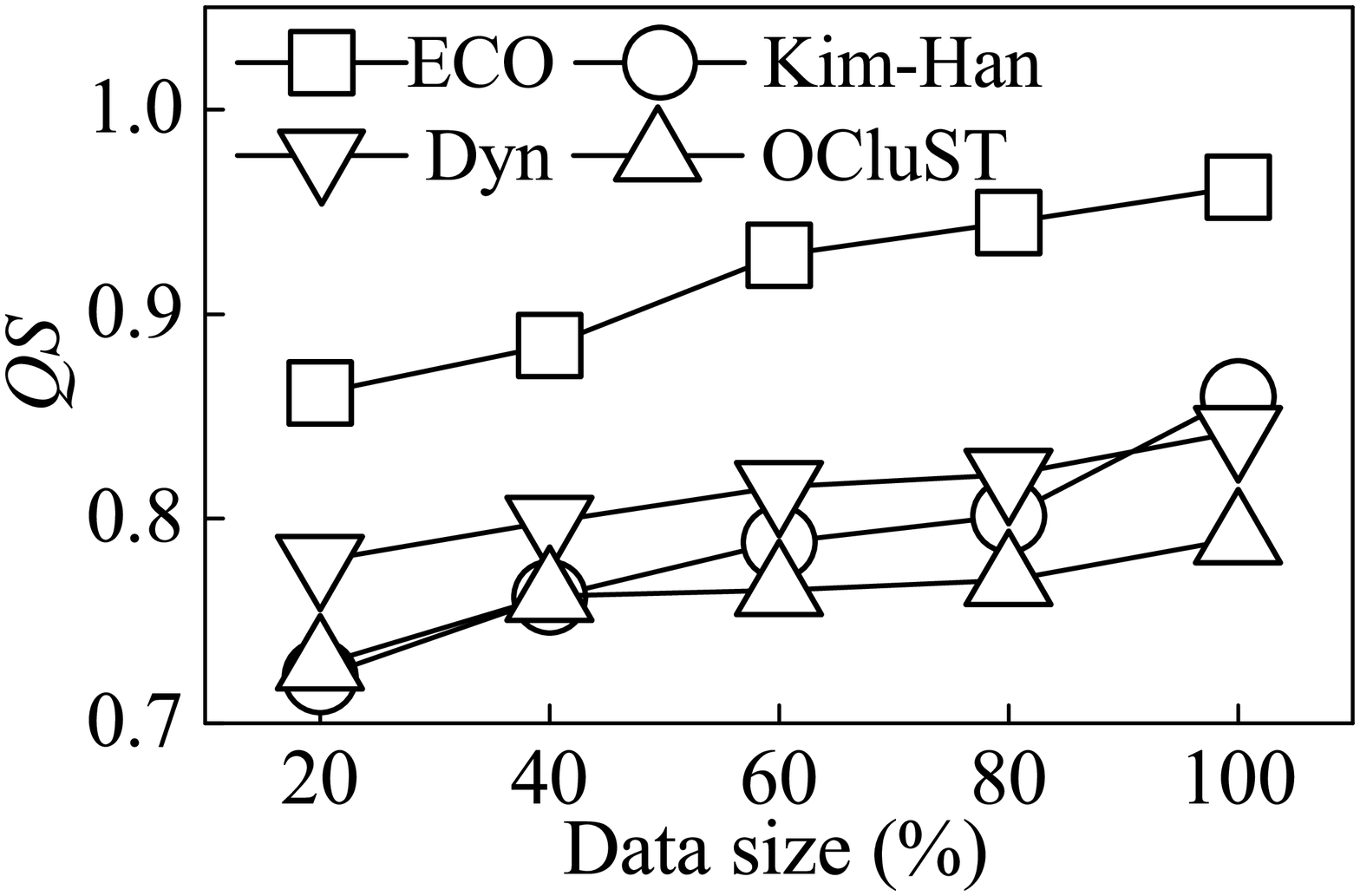}}
\subfigure[{CD dataset}]{      \includegraphics[scale=.17]{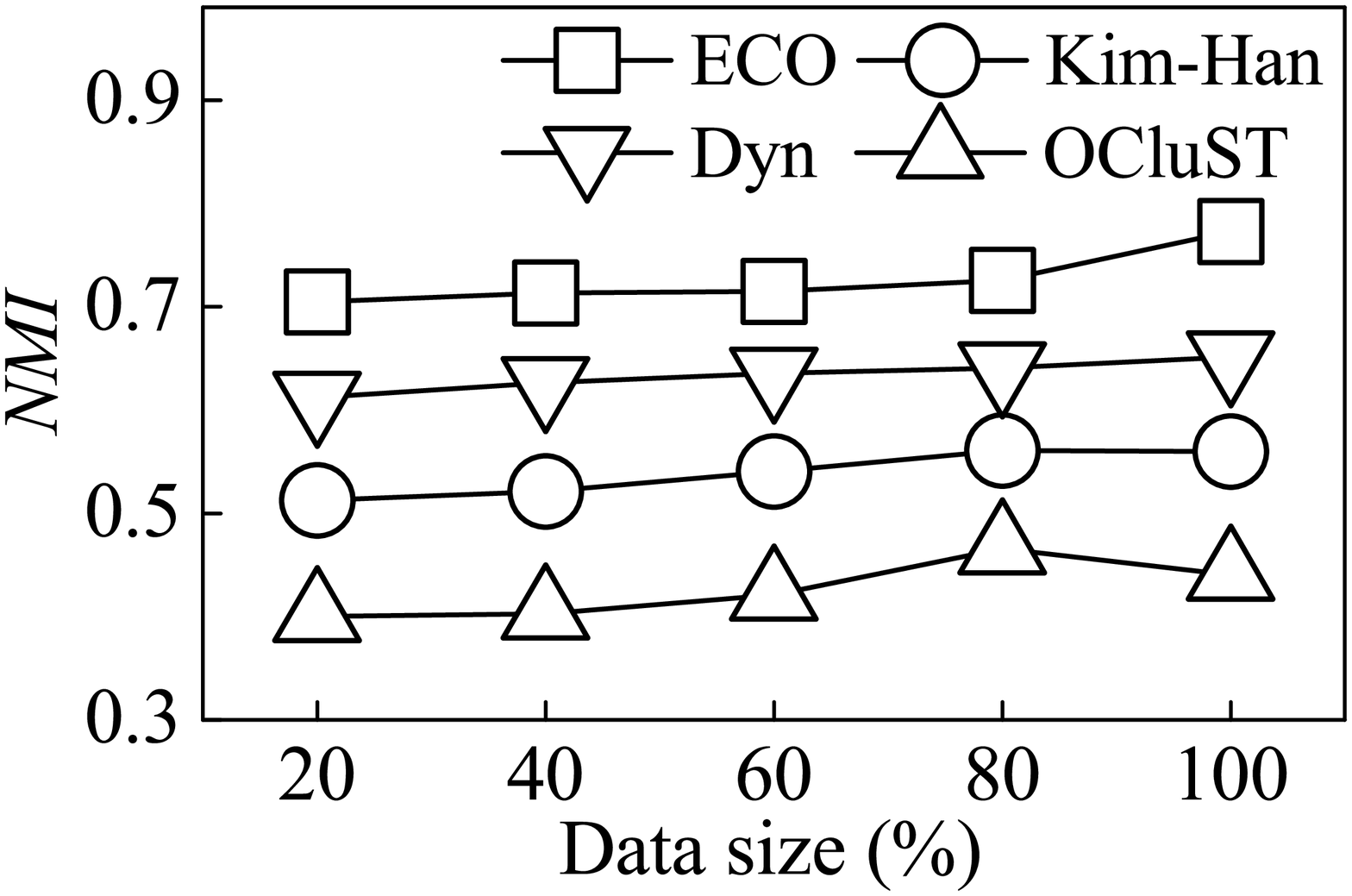}} 
 \subfigure[{HZ dataset}]{
\includegraphics[scale=.17]{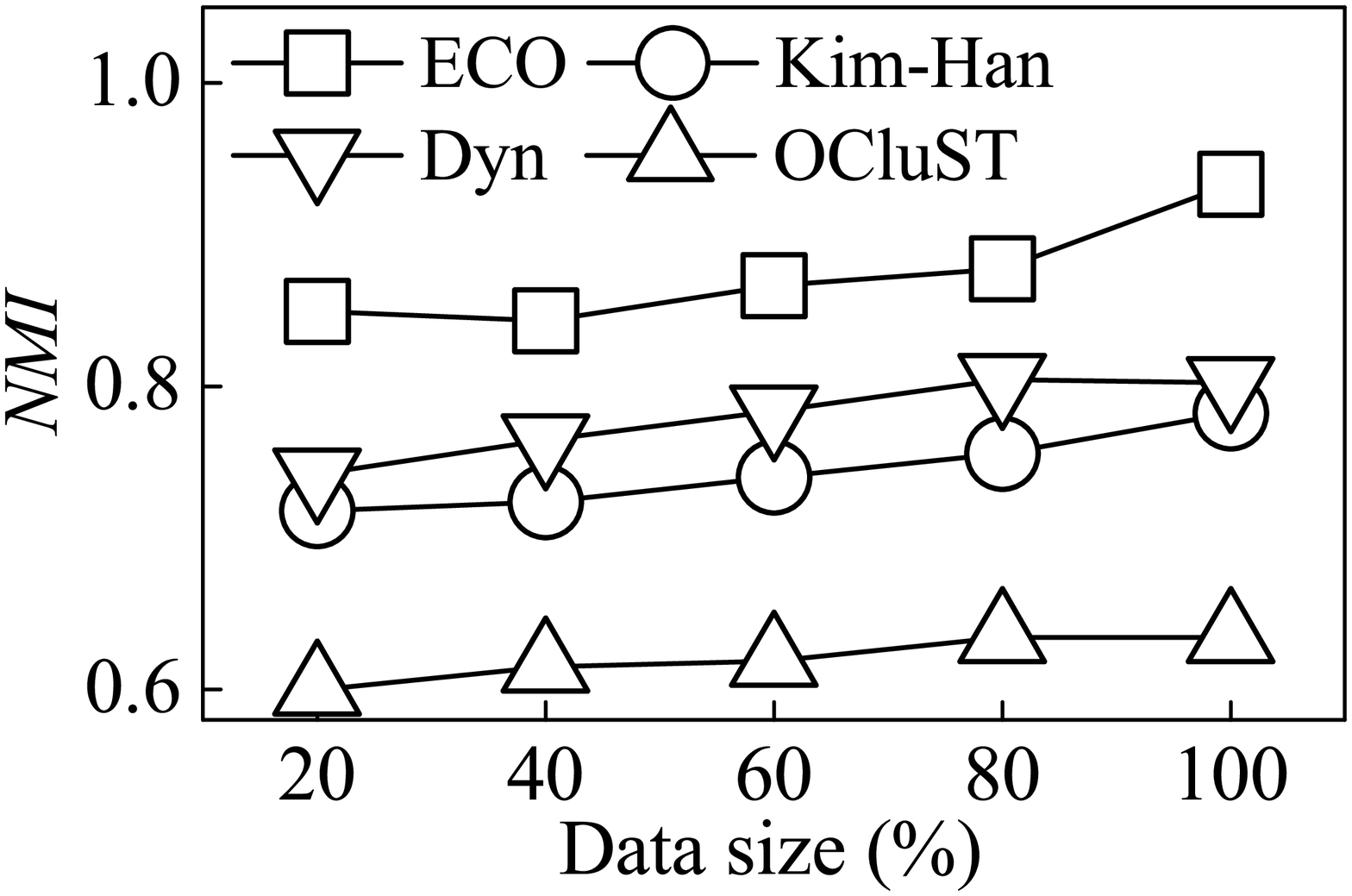}}
   \vspace{-5mm}
\caption{Scalability}\label{fig:scalability}
   \vspace{-6mm}
\end{figure}

\section{Related Work}\label{sec:related_work}
We proceed to review related works on streaming trajectory clustering
and evolutionary clustering.
\subsection{Streaming Trajectory Clustering}
Streaming trajectory clustering finds  representative paths or
common movement trends among objects in real time.
The main target of existing studies of streaming trajectory clustering is to update clusters continuously with high efficiency.
Jensen et al.~\cite{jensen2007continuous} exploit an incrementally maintained clustering feature CF and propose a scheme for continuous clustering of moving objects.
Costa et al.~\cite{costa2014dealing} define a new metric for streaming trajectory clustering and process trajectories by means of non-separable Fourier transforms. Tra-POPTICS~\cite{deng2015scalable}  explores the use of graphics processing units to improve efficiency.
Similar to ECO, proposals in ~\cite{yu2013online,yu2013continuous} use an index structure, TC-tree, to facilitate efficient updates of clusters. Some studies maintain a micro-group structure, that stores compact summaries of trajectories to enable fast and flexible clustering~\cite{li2010incremental, da2016online, mao2018online,riyadh2017cc_trs}. CUTis~\cite{da2016online} continuously merges micro-groups into clusters, while CC\_TRS~\cite{riyadh2017cc_trs}, TCMM~\cite{li2010incremental}, and OCluST~\cite{mao2018online} generate macro groups on top of micro groups when requested by users. In the experimental study, we compare with the state-of-the-art streaming trajectory clustering method~\cite{mao2018online}.
Some studies of real-time co-movement pattern mining also involve
streaming trajectory clustering~\cite{chen2019real, tang2012discovery, li2012effective}. 
Comprehensive surveys of trajectory clustering are available~\cite{bian2018survey, yuan2017review}.

To the best of our knowledge, no existing studies of streaming trajectory clustering  exploit temporal smoothness to improve clustering quality. 

\subsection{Evolutionary Clustering}
Evolutionary clustering has been studied to discover evolving community structures in applications such as social~\cite{kim2009particle} and financial networks~\cite{fenn2009dynamic} and recommender systems~\cite{chen2020collaborative}. Most  studies target $k$-means, agglomerative hierarchical, and spectural clustering~\cite{chakrabarti2006evolutionary,xu2014adaptive,chi2007evolutionary,ma2017evolutionary,ma2019detecting}.
Chakrabarti et al.  ~\cite{chakrabarti2006evolutionary} propose a generic framework for evolutionary clustering.
Chi et al.~\cite{chi2007evolutionary} develop two functions for evaluating historical costs, PCQ (Preserving Cluster Quality) and PCM (Preserving Cluster Membership), to improve the stability of clustering. Xu et al.~\cite{xu2014adaptive} estimate the optimal smoothing parameter $\alpha$ of evolutionary clustering. Recent studies model evolutionary clustering as a multi-objective problem~\cite{folino2013evolutionary,yin2021multi, liu2020detecting,liu2019evolutionary} and use genetic algorithms to solve it, which is too expensive for online scenarios. 
Dyn~\cite{yin2021multi}, the state-of-the-art proposal, features non-redundant random walk based population initialization and an improved particle swarm algorithm  to enhance clustering quality.  

The Kim-Han proposal~\cite{kim2009particle} is the  one that is closest to ECO. It uses neighbor-based smoothing and a cost embedding technique that smooths the similarity between each pair of nodes. However, ECO differs significantly from Kim-Han.
First, the cost functions used are fundamentally different. Kim-Han's cost function is designed  specifically for nodes in dynamic networks and is neither readily applicable to, or suitable for, trajectory data. In contrast, ECO's cost function is shaped according to the characteristics of the movements of trajectories.
Second, Kim-Han smooths the similarity between each pair of neighboring nodes; in contrast, ECO smooths only the locations of a trajectory with abrupt movements, and the smoothing is performed only according to its most smoothly moving neighbor. 
Finally, Kim-Han includes iterative processes that degrade its efficiency, while ECO achieves $O(\lvert \mathcal{O}_k\rvert^2)$ complexity at each time step in the worst case and adopts a grid index to improve efficiency.

\section{Conclusion and Future Work}\label{sec:conclusion}
We propose a new framework for evolutionary clustering of streaming trajectories that targets faster and better clustering.
Following existing studies, we propose so-called snapshot and historical costs for trajectories, and formalize the problem of evolutionary clustering of streaming trajectories, called ECO. Then, we formulate ECO as an optimization problem and prove that it can be solved approximately in linear time, which eliminates the iterative processes employed in previous proposals and improves efficiency significantly.
Further, we propose a minimal group structure and a seed point shifting strategy that facilitate temporal smoothing. We also present the algorithms necessary to enable evolutionary clustering along with a 
set of optimization techniques that aim to enhance performance.
Extensive experiments with two real-life datasets show that ECO outperforms existing state-of-the-art proposals in terms of clustering quality and running time  efficiency.

In future research, it is of interest to deploy ECO on a distributed platform and to exploit more information for smoothing such as road conditions and driver preferences.

\bibliographystyle{ACM-Reference-Format}
\balance
\bibliography{sample}

\end{document}